\theoremstyle{plain}
\newtheorem{theorem}{Theorem}[section]
\newtheorem{lemma}[theorem]{Lemma}
\newtheorem{prop}[theorem]{Proposition}
\newtheorem{cor}[theorem]{Corollary}
\theoremstyle{remark}
\newtheorem{defn}{Definition}
\newtheorem{assump}{Assumption}
\newcommand{\cf}{\mathrm{CF}}
\newcommand{\dsp}{\mathrm{SP}}
\newcommand{\popsize}{\eta} 
\newcommand{\Ef}{{\mathrm{I}\!\mathrm{E}}}
\newcommand{\E}[1]{\Ef\!\left(#1\right)}
\newcommand{\Pf}{{\mathrm{I}\!\mathrm{P}}}
\renewcommand{\P}[1]{\Pf\!\left\{#1\right\}}
\DeclarePairedDelimiter\abs{\lvert}{\rvert}
\let\oldabs\abs
\def\abs{\@ifstar{\oldabs}{\oldabs*}}
\definecolor{olive}{rgb}{0.42, 0.56, 0.14}
\newcommand{\com}[1]{} 
\definecolor{changecolor}{RGB}{192,64,0}
\begin{document}

\begin{frontmatter}
\title{Identifying circular orders for blobs in phylogenetic networks}

\author[1]{John A. Rhodes}\ead{jarhodes2@alaska.edu}
\author[2]{Hector Ba\~{n}os\corref{cor1}}\ead{hector.banoscervantes@csusb.edu}
\author[3]{Jingcheng Xu}\ead{xjc@stat.wisc.edu}
\author[3,4]{C\'ecile An\'e}\ead{cecile.ane@wisc.edu}

\cortext[cor1]{Corresponding author}
\affiliation[1]{
  organization={Department of Mathematics and Statistics, University of Alaska - Fairbanks},
  postcode={99775-6660}, state={AK}, country={USA}}
\affiliation[2]{
   organization={Department of Mathematics, California State University, San Bernardino},
   postcode={92407}, state={CA}, country={USA}}
\affiliation[3]{
   organization={Department of Statistics, University of Wisconsin - Madison},
   postcode={53706}, state={WI}, country={USA}}
\affiliation[4]{
  organization={Department of Botany, University of Wisconsin - Madison},
  postcode={53706}, state={WI}, country={USA}}

\author{}

\begin{abstract}
Interest in the inference of evolutionary networks relating species or
populations has grown with the increasing recognition of the importance of 
hybridization, gene flow and admixture, and the 
availability of large-scale genomic data.
However, what network features may be validly inferred from various data types
under different models remains poorly understood.
Previous work has largely focused on level-1 networks, in which reticulation
events are well separated, 
and on a general network's tree of blobs,
the tree obtained by contracting every blob to a node.
An open question
is the identifiability of the topology of a blob
of unknown level.
We consider the identifiability of the circular order
in which subnetworks attach to a blob, first proving
that this order is well-defined for outer-labeled planar blobs.
For this class of blobs,
we show that the circular order information from 4-taxon subnetworks
identifies the full circular order of the blob.
Similarly, the circular order from 3-taxon rooted subnetworks
identifies the full circular order of a rooted blob.
We then show that subnetwork circular information is identifiable
from certain data types and evolutionary models.
This provides a general positive result for high-level networks,
on the identifiability of the ordering in which taxon blocks attach to blobs
in outer-labeled planar networks.
Finally, we give examples of blobs with different internal structures which cannot
be distinguished under many models and data types.
\end{abstract}

\begin{keyword}
semidirected network \sep admixture graph \sep outer-labeled planar \sep
quartet \sep coalescent \sep hybridization

\MSC[2020] 05C90 
\sep 60J95 
\sep 62B99 
\sep 92D15 
\end{keyword}
\end{frontmatter}

\section{Introduction}

The statistical inference of evolutionary relationships between taxa
is a substantial challenge for phylogenetic methods when both
incomplete lineage sorting and
hybridization (or events such as introgression, admixture or gene flow)
have occurred.
Many basic questions remain about 
what features of a network it is even possible to infer  in a statistically
consistent manner, under the Network Multispecies Coalescent (NMSC)
as well as simpler stochastic models.
The theoretical precursor of this question is that of \emph{parameter identifiability}:
from the distribution of some data type 
under some generative model, what parameters, or functions of parameters can be determined?

While progress has been made in recent years, much of this has been in the setting of level-1 networks,
where the cycles in species networks caused by hybridizations are disjoint
\cite{Solis-Lemus2016,Banos2019,GrossEtAl2021,2023XuAne_identifiability,2024ABGarrotelopesR}.
Equivalently, the biconnected components, or \emph{blobs},
have only 1 reticulation in level-1 networks.
See also \cite{Ardiyansyah2021_level2} for some partial results in the level-2 case.

{Beyond the level-1 case, little is known about network identifiability
from biological data, although much has been accomplished in a more theoretical framework.
For instance, it is known that certain types of networks can be reconstructed
from displayed trees, 
distance information, 
trinets, or other phylogenetic structures
\cite{willson2006,Willson2011,vanIerselMoulton2014,vanIersalMoulton2018,BordewichEtAl2018,SempleToft2021}.
However, how, or even whether, this source information can be robustly
inferred from biological data has generally been left unaddressed.
From a biological perspective, different loci can evolve at different
rates and also be affected by incomplete lineage sorting (typically modeled with a
coalescent process) resulting in gene trees not displayed in the species network.
In such a setting, how one can extract a robust
network signal must be carefully considered.}

\medskip

In this work, {with inference from biological data in mind},
we push beyond level-1 to consider the class of \emph{outer-labeled planar} networks, in which the network can be
drawn in the plane with no edge crossings, with the taxa lying on the ``outside'' of the drawing.
While level-1 networks have this property, networks of higher level may also.
{For blobs in these networks, we first prove the graph-theoretical fact that there is a unique
circular order in which taxon blocks must attach around the blob when
drawn in the plane.
This order might be considered as the ``outer'' structure of a blob.

We next show that this order for a blob is identifiable from certain information
on the subnetworks induced by subsets of 4 taxa. Specifically, 
suppose that
for each choice of 4 taxa from distinct blocks around the blob we know that the 
induced subnetwork 
has either a blob 
with a known circular order of the 4 taxa,
or instead has an edge that, when removed, disconnects the 4 taxa into
two known groups of 2 taxa.
Then this information uniquely determines the blob's circular order.

Combined with recent results on inferring the tree of blobs of a network
from similar  information (\cite{2023XuAne_identifiability,ABMR2022}, and a new result in \Cref{ssec:logdet} of this work), this determines information
about  planar embeddings of the full network. We do not obtain results on the
``inner'' structure of the blobs, but since we require only that blobs be outer-labelled planar, our result is quite  general.}

To connect this {mathematical} identifiability result to data, we consider several data types
that have been studied or used for practical network inference.
These are average genetic distances \cite{2023XuAne_identifiability},
quartet concordance factors \cite{Solis-Lemus2016,Banos2019,ABR2019},
and genomic logDet distances \cite{ABR2022-logdetNet} for ultrametric networks.
For each, we  show the 4-taxon information needed to apply our result can be
obtained under several statistical models of gene tree formation
and sequence evolution.
Gene tree models range from the  simplest ``displayed tree'' model,
to a coalescent model on displayed trees, to the full coalescent model on a
network in which lineages behave independently.
Sequence evolution models are those commonly adopted in phylogenetics,
although some analyses make further assumptions such as constant
mutation rates, or restrictions on edge lengths in blobs.
{Since our analysis proceeds by analyzing individual blobs in the network, they apply to all blobs
that are outer-labeled planar, even if other blobs in the network are not.
We believe these are the first detailed general
results on identifiability from biological data
of a network's topology beyond the level-1 case.}

Finally, for several of these data types we construct new examples of
networks whose structures are not fully identifiable.
These include blobs which may be large (involving more than 4 taxon blocks)
and of high level (involving more than one hybridization).
While the existence of such examples is a negative result, the examples are an important
contribution to understanding the limits of which network topological features
may be identifiable, which is vital to advancing practical inference.

\medskip

{
Our work leaves many questions unaddressed. While we consider very general networks,
delineating large classes of networks for which stronger identifiability results
apply remains important, especially considering robust inference from biological data.
When full identifiability is lacking, it is valuable to extract identifiable
characteristics of the network (such as a blob's circular order),
as they may provide biological insight.
A deeper study of the classes of indistinguishable networks
for different data types is also needed, so that alternative
networks leading to the same data distribution can be better understood.
Finally, although we consider three biologically-derived data types in this work,
others might be used, either on their own or in conjunction with
those considered here, to estimate the network structure.}

\section{Phylogenetic networks and blobs}

\subsection{Networks}\label{ssec:networks}
We adopt the definitions concerning networks from \cite{2024Ane-anomalies},
but introduce terminology informally here.

A \emph{rooted topological phylogenetic network} $N^+$ on a set of
taxa $X$ is the basic object of interest.
Such a network is a rooted directed acyclic graph (DAG).
We require the leaves to be of degree 1.
An edge incident to a leaf is called \emph{pendent}. 
The edges of $N^+$ are partitioned into \emph{hybrid edges} which
share child nodes with at least one other \emph{partner} hybrid edge,
and \emph{tree edges} which do not share child nodes.
Its nodes are either the \emph{root}, \emph{hybrid nodes} which are children of
hybrid edges, or \emph{tree nodes}.
A network is \emph{binary} if the root has degree 2 and all other internal nodes
have degree 3.
The leaves of $N^+$ are bijectively labeled by elements of $X$.

A network $N^+$  may be given a metric structure by assigning
to each edge $e$ a pair of parameters
$(\ell(e),\gamma(e))$, where $\ell(e)\ge 0$ is an \emph{edge length}
with $\ell(e)> 0$ for tree edges, and
$\gamma(e)\in(0,1]$ is a \emph{hybridization} or \emph{inheritance parameter}
which sums to 1 over all partner edges with a common child node.
For any tree edge $e$ this means $\gamma(e)=1$.
Although it is sometimes desirable to allow pendent edges to have 0 length, e.g. to include
multiple individuals from the same population, or extinct taxa ancestral to extant taxa, for simplicity 
we rule that out. Nonetheless our results generalize to that situation with straightforward adjustments.

We say a node or edge is \emph{ancestral to} or \emph{above}  another in
$N^+$ if there is a, possibly empty, directed path from the first
to the second.
{An \emph{up-down path} is an undirected
path joining nodes $(u_1,u_2,\dots,u_n)$ 
such that for some $i$, $u_i\dots u_2u_1$ and $u_i\dots u_{n-1}u_n$
are directed paths in $N^+$.}

The \emph{least stable ancestor} (LSA) of the taxa on a network
$N^+$ is the lowest node through which all directed paths from the root
to any taxon must pass. Under standard models, most sources of data,
as well as the quartet information we assume we can access in
Section \ref{sec:genomicdata}, give no information
about the structure of $N^+$ above the LSA.
We therefore consider the LSA network induced
by $N^+$, as
obtained by deleting all edges and nodes strictly ancestral to the LSA,
and rerooting at the LSA.
We make the following assumption for the remainder of this work.

\begin{assump}\label{assump:LSA}
The network $N^+$ has no edges above its LSA,
that is,
its LSA is its root.
\end{assump}

For a network $N^+$ on $X$, a network
$N^+_1$ on $X_1\subseteq X$ is \emph{displayed} in $N^+$ if
it can be embedded in $N^+$ with a label-preserving map such that
edges of $N^+_1$ map to edge-disjoint directed paths in $N^+$
(see \cite{2024Ane-anomalies} for more details).
In particular, a tree $T^+$ on $X$ is displayed in $N^+$ if it can be
obtained by deleting all but one parent edge from each hybrid
node in $N^+$,
along with remaining edges no longer on up-down paths between taxa,
and additional edges and nodes so $T^+$ satisfies \Cref{assump:LSA}.
We generally do not suppress degree-2 nodes so each edge in $T^+$ maps to
a single edge in $N^+$.

For a tree $T$ on $\{a,b,c,d\}$, $T$ \emph{displays the quartet}
$ab|cd$ if $T$ has an edge that, when removed,
disconnects $T$ and partitions the leaves as $ab|cd$.
A network $N^+$ displays $ab|cd$ if there exists a tree $T$
displayed in $N^+$, that displays $ab|cd$.

The \emph{semidirected network} $N^-$ associated to $N^+$ is
the unrooted network obtained from the LSA network
induced by $N^+$ by undirecting all tree
edges while retaining directions of hybrid edges, and suppressing the LSA
if it is of degree 2. 
Edges and nodes in $N^-$ inherit classifications as hybrid or tree
from those of $N^+$. When we refer to a semidirected network,
we always assume it is obtained from a rooted phylogenetic network in this way. In particular,
it is always possible to root a semidirected network, directing all non-hybrid
edges away from the root consistently with the already directed hybrid edges.
{Up-down paths in $N^+$ correspond exactly to undirected paths
in $N^-$ that have no v-structure, that is, no segment $u_{i-1}u_iu_{i+1}$
in which $(u_{i-1}u_i)$ and $(u_{i+1}u_i)$ are directed hybrid edges
\citep{2023XuAne_identifiability}.
Thus we may use the terminology ``up-down path'' in the semidirected setting.}

\medskip

The results obtained in later sections apply to both rooted and
semidirected networks so we refer to a network as $N$ without superscript
to reflect this generality, unless stated otherwise with
more specific $N^+$ or $N^-$ notation.
However, in proving results that apply to semidirected networks,
it is often more efficient to work
in the context of rooted networks.
One of the key concepts used in our arguments, the funnel of Definition~\ref{def:funnel},
is inherently a rooted notion, and can differ for different rootings of a
semidirected network. Others, such as the lowest nodes of blobs of
Definition~\ref{def:lowest}, can be defined in semidirected networks
(i.e, they are invariant to a choice of root), but are simpler to describe in a rooted setting.
{Since, by definition, any semidirected network arises from a rooted one,
we may choose to only consider rooted networks when convenient.}

\subsection{Blobs}

A \emph{cut node} in a graph is a node whose deletion
disconnects the graph.
A \emph{cut edge} in a graph is an edge whose deletion disconnects the graph.
It is common to define blobs in networks as maximal connected subgraphs with
no cut edges (i.e., as maximal \emph{2-edge-connected} subgraphs).
For this work we adopt a stricter definition.
A \emph{blob} is a maximal subgraph with no cut nodes, that is, a
maximal \emph{biconnected} subgraph.
A \emph{trivial blob} is a blob with at most one edge,
so that cut edges are considered trivial blobs.

\begin{figure}
\centering
\includegraphics[scale=1]{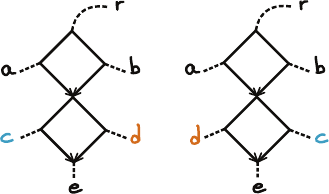}
  \caption{
  Two planar embeddings of a network. Under the definition of a blob
  adopted in this work, as a biconnected component, the central structure
  is composed of two blobs, whose circular orders of articulation nodes are the
  same in both embeddings.
  If blobs were defined as 2-edge-connected components, there would be
  only one blob but the embeddings would give different circular orders.
}\label{fig:2edge}
\end{figure}

The differences in these definitions is illustrated in \Cref{fig:2edge}.
By our definition, using biconnectedness,
the left figure shows two blobs of four edges each.
Under  the 2-edge-connected definition, all eight edges form a single blob.
Our interest is in possible orders of taxa
around planar embeddings of the graph, and the right figure shows an alternative
planar embedding of the same graph.
Viewed as two blobs, the orders of taxa around each blob are unchanged,
but viewed as a single blob there are two distinct orders.
Our results are simpler to state by adopting the biconnected definition.
In general, biconnected components are subgraphs of
2-edge-connected components, so we work with a finer decomposition of a network.
However, for binary networks the two notions of non-trivial blobs coincide,
although trivial blobs differ (cut edges versus cut nodes).

For a binary network $N$, its \emph{tree of blobs} $T$ is
obtained by contracting each non-trivial blob into a single node
\cite{Gusfield2007,2023XuAne_identifiability,ABMR2022}.
Since $N$ is binary, distinct non-trivial blobs do not share
any node and correspond to distinct nodes in $T$.
The edges of $T$ correspond to cut edges of $N$, so $T$ is roooted if $N$
is rooted, and $T$ is unrooted if $N$ is semidirected.
The \emph{reduced} tree-of-blobs is obtained by suppressing
any degree-2 node in $T$, other than its root if $T$ is rooted.

Since a phylogenetic network has no directed cycles,
we use the term \emph{cycle} to refer to a subnetwork which forms a cycle when
all edges are undirected. Any cycle either is or lies in a non-trivial blob.

An \emph{articulation node} of a blob $B$ in $N$ is a node
in $B$ that is a leaf or incident to an edge of $N$ that is not in $B$.
A blob with $m$ distinct articulation nodes is called an \emph{$m$-blob}.
Assumption \ref{assump:LSA} implies that if the LSA of  $N$
is in a non-trivial blob, then it is not an articulation node of that blob,
even though it would have been if $N$ had additional edges above the LSA.

\section{Structure of general blobs}

In this section, we define less standard terminology, and establish some facts
about arbitrary blobs which will be useful in later sections.

A subset of nodes that will play a special role
in our arguments is the following.
\begin{defn} \label{def:lowest}
A \emph{lowest node} of a blob is a node in the blob which has no descendent nodes in the blob.
\end{defn}

The following result characterizes a non-trivial blob's lowest nodes, and shows that
they are well-defined on a semidirected network because
they are independent of the root location.
\begin{lemma}\label{lem:lowest}
Let $B$ be a non-trivial blob in a rooted phylogenetic network.
Then the lowest nodes of $B$ are precisely the hybrid articulation nodes
of $B$ which have no descendent edges in $B$.
\end{lemma}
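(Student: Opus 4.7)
The plan is to prove the two inclusions of the equality separately. The forward direction, that every lowest node is a hybrid articulation node with no descendent edges in $B$, is relatively direct; the reverse uses the biconnectedness of $B$ in an essential way, via an ear-decomposition-style argument.

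For the forward direction, suppose $v$ is a lowest node of $B$. Since a blob (as a maximal biconnected subgraph) contains all edges of $N$ joining two of its vertices, any edge of $B$ whose tail is $v$ or a strict descendent of $v$ would have both endpoints in $B$, forcing a strict descendent of $v$ into $B$ and contradicting the defining property of lowest. Hence $v$ has no descendent edges in $B$, and in particular no out-edge of $v$ lies in $B$. Since $B$ is non-trivial and biconnected, $v$ has degree at least two in $B$, so all its incident edges in $B$ are in-edges, giving $v$ at least two parent edges in $B$. Because tree edges never share child nodes, these parent edges must all be hybrid, so $v$ is a hybrid node. Finally $v$ is an articulation node: having at least two parents in $N$, it is not a leaf, and as a non-leaf in the rooted DAG it has at least one out-edge in $N$, which by the previous step does not lie in $B$, witnessing the articulation condition.

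For the reverse direction, assume $v$ is a hybrid articulation node of $B$ with no descendent edges in $B$, and suppose for contradiction that some strict descendent $u$ of $v$ lies in $B$. Choose any directed path $P \colon v = v_0 \to v_1 \to \cdots \to v_k = u$ in $N$. The key step is that $B \cup P$ is a biconnected subgraph of $N$: since $B$ is biconnected and $P$ joins two of its vertices, no vertex $y$ of $B \cup P$ can be a cut vertex — either $B \setminus \{y\}$ is still connected (by biconnectedness of $B$) and meets what remains of $P$ via $v$ or $u$, or removing $y$ only splits $P$ into sub-paths that each remain attached to $B$. By maximality of $B$ among biconnected subgraphs, we must have $P \subseteq B$. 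In particular the first edge $(v,v_1)$ of $P$ lies in $B$; but this is an out-edge of $v$, hence a descendent edge of $v$ in $B$ — contradicting the hypothesis.

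The main obstacle is the biconnectedness claim in the reverse direction, in particular the case in which some internal nodes of $P$ already belong to $B$ (so $P$ is not a clean ear attached only at $v$ and $u$). The case analysis over where a putative cut vertex $y$ can sit — inside $B \setminus P$, internal to $P \setminus B$, in $B \cap P$, or at the endpoints $y = v$ or $y = u$ — is routine but must be spelled out carefully. Once that verification is in place, maximality of $B$ delivers the contradiction immediately, and both directions of the lemma follow.
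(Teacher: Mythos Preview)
Your proof is correct and follows essentially the same approach as the paper. The forward direction matches the paper's argument closely (no descendent edges, hence articulation; at least two in-edges in $B$, hence hybrid), though you phrase the hybridity step via the degree-$\geq 2$ observation while the paper phrases it via the parent becoming a cut node.

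For the converse, the paper dispatches it in a single line (``if $v\in B$ has no descendent edges in $B$, then $v$ is a lowest node''), tacitly using the standard block-cut-tree fact that any simple path between two vertices of a biconnected component stays in that component. You unpack this explicitly via the ear-style maximality argument, which is correct and perhaps more self-contained, but not a genuinely different route. One small remark: the opening clause in your forward direction (``a blob \ldots\ contains all edges of $N$ joining two of its vertices'') is true but not actually needed there---the forward argument only uses that an edge of $B$ has both endpoints in $B$, which is immediate; that maximality fact is really what drives your converse.
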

\begin{proof} 
Let $v$ be a lowest node of $B$. Since the network's leaves are the
only nodes with no descendent edges, and they are in trivial blobs,
$v$ must have one or more descendent edges.
Since $v$ is lowest, these descendent edges are not in $B$, and hence
$v$ is an articulation node of $B$.
If $v$ were not hybrid, then it would have exactly one parental edge,
which is in $B$.
But then the parent of $v$ would be a cut node of $B$,
contradicting its biconnectedness.

Conversely, if $v\in B$ has no descendent edges in $B$,
then $v$ is a lowest node of $B$.
\end{proof}

As an immediate consequence,
if a network is binary then the lowest nodes of a non-trivial blob {$B$} are
precisely its hybrid articulation nodes: For a binary network,
any hybrid articulation node of {$B$} has a single descendent edge,
which cannot be in {$B$}.

\medskip
 
We also use the notion of LSAs for blobs $B$ of
a rooted network $N^+$, as the lowest node in
$N^+$ through which all paths from the root to any node in $B$ must pass. While 
this LSA need not be in $B$, \emph{a priori},  the following shows it is.

\begin{lemma}
The LSA of a blob $B$ in a rooted phylogenetic network lies in $B$.
\end{lemma}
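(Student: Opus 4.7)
The plan is to argue by contradiction, focusing on the non-trivial case (for a trivial blob, being a single node or a single cut edge, the $\LSA$ is trivially the top node, which lies in the blob). Assume $B$ is non-trivial and that $v=\LSA(B)$ is not in $B$. Since $v$ lies on every directed path from the root to every node of $B$, it is an ancestor of every node of $B$ and lies strictly above $B$.

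Call a node $u\in B$ an \emph{entry point} if there is a directed path from $v$ to $u$ whose interior is disjoint from $B$. I would first show there must be at least two distinct entry points. If there were only one, say $u$, then every directed path from $v$ into $B$ would first enter $B$ at $u$, so $u$ itself would lie on every directed path from the root to every node of $B$. Since $u\in B$ but $v\notin B$, $u$ would be a candidate ancestor strictly below $v$, contradicting the minimality of $v$.

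Next I would pick two entry points $u_1, u_2$ with paths $P_1, P_2$ from $v$ whose interiors avoid $B$, and let $w$ be the last node shared by $P_1$ and $P_2$. Using the DAG structure, the tails $P_1', P_2'$ past $w$ are internally node-disjoint: if they shared a node $x$, then $x$ would be a descendant of $w$ lying on both $P_1$ and $P_2$, contradicting the maximality of $w$ among shared nodes in the directed-ancestry order. Since $B$ is biconnected, in particular connected, I pick an undirected path $Q$ inside $B$ joining $u_1$ and $u_2$. Concatenating $P_1'$, $Q$, and the reverse of $P_2'$ yields a simple undirected cycle containing edges of $Q$ (inside $B$) together with edges of $P_1'$ and $P_2'$ (each of which has at least one endpoint outside $B$ and so lies outside $B$).

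This produces the contradiction: two edges lying on a common simple cycle must belong to the same biconnected component, so no cycle can straddle the maximal biconnected subgraph $B$ and its exterior. The main obstacle, and the only delicate point, is verifying that the constructed cycle is genuinely simple; this rests on the careful choice of $w$ as the last common node of $P_1$ and $P_2$ together with the acyclicity of the directed graph, which together rule out any re-merging of $P_1'$ and $P_2'$.
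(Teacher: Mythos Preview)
Your argument is correct and rests on the same idea as the paper's proof: a blob can have at most one ``entry'' node from outside, since two such nodes together with a path inside $B$ would yield a cycle contradicting the maximality of $B$ as a biconnected component. The paper presents this directly (defining an entry node of $B$ as a node of $B$ with a parent edge outside $B$, showing there is at most one, and identifying it---or the root---as the LSA), whereas you reach the same contradiction by first assuming the LSA lies outside $B$; the underlying combinatorics is identical.
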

\begin{proof}
A node $v$ in a blob $B$ is said to be an \emph{entry node} of $B$
if there exists an edge not in $B$ with child $v$. $B$ has at most one entry node, since 
if two exist, picking directed paths
from the network root through those nodes would allow us to 
contradict that $B$ is a maximal biconnected subgraph.

If $B$ has no entry nodes then the network's root is in
$B$ and is $B$'s LSA.
If $B$ has only one entry node, it is
the LSA of the blob. 
\end{proof}

In a binary network, if a non-trivial blob's LSA and the network's root
(assumed to be the network's LSA) are the same,
that node is not incident to an edge outside the blob,
and so is not an articulation node of the blob.
If the LSAs of the blob and the network differ, however, the LSA of the blob is an
articulation node. For  a binary network, such an LSA must be a tree node.

\begin{defn} A \emph{trek cycle} (or \emph{up-down cycle}) in a DAG
is a pair $(L,R)$ of directed paths from a node $w$
(the \emph{source}) to a node $v$ (the \emph{sink}),
with no nodes other than $w$ and $v$ in common.
\end{defn}
A trek cycle  $(L,R)$ forms a cycle in the usual sense by concatenating the reversal of $L$ with $R$.
In a phylogenetic network the sink node in a trek cycle must be hybrid.

 
The following, Lemma~10 of \cite{2024Ane-anomalies},
identifies useful substructures in blobs to simplify later arguments.

\begin{lemma} \label{lem:trekcycle}
For a non-trivial blob $B$ in a DAG,
every edge or node in $B$ is contained in a trek cycle within $B$.
Thus $B$ is a union of (non-disjoint) trek cycles.
\end{lemma}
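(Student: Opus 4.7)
The plan is to prove the edge version of the statement and deduce the node version from it; any node of the non-trivial blob $B$ is incident to at least one edge of $B$, and a trek cycle containing that edge also contains the node. From this the conclusion that $B$ is a union of trek cycles follows by taking the union over edges.

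Fix $e=(u,v)\in B$. Biconnectedness of $B$ with $|V(B)|\ge 3$ (forced by non-triviality) yields $2$-edge-connectedness, so $B-e$ is connected; concatenating any undirected path from $u$ to $v$ in $B-e$ with $e$ gives an undirected simple cycle $C\subseteq B$ through $e$. In a DAG, any undirected cycle has a well-defined alternating structure of peaks (nodes whose two cycle-incident edges both go out) and valleys (both come in), with equal counts; a cycle with exactly one peak and one valley is precisely a trek cycle. So the task reduces to producing an undirected simple cycle through $e$ in $B$ having only one peak.

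I would carry out the reduction by iteratively decreasing the peak count of $C$. The key additional ingredient is that the preceding lemma, together with the maximality of $B$ as a biconnected subgraph, implies every node of $B$ is reachable from the blob's LSA $w_B\in B$ by a directed path lying entirely inside $B$; indeed, attaching an external ``ear'' to $B$ at two distinct nodes would extend $B$ to a strictly larger biconnected subgraph, contradicting maximality, and likewise any ambient edge between two nodes of $B$ must itself lie in $B$. Consequently, any two nodes of $B$ admit a common ancestor inside $B$ connected to both by directed paths staying inside $B$. Now if $C$ has $k\ge 2$ peaks $p_1,\ldots,p_k$ alternating with valleys $q_1,\ldots,q_k$ around $C$, the descending directed edge $e$ lies on one sub-arc $p_i\to\cdots\to q_i$; I pick a neighbouring peak pair $(p_{i-1},p_i)$ (with valley $q_{i-1}$ between them) so the sub-arc $p_{i-1}\to\cdots\to q_{i-1}\leftarrow\cdots\leftarrow p_i$ does \emph{not} contain $e$. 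A Menger-style argument inside $B$ using the common-ancestor structure gives two internally vertex-disjoint directed paths in $B$ from some common ancestor $w^\star$ of $p_{i-1},p_i$ down to them, chosen also to avoid the retained portion of $C$ (which contains $e$). Splicing these two directed paths into $C$ in place of the removed sub-arc yields a new undirected simple cycle $C'\subseteq B$ still through $e$ and with $k-1$ peaks. Iterating finitely often produces a trek cycle through $e$ in $B$.

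The main obstacle is the splice at each reduction step: one must be able to choose the two new directed paths from $w^\star$ to be internally vertex-disjoint from each other \emph{and} from the retained portion of $C$. Biconnectedness of $B$ together with the common-ancestor structure makes this possible, since any obstruction would manifest as either a small vertex cut inside $B$ (violating $2$-connectedness) or an external ear attachable to $B$ (contradicting maximality of $B$ as a biconnected subgraph). Formalising this Menger-type selection is the technical heart of the proof.
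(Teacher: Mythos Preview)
The paper does not supply its own proof of this lemma; it is quoted as Lemma~10 of \cite{2024Ane-anomalies}, so there is no in-paper argument to compare against directly.

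Your outline has the right overall shape---pass to an edge, find an undirected cycle through it, then iteratively reduce the peak count---but the reduction step is not established. You assert that from a common ancestor $w^\star$ of two adjacent peaks $p_{i-1},p_i$ one can choose directed paths down to them that are internally disjoint both from each other \emph{and} from the retained portion of $C$. The first disjointness is easy: take directed paths in $B$ from the blob's unique source to each peak and let $w^\star$ be their last common vertex. But the second disjointness does not follow from biconnectedness in any direct way, and your justification (``any obstruction would manifest as a small vertex cut or an external ear'') is not an argument. A directed path from $w^\star$ to $p_i$ may well pass through interior vertices of the retained arcs of $C$; splicing then fails to produce a simple cycle, and the obvious shortcut at the first point of intersection can create new peaks or valleys, so the peak count need not drop. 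You yourself flag this as ``the technical heart of the proof,'' but as written it is an assertion, not a proof.

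A workable repair is to argue extremally: among all undirected cycles in $B$ through $e$, choose one minimising the number of valleys, and show that if there are at least two valleys one can reroute around a valley not incident to $e$ and strictly decrease the count, handling intersections with $C$ by a careful case analysis showing the valley count cannot increase. Carrying this out requires genuine work on how the rerouting paths meet $C$; the current proposal stops short of it.
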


Although the proof of the following is trivial, we state it for use in later arguments.
\begin{lemma}\label{lem:abovehybrid}
Every node in a blob $B$ in a rooted phylogenetic network
is ancestral to at least one lowest node of $B$.
\end{lemma}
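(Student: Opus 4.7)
The plan is to prove the lemma directly from the DAG structure of $N^+$ and \Cref{def:lowest}, bypassing any need for trek cycles. Given a node $u \in B$, I will define $S$ to be the set of nodes $v \in B$ such that $u$ is ancestral to $v$ in $N^+$. Since the definition of ``ancestral'' allows the empty directed path, $u \in S$, so $S$ is nonempty.

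Next, because $N^+$ is a finite DAG, the restriction of the ancestor relation to $S$ is a finite partial order and therefore admits a minimal element: some $v \in S$ with no proper descendant (in $N^+$) lying in $S$. I will then argue that such a $v$ is automatically a lowest node of $B$ in the sense of \Cref{def:lowest}. Suppose for contradiction that $v'$ is a proper descendant of $v$ in $N^+$ with $v' \in B$. Then $u$ is ancestral to $v$ and $v$ is ancestral to $v'$, so by transitivity $u$ is ancestral to $v'$, placing $v' \in S$. This contradicts the minimality of $v$ in $S$. Hence $v$ has no descendant in $B$ other than itself, so $v$ is a lowest node of $B$, and $u$ is ancestral to $v$ by construction.

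There is essentially no obstacle here: the argument reduces to the standard existence of minimal elements in a finite partial order, together with transitivity of the ancestor relation. The only thing to be slightly careful about is that the ``descendants in $B$'' in \Cref{def:lowest} are taken with respect to the ambient $N^+$, not restricted to directed paths lying inside $B$; the proof above respects this convention throughout, since $S$ is defined using the $N^+$-ancestor relation and the contradiction is derived from a descendant in $N^+$ that happens to lie in $B$. An alternative route would be to iterate \Cref{lem:trekcycle}, following the hybrid sink of a trek cycle containing $u$ and then a trek cycle below that sink, and terminate by finiteness; but the direct descent in $S$ avoids any hybrid/tree case split and seems cleanest.
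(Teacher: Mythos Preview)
Your argument is correct, and there is nothing to compare it against: the paper explicitly omits the proof, stating only that it ``is trivial.'' Your minimal-element argument in the finite partial order of $B$-nodes below $u$ is precisely the kind of trivial verification the authors have in mind.
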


This lemma implies that every blob has at least one lowest node,
and hence every non-trivial blob has at least one hybrid articulation node.

\begin{defn}\label{def:funnel} 
The \emph{funnel} of a lowest node $v$ of a blob $B$ is the set of
edges and nodes in $B$ that are ancestral to $v$ but
not to any other lowest node of $B$.
\end{defn}

{Note that the term ``funnel'' has been applied in phylogenetic network studies in several ways
differing from our usage \cite{2023XuAne_identifiability,Ardiyansyah2021_level2} .} 
The notion used here is only for rooted networks,
as a funnel may depend on the root position.
The funnel of a lowest node may contain both tree and hybrid edges and nodes,
and tree or hybrid articulation nodes, as shown in \Cref{fig:linkaugment}.
Furthermore, it may contain an edge without containing the parental node of that edge,
so the funnel is not typically a subgraph.
However, deleting a funnel from a blob does give a subgraph.

\begin{lemma}\label{lem:funcon} Suppose a blob $B$ {in a rooted phylogenetic network} has at least two lowest nodes.
Then  the deletion from $B$ of the funnel of a lowest node gives a non-empty connected graph.
\end{lemma}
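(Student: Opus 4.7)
The plan is to use $B$'s LSA as a hub in $B':=B\setminus F$ (where $F$ is the funnel of the deleted lowest node $v_1$), and to show every node of $B'$ is connected to it.

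Denote the remaining lowest nodes of $B$ by $v_2,\dots,v_k$ ($k\ge 2$ by hypothesis), and for each $j$ let $A_j\subseteq B$ be the nodes of $B$ ancestral to $v_j$ in $N^+$. By \Cref{lem:abovehybrid} the $A_j$ cover $B$, and by definition of the funnel $F$ consists of the nodes and edges in $A_1\setminus\bigcup_{j\ne 1}A_j$, so as a node set $B'=\bigcup_{j\ne 1}A_j$, with edges those of $B$ whose child lies in $B'$ (one checks parents are then automatically in $B'$). For non-emptiness, the LSA $r$ of $B$ lies in $B$ (by the preceding lemma) and is ancestral to every node of $B$, so $r\in A_2\subseteq B'$.

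For connectedness, the key step is the fact that, for any $u,v\in B$ with $u$ ancestral in $N^+$ to $v$, some directed path from $u$ to $v$ lies entirely in $B$. Granting this, for each $j\ne 1$ a directed in-$B$ path from $r$ to $v_j$ visits only ancestors of $v_j$, hence lies in $A_j\subseteq B'$, so $r$ is connected to $v_j$ inside $B'$. Similarly, each $u\in B'$ lies in some $A_j$ with $j\ne 1$ (since $u\notin F$), and a directed in-$B$ path from $u$ to $v_j$ lies in $A_j\subseteq B'$, connecting $u$ to $v_j$ and hence to $r$.

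The main obstacle is the in-$B$ directed path fact above. I expect to establish it from maximality of $B$: if a directed $N^+$-path from $u\in B$ to $v\in B$ left $B$ on some segment, the external portion would be a simple directed path between two nodes of the biconnected subgraph $B$, and adjoining such a path to $B$ preserves biconnectedness, yielding a strictly larger biconnected induced subgraph of $N^+$ and contradicting maximality. This is the same principle implicit in the proof of \Cref{lem:abovehybrid}.
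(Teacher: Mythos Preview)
Your argument is correct and uses the same idea as the paper: the LSA $\rho$ lies in $B'$, and every other node of $B'$ is joined to $\rho$ by a directed in-$B$ path that stays inside $B'$. The paper's execution is a bit leaner---it takes a single directed path from $\rho$ to $u$ (all of whose edges are above $u$, hence above whatever lowest node $u$ lies above), avoiding your detour through $v_j$---and for the in-$B$ directed path fact, a cleaner justification than your ear/maximality argument is available from the preceding lemma's proof: any re-entry into $B$ would have to occur at the unique entry node $\rho$, forcing a directed cycle in the DAG.
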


\begin{proof}
Let $v$ be the lowest node whose funnel was removed, and $B'$ the resulting graph.
The LSA $\rho$ of $B$ is in $B'$ since it lies above all
lowest nodes of $B$. Let $u\in B$ be any other node in $B'$.
Then there is a directed path $P$ from $\rho$ to $u$ in $B$.
But since $u$ is ancestral to a lowest node other than $v$ and all edges in $P$
are above $u$, they are also ancestral to another lowest node.
Thus all edges of $P$ are in $B'$.
\end{proof}

If a blob has a single lowest node, then by Lemma \ref{lem:abovehybrid}
its funnel is the entire blob.
Otherwise we can classify the blob's lowest nodes into two types,
illustrated in \Cref{fig:linkaugment}.

\begin{defn} Suppose a blob $B$ has at least two lowest nodes,
  one of which is $v$. Then  $v$ \emph{links} $B$ if the deletion
  of the funnel of $v$ from $B$ results in a graph with two or more
  (possibly trivial) blobs.
  If a lowest node does not link $B$ (that is, the deletion of its
  funnel results in a single blob) we say it \emph{augments} $B$.
\end{defn}

\begin{figure}
\centering
\includegraphics[scale=1.3]{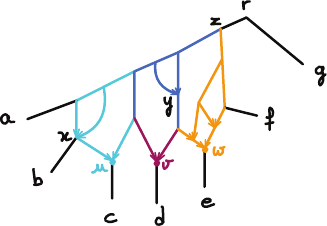}
  \caption{
  Network with one non-trivial blob $B$ (all colored edges)
  with three lowest nodes, $u$, $v$ and $w$ (colored dots).
  Articulation nodes that are not lowest can be tree nodes (e.g.~node adjacent to $a$) or hybrid nodes (e.g.~$x$, non-binary node).
  The funnel of each lowest node is shown with matching color.
  Lowest node $v$ links $B$, since deletion of its funnel from $B$ gives two blobs
  (with lowest nodes $u$ and $w$) joined by a cut edge.
  Node $w$ similarly links the blob, since deleting its funnel leaves one blob
  with lowest nodes $u,v$ and one trivial blob, the blue edge incident to  $z$.
  Lowest node $u$ augments the blob, since deleting its funnel leaves a single blob.
  This example is planar for clarity, though planarity is not assumed in the
  definition of linking and augmenting lowest nodes.
}\label{fig:linkaugment}
\end{figure}

\begin{lemma} \label{lem:augment} {In a rooted phylogenetic network, every}
  blob with 2 or more lowest nodes
  has at least one lowest node that augments it.
\end{lemma}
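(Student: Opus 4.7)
The plan is to identify a specific lowest node and show it augments $B$, by combining an extremal selection with a careful verification of biconnectedness.

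Let $\rho$ be the LSA of $B$ (in $B$ by the preceding lemma), let $v_1,\dots,v_k$ with $k\ge 2$ be the lowest nodes, and $F_1,\dots,F_k$ their funnels. By \Cref{lem:funcon}, each $B\setminus F_i$ is already connected, so the question reduces to biconnectedness. Two preliminary facts help. First, for any lowest node $v_i$, the set of nodes in $B\setminus F_i$ incident to edges of $F_i$ (the \emph{boundary} of $F_i$) contains at least two nodes, since otherwise that unique node would be a cut node of $B$, contradicting biconnectedness. Second, by \Cref{lem:trekcycle}, every edge of $B$ lies in a trek cycle within $B$, which will provide local detours when rerouting paths.

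For the selection, I would consider pairs $(v,v')$ of distinct lowest nodes and their common ancestors within $B$. Pick a pair together with a common ancestor $a\in B$ ancestral to no lowest node other than $v$ and $v'$; such an $a$ exists when some shared node is ancestral to only two lowest nodes, and in this case the portion of $B$ strictly below $a$ decomposes exactly as $F_v\cup F_{v'}$ together with the nodes and edges ancestral to both $v$ and $v'$. I would then argue that at least one of $v, v'$ augments $B$, say $v$ without loss of generality. The verification shows $B\setminus F_v$ is biconnected: for any two nodes $x,y\in B\setminus F_v$, two internally-disjoint $xy$-paths in $B$ exist by biconnectedness, and any portion of these passing through $F_v$ can be rerouted through $F_{v'}$ together with the shared region below $a$, using \Cref{lem:trekcycle} for local detours. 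In the exceptional case where every shared node is ancestral to three or more lowest nodes, the blob's overlapping structure allows an analogous argument applied to any chosen lowest node, since removing any single funnel leaves multiple parallel routes between each pair of shared ancestors.

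The main obstacle is twofold. First, showing that the correct choice between $v$ and $v'$ can always be made: if neither augmented, the induced cut-node structure would need to produce either a third lowest node strictly below $a$ (contradicting extremality) or a cut node of $B$ itself (contradicting biconnectedness), and this will likely require casework on how the boundaries of $F_v$ and $F_{v'}$ interact with $S$. Second, funnels are not standard subgraphs --- they may contain an edge without its parental node --- so the verification must be phrased carefully using up-down paths in $B$ and their images in $B\setminus F_v$, ensuring the rerouted paths remain valid and internally disjoint.
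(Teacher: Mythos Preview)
Your proposal is not a proof but a plan with acknowledged gaps, and those gaps are real. The central difficulty is the rerouting step: given two internally-disjoint $xy$-paths in $B$, replacing the segments that enter $F_v$ by detours through $F_{v'}$ and the ``shared region below $a$'' gives no control over internal disjointness. Both original paths may use $F_v$, or the rerouted segment may collide with the untouched path; invoking \Cref{lem:trekcycle} for local detours does not resolve this, since trek cycles need not avoid the other path. A second gap is the extremal selection: you assume a common ancestor $a$ of some pair $v,v'$ that is ancestral to no other lowest node, but you neither prove such an $a$ exists nor give a real argument in the ``exceptional case'' where it does not. Finally, even granting the setup, you never supply the promised argument that at least one of $v,v'$ augments --- you only say what a failure of both would ``need to'' produce.

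The paper's proof bypasses all of this by working entirely in the contrapositive and never attempting to verify biconnectedness directly. Assuming every lowest node links, it associates to each lowest node $v$ a \emph{lowest} cut node $w_v$ appearing in $B\setminus F_v$, then picks $\tilde w$ lowest among all the $w_v$. Because $\tilde w$ is lowest, it sits above a single blob $B'$ containing some other lowest node $u$; a trek-cycle argument shows $\tilde w$ is above both $\tilde v$ and $u$. Deleting $F_u$ can then only create cut nodes strictly below $\tilde w$, contradicting minimality. The key difference is the extremal object: you minimize over common ancestors of pairs of lowest nodes, which forces you into path-rerouting; the paper minimizes over the cut nodes that would appear if the lemma failed, which yields a clean contradiction without ever constructing disjoint paths.
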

\begin{proof}
Suppose every lowest node of a blob $B$ links it. 
Deletion of the funnel for any lowest node $v$ leaves a connected graph by
Lemma~\ref{lem:funcon}. Since the lowest nodes are linking, these graphs have
one or more cut nodes.
Associate to each such lowest node $v$ of $B$ one of the lowest such
cut nodes $w_v$ resulting from the funnel deletion, and let $w_v$ also denote
the node in $B$ that induces it.
Of the set of associated cut nodes for all lowest nodes, choose a lowest one,
$\tilde w = w_{\tilde v}$,
with $\tilde v$ a lowest node of $B$ associated to $\tilde w$. 

When the funnel of $\tilde v$ is deleted from $B$,  the node $\tilde w$ 
becomes a cut node above one or more  blobs. But since $\tilde w$ 
is the lowest such cut node, it can be above only one blob, $B'$, to which it is incident.
Since $\tilde w$ is not in the funnel of $\tilde v$,
it must lie above some lowest node $u\ne \tilde v$ in $B$.
Then $u$ must be hybrid, $B'$ must be non-trivial and $u$ must be in $B'$.

By Lemma \ref{lem:trekcycle}, $\tilde w$ 
is in a trek cycle in $B$. Since $\tilde w$ 
becomes a cut node when the funnel of $\tilde v$ is removed,
this trek cycle includes some edge in the funnel. But then
the sink of the trek cycle must be in the funnel, and hence above $\tilde v$.
Thus $\tilde w$ is above both $\tilde v$ and $u$.

Deleting the funnel of $u$ from $B$, then, cannot delete $\tilde w$ 
or any  node in the funnel of $\tilde v$, and hence only deletes nodes in $B'$.
Thus deleting the funnel of $u$ from $B$ can only produce cut nodes from nodes below $\tilde w$. 
But such a cut  node would be lower than $\tilde w$ ,
which is a contradiction to the choice of $\tilde w$
as a lowest node in the set $\{w_v { \mid } \,v \mbox{ lowest in } {B}\}$.

Thus $B$ has an augmenting lowest node.
\end{proof}

In {\Cref{sec:orderFromQuartets},} we will use deletion of funnels of lowest nodes
in an inductive proof. The following Lemma describes the impact 
on the number of lowest nodes.

\begin{lemma}\label{lem:counthcn} {In a rooted phylogenetic network, let}
  $B$ be a blob with $n\ge 2$ lowest nodes. Then the deletion of a
 funnel of any lowest node produces a graph with exactly $n-1$ lowest nodes among its blobs.
\end{lemma}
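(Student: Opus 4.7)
Let $F$ denote the funnel of $v_1$ and set $B' = B\setminus F$. My plan is to identify precisely which original lowest nodes survive in $B'$ and then show that no new lowest nodes are created.

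First I would show that among the lowest nodes of $B$, only $v_1$ lies in $F$. By \Cref{lem:lowest}, the lowest nodes of $B$ are hybrid articulation nodes with no descendants in $B$, so they are pairwise incomparable; consequently $v_1$ is ancestral only to itself among the lowest nodes and $v_1\in F$, while each $v_j$ with $j\ge 2$ is ancestral to itself, a lowest node distinct from $v_1$, so $v_j\notin F$. Using that every edge incident to an $F$-node is again in $F$ (both its parent and child are ancestral to $v_1$ only), one obtains $B'=\bigcup_{j\ge 2}A_j$, where $A_j$ denotes the set of ancestors of $v_j$ in $B$. Next I would verify that each $v_j$ with $j\ge 2$ remains a lowest node of a non-trivial blob of $B'$: $v_j$ has no descendant edges in $B$ hence none in $B'$, and by biconnectedness of $B$ together with $v_j$ being hybrid, a trek cycle with sink $v_j$ using two parental edges of $v_j$ lies entirely in $A_j\subseteq B'$, placing $v_j$ in a non-trivial blob as a sink.

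The main step is to show no other node of $B'$ is a lowest node of any blob. Fix $u\in B'$ with $u\ne v_j$ for every $j\ge 2$. Then $u$ is not a lowest node of $B$, so $u$ has a descendant edge in $B$, and since $u\in A_j$ for some $j\ge 2$, a directed path $u\to\cdots\to v_j$ in $B$ stays in $A_j\subseteq B'$ and its first edge $e=(u,c)$ is a descendant edge of $u$ in $B'$. To conclude $u$ is not a lowest node of its blob, I would exhibit an alternative path from $u$ to $c$ in $B'$ not using $e$, so that $e$ is not a cut edge of $B'$ and thus lies in $u$'s blob. Such an alternative goes up from $u$ through a parental edge to the LSA of $B$ (which lies in $B'$ by \Cref{lem:lowest} applied to the ambient network), down to a second parent $q$ of $v_j$ (distinct from the parent used at the end of the original path), across $(q,v_j)$ to $v_j$, and then back up along the reverse of the directed segment from $c$ to $v_j$ to arrive at $c$. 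All internal nodes of this route are ancestors of $u$, of $q$, of $v_j$, or of $c$ in $B$, hence ancestors of $v_j$, and therefore lie in $A_j\subseteq B'$.

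The main obstacle is verifying that this alternative path can be chosen entirely in $B'$ without revisiting $u$. The key leverage is that $v_j$ is hybrid, providing the second parent $q$ needed to re-enter $v_j$ from a different direction, and that ancestors of $v_j$ (including those of $u$, $q$, and $c$) all lie in $A_j\subseteq B'$; biconnectedness of $B$ then supplies enough routing flexibility, via Menger-style arguments on the internally disjoint directed paths from the LSA to $q$ and from $c$ to the other parent of $v_j$, to avoid revisits of $u$.
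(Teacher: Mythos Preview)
You have over-read what must be shown. In this lemma, ``lowest node'' is to be taken globally in the DAG $B'$: a node with no descendant node in $B'$. The phrase ``among its blobs'' is only locating these $n-1$ nodes, not asking you to rule out that some node becomes lowest in an individual biconnected component of $B'$. That stronger statement is in fact false. Take $B$ with nodes $p_1,p_2,z,q_1,q_2,v$ and directed edges
\[
p_1\to p_2,\quad p_1\to z,\quad p_2\to z,\quad z\to q_1,\quad z\to q_2,\quad q_1\to q_2,\quad p_2\to v,\quad q_1\to v.
\]
This $B$ is biconnected with exactly two lowest nodes, $v$ and $q_2$, so $n=2$. The funnel of $v$ is $\{v,(p_2,v),(q_1,v)\}$, and $B'=B\setminus F$ is two triangles meeting at the cut vertex $z$. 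Then $z$ is a lowest node of the triangle on $\{p_1,p_2,z\}$ (it is hybrid, an articulation node of that triangle, and has no descendant edge there), yet $z$ is not one of the original $v_j$. Thus your ``main step'' targets a false claim, and no alternative-path or Menger-type routing can salvage it; indeed, even if you succeeded in showing $e=(u,c)$ lies in a non-trivial blob of $B'$ containing $u$, that would not prevent $u$ from being lowest in some \emph{other} blob of $B'$ through a shared cut vertex.

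The correct proof is already contained in your write-up, before the overreach. Once you note that every $u\in B'$ with $u\ne v_j$ lies in some $A_j$ with $j\ge 2$, the first edge of a directed path $u\to\cdots\to v_j$ in $B$ is a descendant edge of $u$ lying in $A_j\subseteq B'$; hence $u$ has a descendant in $B'$. Together with your first paragraph showing that only $v_1$ among the original lowest nodes lies in $F$ and that $v_2,\dots,v_n$ have no descendants in $B'$, this already gives exactly $n-1$ such nodes. The paper's proof is precisely this, phrased contrapositively: if $w\in B'$ were a lowest node of $B'$ but not of $B$, then every descendant edge of $w$ in $B$ would lie in $F$, forcing $w$ to be ancestral only to $v_1$ among lowest nodes of $B$, i.e.\ $w\in F$ --- a contradiction. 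Everything in your plan after constructing $e=(u,c)$, including the trek-cycle placement of $v_j$, the LSA detour, and the Menger argument, is unnecessary.
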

\begin{proof}
 Let $B'$ be the subgraph of $B$ obtained by deleting the funnel of a 
 lowest node $v$.  The $n-1$ lowest nodes of $B$ other than $v$ are not in $v$'s 
 funnel, hence are in blobs of $B'$
and remain lowest nodes. 

Suppose there is a lowest node $w$ of $B'$ that is not a lowest node of $B$. 
Then all descendent edges of $w$ in $B$ are in the funnel of $v$, and hence $w$ lies 
above $v$ and no other lowest node of $B$. Therefore $w$ is in the funnel of $v$, and not in $B'$. 
\end{proof}

\section{Circular orders and outer-labeled planar blobs}\label{sec:circOrdandOLP}

We are interested in natural orders of taxa around 
a phylogenetic network embedded in the Euclidean plane, so that
as a plane graph, vertices are distinct and edges only meet at their end-points.
Generally, networks 
do not have a unique order in which
their taxa are arranged, since at any cut node one can ``rotate'' one part of
the network out of the plane, reversing the order of its taxa,
as was illustrated in \Cref{fig:2edge}.
However, the individual blobs of such a network are associated to a unique order,
as we establish in this section.
We seek an order of blocks of taxa around a blob, but first define these blocks.

\begin{defn}\label{def:taxonblock}
  For a blob $B$ in a network 
  on $X$, the \emph{taxon blocks} associated with $B$ are the
  non-empty subsets of $X$ on connected components
  obtained by deleting all edges in $B$.
\end{defn}
The taxon blocks for a blob are in bijective correspondence to the articulation
nodes of the blob, according to the articulation node through which any undirected
path from the blob to a taxon in the block must pass.
This correspondence relies upon \Cref{assump:LSA}, so that the blob's LSA
is not considered an articulation node when it coincides with the network LSA.

\begin{defn}
  A \emph{circular order} for a finite set $Y$ is an order $(y_1,y_2,\dots,y_k)$
  of its elements up to reversal and cyclic permutation.
  That is, as a circular order, $(y_1,y_2,\dots,y_{k-1},y_k)$ is the same as
  $(y_k,y_{k-1},\dots,y_2,y_1)$ and as $(y_2,y_3,\dots,y_k,y_1)$.
\end{defn}

Viewing the articulation nodes as ``labeled'' by their taxon blocks,
a circular order of the blocks arises from any mapping of the blob into the plane
which places the labels on the ``outside''.
So that such an order is not completely arbitrary,
we require that the mapping be an embedding.
To make this precise, recall that a graph embedded in the plane divides it into
connected regions called \emph{faces}.
Exactly one of these is an unbounded region, called the \emph{unbounded face}.
To simplify later wording, we adopt the following terminology.

\begin{defn}
If a planar graph $G$ is embedded in the plane as $H$, the \emph{frontier} of $H$
is the set of edges and nodes forming the topological boundary of the unbounded face.
\end{defn}

The frontier of an embedded planar graph depends upon the embedding, and not just
the abstract graph, as illustrated in \Cref{fig:2olp}.

\medskip

The following is a slight generalization of a definition of \cite{HusonRuppScorn},
allowing general subsets of vertices rather than the leaf set.

\begin{defn} \label{def:outer}
Let $G$ be a planar graph, embedded in the plane as $H$, and
$V_0$ a subset of the vertices of $G$. Then the embedded $H$ is
\emph{outer-labeled planar for $V_0$} if all elements of $V_0$ lie in the frontier of $H$.
If $G$ has any outer-labeled planar embedding for $V_0$, we say $G$ is
\emph{outer-labeled planar} for $V_0$.
A phylogenetic network $N$ is \emph{outer-labeled planar} if
$N$ is outer-labeled planar for its leaf set, labeled by $X$.
A  blob $B$ in a phylogenetic network is \emph{outer-labeled planar}
if $B$ is outer-labeled planar for its set $V_0$ of articulation nodes,
labeled by taxon blocks.
\end{defn}

\begin{figure}
 \centering
 \includegraphics[scale=1.2]{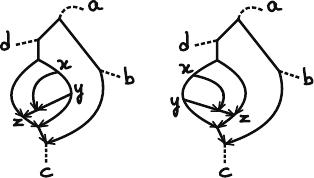}
 \caption{A planar blob $G$ may have several different  outer-labeled planar embeddings.
 Here two such embeddings are shown for the same blob, with $V_0$ consisting of articulation nodes
 adjacent to $\{a,b,c,d\}$.
 On the left, hybrid node $z$ lies in the frontier but tree nodes $x$ and $y$ do not,
 while these are reversed on the right.
 However, the circular order of articulation nodes is the same for both embeddings.
}\label{fig:2olp}
\end{figure}

If $N$ is outer-labeled planar, then each of its blobs must be,
by restricting the embedding to the blob.
Conversely, if all its blobs are outer-labeled planar, then the network must be,
as one can construct an appropriate embedding by ``gluing'' those for each blob
at articulation nodes.

An outer-labeled planar blob may have several different
outer-labeled planar embeddings, with different sets of edges and nodes
in their frontiers, as shown in \Cref{fig:2olp}.
Nonetheless, the following establishes a fundamental commonality to all such embeddings.

\begin{theorem} \label{thm:uniqcirc}
Let $G$ be an undirected  biconnected graph which is outer-labeled planar for
a subset $V_0$ of its vertices.
Then the frontiers of all the planar embeddings of $G$ with $V_0$ in the frontier
induce the same circular order of $V_0$.
\end{theorem}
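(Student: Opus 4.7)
The plan is to augment $G$ with a single new vertex $w$ made adjacent to every element of $V_0$, and to reduce the theorem to the invariance of the cyclic order of edges at $w$ in the resulting graph $G^+$ across planar embeddings. Since $G$ is outer-labeled planar for $V_0$, I can place $w$ in the unbounded face of an outer-labeled embedding of $G$ and join $w$ to every element of $V_0$ without crossings, so $G^+$ is planar. When $|V_0|\le 2$ the circular order is trivial, so I assume $|V_0|\ge 3$; then $w$ has degree $\ge 3$, and biconnectedness of $G^+$ follows from that of $G$. Any outer-labeled planar embedding of $G$ for $V_0$ extends to a planar embedding of $G^+$ by placing $w$ in the unbounded face, and conversely every planar embedding of $G^+$ restricts to an outer-labeled embedding of $G$ in which $V_0$ lies on the boundary of the face containing $w$. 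Under this correspondence the circular order of $V_0$ on the frontier of $G$ coincides with the cyclic order of edges incident to $w$ in $G^+$.

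\medskip

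To prove the invariance of the cyclic order at $w$, I would invoke Whitney's classical theorem on planar embeddings of 2-connected graphs: any two such embeddings are related by a choice of unbounded face (which does not affect the rotation system), a global reflection, and a sequence of Whitney flips at 2-separations. First I would observe that no 2-separation of $G^+$ contains $w$, since if $\{w,v\}$ were one, then $G^+-\{w,v\}=G-v$ would be disconnected, contradicting biconnectedness of $G$. Next, for any 2-separation $\{u,v\}\subset V(G)$ of $G^+$, all vertices of $V_0\setminus\{u,v\}$ must lie on the same side of $\{u,v\}$ as $w$, since $w\notin\{u,v\}$ has neighbors only in $V_0$ and cannot have neighbors on both sides of a separation it does not belong to. A Whitney flip of the side not containing $w$ therefore preserves the rotation at $w$, while a flip of the side containing $w$ reverses it but is equivalent modulo a global reflection to a flip of the opposite side: a Whitney flip is a reflection across the line through the separating vertices $u$ and $v$, so flipping both sides in succession amounts to a global reflection of the embedding.

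\medskip

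The main obstacle is a careful invocation of Whitney's characterization of planar embeddings of 2-connected graphs, together with the geometric analysis of Whitney flips needed to verify that flipping the $w$-side and flipping the non-$w$-side of a 2-separation are equivalent modulo a global reflection. A secondary technical point is establishing the correspondence between outer-labeled planar embeddings of $G$ for $V_0$ and planar embeddings of $G^+$, and checking that under this correspondence the rotation at $w$ faithfully records the circular order in which $V_0$ is encountered on the frontier of $G$.
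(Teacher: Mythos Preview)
Your approach is correct and genuinely different from the paper's. The paper argues directly: given two outer-labeled planar embeddings with frontiers $F,F'$ (which are cycles since $G$ is biconnected), it assumes the induced circular orders on $V_0$ differ and locates four vertices $x_1,x_2,x_i,x_j$ with $(x_1,x_2,x_i,x_j)$ induced from one frontier and $(x_1,x_i,x_2,x_j)$ from the other. It then takes two vertex-disjoint paths $P,Q$ in $F$ separating $\{x_1,x_2\}$ from $\{x_i,x_j\}$, and shows via a planar separation lemma (Diestel, Lemma~4.1.2) that in the second embedding $Q$ would have to meet a subpath of $P$, contradicting vertex-disjointness. No structure theory of planar embeddings is invoked.

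Your route---adjoining a cone vertex $w$ over $V_0$ to obtain a $2$-connected planar $G^+$, identifying the frontier order with the rotation at $w$, and then using Whitney's theorem that all planar embeddings of a $2$-connected graph differ by $2$-separation flips---is sound. Your observation that $w$ lies in no $2$-separation of $G^+$ (else $G-v$ would be disconnected), and hence that every flip either fixes the rotation at $w$ or is equivalent modulo a global reflection to one that does, is correct and is the heart of the argument. The trade-off is clear: the paper's proof is elementary and self-contained, needing only basic facts about faces and separating paths, whereas yours is conceptually tidy---it places the result inside the classical flip classification of planar embeddings---but imports Whitney's theorem as a black box. The two technical points you flag (the precise form of Whitney's theorem for embeddings rather than abstract $2$-isomorphism, and the verification that flipping both sides of a $2$-separation composes to a global reflection at the level of rotation systems) are real but routine to fill in.
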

\begin{proof}
Let $F$ and $F'$ denote the subgraphs of $G$ forming the frontiers of two
outer-labeled planar embeddings, $H$ and $H'$, of $G$ for $V_0$.
Since $G$ is biconnected, by Proposition 4.2.6 of \cite{Diestel2017},
$F$ and $F'$ are cycles.
Thus $F$ and $F'$ induce circular orders $\sigma$ and $\sigma'$ of $V_0$.
If $|V_0|< 4$, $\sigma=\sigma'$ is trivial, so we henceforth assume $|V_0|\ge 4$.

Let $\sigma=(x_1,x_2,...,x_n)$
and suppose, for the sake of contradiction, $\sigma\ne \sigma'$.
Without loss of generality, we may assume that $x_1,x_2$ are adjacent in $\sigma$ but not in $\sigma'$.
Then there are four elements of $V_0$ with circular order $(x_1,x_2,x_i,x_j)$ induced from $\sigma$
and $(x_1,x_i,x_2,x_j)$ from $\sigma'$.

There are two vertex-disjoint paths, $P$ and $Q$ in $F$, such that $P$
passes from $x_1$ to $x_2$ and no other node in $V_0$, and $Q$
passes from $x_i$ to $x_j$ and possibly other nodes in $V_0$
but neither $x_1$ nor $x_2$.
Let $P'_1$ and $P'_2$ be the two distinct paths in $F'$ going
from $x_i$ to $x_j$, passing through $x_1$ and $x_2$
respectively, which are disjoint except for their endpoints,
$x_i$ and $x_j$.
Note that $P$ connects $P_1'$ to $P_2'$ and does not pass
through $x_i$ or $x_j$.  Let $\alpha$ be the last node on $P$
that is also on $P'_1$ and $\beta$ the first node on $P$ after
$\alpha$ that is on $P'_2$ (see \Cref{fig:proofuniq}, path
in blue).  
Let $P_3$ be the subpath of $P$ from $\alpha$ to $\beta$.

Consider the following 3 paths from $\alpha$ to $\beta$ that intersect only
at their endpoints:
the subpaths $Q'_1,Q'_2$ of the cycle $F'$ passing through $x_i, x_j$ respectively,
and the subpath $P_3$ of $P$ in $F$.
Under the $H'$ embedding, the path $P_3$ lies in the region of the plane
bounded by $F'=Q'_1\cup Q'_2$
(see \Cref{fig:proofuniq}).
Thus by Lemma 4.1.2 of \cite{Diestel2017},
every path from $x_i$ to $x_j$ in $G$ must intersect $P_3$,
in the planar embedding $H'$.
In particular, $Q$ must intersect $P_3$ in $H'$.
Since $H'$ is planar, $Q$ and $P_3$ must intersect at a vertex,
which is a contradiction since by definition they are vertex-disjoint.
This contradiction yields $\sigma= \sigma'$.
\end{proof}

\begin{figure}
\centering
\includegraphics[scale=1.3]{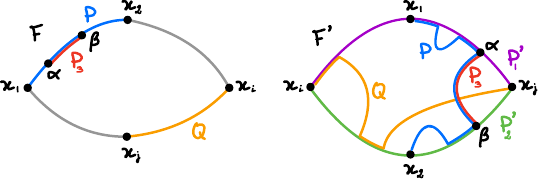}
\caption{
Schematic drawing of the paths used in the proof of Theorem \ref{thm:uniqcirc} to show each
outer-labeled biconnected graph $G$ induces a unique circular order on the labeled nodes.
Left: frontier $F$ of an embedding $H$ of $G$. Two vertex-disjoint paths of $F$,
$P$ from $x_1$ to $x_2$ (in blue), and $Q$ from $x_i$ to $x_j$ (in orange)
are highlighted. Additionally, a sub-path $P_3$ of $P$ is shown in red.
Right: frontier $F'$ (in purple and green) of an embedding $H'$ of $G$.
Part of the frontier $F$ (paths $P$ and $Q$ in blue and orange, respectively)
from the embedding $H$ is also highlighted.
Between the vertices $\alpha,\beta$ are 3 paths that are disjoint except for
their endpoints: two in $F'$ and one in $F$ ($P_3$ in red).
But then $Q$ must intersect $P_3$ in the plane.
}\label{fig:proofuniq}
\end{figure}

\begin{cor}\label{cor:uniquecirc}
  There is a unique circular order of the articulation nodes of an outer-labeled
  planar blob of a phylogenetic network,
  induced from every outer-labeled planar embedding of the blob.
\end{cor}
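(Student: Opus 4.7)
The plan is to reduce directly to \Cref{thm:uniqcirc} by specializing it to the blob. First I would note that by the definition adopted in this work, a blob is a maximal biconnected subgraph, so $B$, viewed as an undirected graph after forgetting any edge directions inherited from the rooted or semidirected network, is biconnected. The set $V_0$ of articulation nodes of $B$ is a subset of its vertices, and the assumption that $B$ is outer-labeled planar says precisely (via \Cref{def:outer}) that $B$ admits a planar embedding in which every element of $V_0$ lies in the frontier.

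Thus both hypotheses of \Cref{thm:uniqcirc} are satisfied for the pair $(B, V_0)$. Applying the theorem, any two planar embeddings of $B$ with $V_0$ in the frontier induce the same circular order on $V_0$, which is exactly the uniqueness assertion of the corollary for articulation nodes. Since, as noted after \Cref{def:taxonblock}, articulation nodes of $B$ are in bijective correspondence with the taxon blocks associated with $B$, the same circular order transports to the taxon blocks if one wishes to phrase the statement at that level.

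The nontrivial geometric content was already handled inside the proof of \Cref{thm:uniqcirc} (using that the frontier of a biconnected plane graph is a cycle, together with a planarity/Jordan-curve argument to rule out distinct orders), so there is no real obstacle at the corollary stage. The only thing requiring care is the bookkeeping check that ``outer-labeled planar blob of a phylogenetic network,'' as the phrase is used in the corollary, literally instantiates the graph-theoretic hypothesis of the theorem with $V_0$ the articulation nodes; this match is immediate from \Cref{def:outer}.
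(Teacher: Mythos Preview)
Your proposal is correct and matches the paper's approach exactly: the corollary is stated immediately after \Cref{thm:uniqcirc} with no separate proof, as it is the direct specialization you describe, taking $G=B$ (biconnected by the paper's definition of a blob) and $V_0$ the set of articulation nodes (as in \Cref{def:outer}). There is nothing to add.
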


\medskip

If a network $N$ on an $n$-taxon set $X$ has an $n$-blob, then each taxon block
for this blob must be reduced to a single taxon, mapping each articulation node
to a single taxon in $X$. Then by \Cref{cor:uniqcirctilde}, all
outer-labeled planar embeddings of $N$ induce the same circular order of $X$,
which leads us to the following definition.
\begin{defn}\label{def:circnetwork}
  Let $N$ be an outer-labeled planar network on an $n$-taxon set $X$.
  If $N$ has an $n$-blob, the \emph{circular order of $N$} is defined as the
  unique circular order of $X$ induced by all outer-labeled planar
  embeddings of $N$.
  More generally,
  $N$ is said to be \emph{congruent} with a circular order $\sigma$ of $X$ if there exists an
  outer-labeled planar network $N'$ on $X$ displaying $N$ which has an $n$-blob and
  circular order $\sigma$.
\end{defn}

{It is easy to see that any outer-labeled planar network $N$ with $n\ge 4$ taxa but lacking
an $n$-blob is congruent with several circular orders.
Indeed, pick a cut node or edge separating the taxa into two sets of size $\ge 2$, and
`rotate' that part of the embedding on one side
(as in \Cref{fig:2edge}) to get a new embedding. These embedded networks can be completed to be outer-labelled planar with $n$-blobs, by adding edges between pendant edges, and then have distinct circular orders.}

Note this definition differs from that {of a circular order} defined
in terms of split systems, which is used in the development of splits graphs
\cite{Semple2005,HusonRuppScorn,2017Gambette-uprooted}.
Indeed, the network types differ, as ours are explicit and splits graphs are
implicit.
While we do not use split systems here, the following shows a connection for 4-taxon trees.

\com{ 
This definition differs from the circular order traditionally defined
for a split system \cite{Semple2005,HusonRuppScorn}.
 A split is a bipartition
of $X$, and a set of splits is said to be circular if there exists a
circular order $\sigma=(x_1,\ldots,x_n)$ of $X$ such that every split can be
written in the form $A|\bar{A}$ with $A=\{x_i,x_{i+1},\ldots,x_j\}$ for some $i<j$.
For a tree $T$, the splits defined by its edges form a circular split system,
and $T$ is said compatible with a circular order if its split system is.
Circular split systems are the split systems representable by
outer-labeled split networks (which are implicit networks, not explicit)
\cite[Theorem 5.7.5]{HusonRuppScorn}.
\cite{2017Gambette-uprooted}
showed that a split system is circular exactly when it is
a subset of the splits of an unrooted level-1 network $N^-$ on $X$.
These splits are obtained by deleting enough edges to disconnect $N^-$.
If $N^-$ is obtained by undirecting all edges from a semidirected network $N$,
then its splits can be a strict superset of the splits displayed in $N$,
such that the result from \cite{2017Gambette-uprooted} does not
directly apply to the set of splits displayed in $N$.

In \Cref{def:circnetwork}, the circular order(s) is defined for an
explicit phylogenetic \emph{network},
without reference to its displayed splits.
We show below that the two notions, both well-defined for trees, agree
on 4-taxon trees.
Future work could focus on the relationships between these
two notions, and on extending our definition of
network circular order to a class beyond outer-labeled planar networks.
} 

\begin{lemma}\label{lem:circ4taxontree}
  Let $T=ab|cd$ be a resolved 4-taxon tree. Then $T$ is congruent with exactly two
  circular orders: $\sigma_c=(a,b,c,d)$ and $\sigma_d=(a,b,d,c)$.
\end{lemma}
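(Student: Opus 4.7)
The plan is to prove both directions: existence of networks congruent with $\sigma_c$ and $\sigma_d$, and uniqueness (no other circular order works). I would give explicit constructions for existence and a planarity argument for uniqueness.

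For existence, construct two outer-labeled planar phylogenetic networks $N'_c$ and $N'_d$ on $\{a,b,c,d\}$, each having a 4-blob and displaying $T$, with circular orders $\sigma_c$ and $\sigma_d$ respectively. Let $T$ have root $\rho$ whose children are cherry parents $u$ (over $a,b$) and $v$ (over $c,d$). For $N'_c$, subdivide the pendants $u\to a$ and $v\to d$ at new internal nodes $a'$ and $d'$, and add a directed hybrid edge $a'\to d'$, making $d'$ a hybrid node with parents $v$ and $a'$. Drawn in the plane with leaves in left-to-right order $a,b,c,d$ and the hybrid edge routed above $\rho$ to avoid crossings, this is an outer-labeled planar embedding. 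The five nodes $a',u,\rho,v,d'$ form a biconnected component whose articulation nodes are $a',u,v,d'$ (the LSA $\rho$ is in the blob but not articulated, per \Cref{assump:LSA}), corresponding bijectively to the taxon blocks $\{a\},\{b\},\{c\},\{d\}$, so this is a 4-blob. Tracing its outer-face boundary gives the cyclic order $(a,b,c,d)=\sigma_c$. Deleting the hybrid edge $a'\to d'$ and suppressing the resulting degree-2 nodes $a',d'$ recovers $T$, so $T$ is displayed. A symmetric construction, subdividing $u\to a$ and $v\to c$ and adding a hybrid edge $a'\to c'$, produces a network with circular order $\sigma_d$ also displaying $T$.

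For uniqueness, let $N'$ be any outer-labeled planar network on $\{a,b,c,d\}$ that has a 4-blob, displays $T$, and has circular order $\sigma$. By \Cref{cor:uniquecirc}, $\sigma$ is the unique cyclic order of the four leaves around the frontier of any outer-labeled planar embedding of $N'$. Fix such an embedding. The displayed tree $T$ is obtained from $N'$ by deleting hybrid edges and suppressing degree-2 nodes, so it is a planar subgraph of $N'$ inheriting the embedding. Since $T$ is a tree it has a single face, so all four leaves (at the same planar positions as in $N'$) remain on the outer face of the inherited embedding, and their cyclic order around it is still $\sigma$. It remains to observe that any planar embedding of $T=ab|cd$ places $\{a,b\}$ and $\{c,d\}$ as disjoint consecutive arcs in the leaf cycle: the internal edge of $T$ is a cut edge, and in any planar drawing its removal separates the plane into two regions containing $\{a,b\}$ and $\{c,d\}$ respectively. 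Up to circular-order equivalence, the only arrangements of four taxa in which these two pairs are each consecutive arcs are $\sigma_c$ and $\sigma_d$, so $\sigma\in\{\sigma_c,\sigma_d\}$.

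The main obstacle is justifying that the cyclic order of leaves is intrinsic to their planar positions and is preserved when passing from $N'$ to the inherited embedding of the displayed tree $T$. This uses the fact that edge deletion does not move vertices and only enlarges the outer face, so leaves remain on the outer face and their cyclic visitation order around it is unchanged by removing interior edges. Once this invariance is in hand, the combinatorics is immediate: of the three distinct circular orders of four elements, $(a,b,c,d)$, $(a,b,d,c)$, and $(a,c,b,d)$, only the first two have $\{a,b\}$ and $\{c,d\}$ each consecutive, and these are precisely $\sigma_c$ and $\sigma_d$.
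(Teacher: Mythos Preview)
Your proof is correct and, for the existence direction, essentially the same as the paper's (both exhibit an explicit one-cycle network displaying $T$ with the desired order, then swap $c$ and $d$).

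For uniqueness the approaches genuinely differ. The paper argues by contradiction that $T$ is not congruent with $\sigma_b=(a,c,b,d)$: it fixes an outer-labeled planar embedding of a witnessing $4$-blob network, considers the embedded paths $p_{ab}$ and $p_{cd}$ of the displayed $T$, and uses a Jordan-curve argument (Diestel's Lemma~4.1.2) exactly parallel to the proof of \Cref{thm:uniqcirc} to force $p_{cd}$ to meet $p_{ab}$, contradicting $T=ab|cd$. You instead pass from $N'$ to the inherited planar embedding of the displayed tree, observe that deleting edges fixes the leaves' planar positions and hence preserves their cyclic order around the outer face, and then note that every planar embedding of the resolved quartet $ab|cd$ places $\{a,b\}$ and $\{c,d\}$ in consecutive arcs. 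Your route is shorter and more conceptual for this particular lemma; the paper's route has the advantage of reusing verbatim the machinery already developed for \Cref{thm:uniqcirc}, keeping the two planarity arguments uniform.

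One small wording issue: removing the internal edge of a tree does not ``separate the plane into two regions''; what you want is that in the boundary walk of the tree's unique face the subtree below $u$ (containing $a,b$) is traversed contiguously before crossing the internal edge to traverse the subtree below $v$ (containing $c,d$), so $\{a,b\}$ and $\{c,d\}$ are each consecutive in the leaf cycle. With that phrased correctly, your argument goes through.
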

\begin{proof}
 Consider the network with a single 4-cycle, where $a$ descends from the
 hybrid node and, moving around the cycle  $b,c,d$ are, in order, attached by edges.
 This network has circular order $\sigma_c$ and displays $T$.
 Interchanging the taxa $c,d$ gives a network with order $\sigma_d$ which also displays $T$.

  To show $T$ is not congruent with $\sigma_b=(a,c,b,d)$ we reason similarly
  to the proof of Theorem \ref{thm:uniqcirc}.
  Suppose $T$ were congruent to $\sigma_b$, and $N$ is a 4-blob network displaying
  $T$ with that order. Fix an outer-labeled planar embedding of $N$,
  with $F$ the frontier of the blob.
  In the embedded $T$, let $p_{ab}$ be the unique path from $a$ to $b$ (considered
  undirected), and $p_{cd}$ the unique path from $c$ to $d$.
  Since the order along $F$ is $\sigma_b$, paths from $a$ and $b$ {to the 4-blob} meet $F$ on different segments
  between the articulation nodes for $c$ and $d$: let $F_a$ be the segment for $a$ and $F_b$
  that for $b$.
  Then we can find nodes $\alpha\in F_a$ and $\beta\in F_b$ along $p_{ab}$
  and three disjoint paths between
  $\alpha$ and $\beta$: two segments, $F_1$ and $F_2$, of $F$
  and a subpath, $p_{\alpha\beta}$, of $p_{ab}$ that intersects $F$ only at $\alpha,\beta$.
  Since $T=ab|cd$, $p_{ab}$ and $p_{cd}$ do not intersect, so $\alpha$ and
  $\beta$ must be distinct from the articulation nodes for $c$ and $d$.
  Since $p_{\alpha\beta}$ lies in the region of
  the plane bounded by $F_1$ and $F_2$, every path in $N$ between $c$ and $d$
  must intersect $p_{\alpha\beta}$ by \cite[Lemma 4.1.2]{Diestel2017},
  a contradiction since $p_{ab}$ and $p_{cd}$ do not intersect.
  Therefore $T$ is not congruent with $\sigma_b$.
\end{proof}

\section{Circular orders from quartets}
\label{sec:orderFromQuartets}

Having established that an outer-labeled planar blob in a phylogenetic network
has a unique circular order of its articulation nodes, we turn to the question
of whether this circular order can be determined from information on
the network's induced quartet networks.
A positive answer will provide a pathway for showing that the circular order
can be consistently inferred from certain approaches to empirical data analysis.

\medskip

{By an induced rooted network on a subset of taxa we mean the network obtained from a rooted network $N^+$ by
retaining only those edges and nodes ancestral to at least one of the taxa, removing 
edges and nodes above the LSA, and then suppressing degree-2 nodes. Alternatively, it is obtained by 
retaining only edges and nodes on up-down paths between taxa, and suppressing degree-2 nodes. An
 induced semidirected network on a subset of taxa is obtained either from the induced rooted network, or from $N^-$ by 
 retaining edges and nodes on up-down paths between taxa, and suppressing degree-2 nodes.}

The following notion is similar to that of a 4-taxon set that defines
an edge in a tree, which is used in the definition of a dense set of quartets
for a tree \cite{Gambette2012}.

\begin{defn}
For a blob $B$ in a phylogenetic network,
a 4-taxon set is \emph{$B$-informative} if its elements are in
distinct taxon blocks associated to $B$.
A collection of 4-taxon sets is a \emph{full $B$-informative} collection
if, for each choice of 4 blocks associated to $B$,
one of the 4-taxon sets contains a taxon from each block.
\end{defn}

Given a $B$-informative 4-taxon set, it is easy to see that
modulo 2-blobs the topology of the induced quartet network is dependent
only on the blocks of the taxa and not the individual taxa themselves.
{Indeed, the paths from the four taxa to the blob reach it at articulation nodes determined by 
the taxon blocks, and the induced quartet network topologies can only differ by 2-blobs along these paths}.

\smallskip

The simplest example of the connection between blob structure and circular order
for a 4-taxon network, is captured by the following.

\begin{lemma}\label{lem:4taxonset-displayedsplits}
Let $N$ be a binary network on $\{a,b,c,d\}$. {Then one of the following holds:}
\begin{enumerate}
\item
  If $N$ has a cut edge separating $a,b$ from $c,d$,
  it displays only the quartet $ab|cd$.
  If $N$ is also outer-labeled planar, it is congruent with
  circular orders $(a,b,c,d)$ and $(a,b,d,c)$.
\item
  If $N$ has a 4-blob
  then it displays at least 2 quartets.
  If the 4-blob of $N$ is also outer-labeled planar with
  unique circular order $(a,b,c,d)$, then $N$
  displays exactly 2 quartets,
  $ab|cd$ and  $ad|bc$.
\end{enumerate}
\end{lemma}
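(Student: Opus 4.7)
The plan is to handle the two parts separately, using the structural facts on displayed trees together with properties of the outer-labeled planar embedding.

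For Part 1, I would start from the cut edge $e$ separating $\{a,b\}$ from $\{c,d\}$ and observe that every displayed tree on $\{a,b,c,d\}$ must retain $e$: forming a displayed tree amounts to keeping one parent edge at each hybrid node, and deleting $e$ in this process would disconnect $\{a,b\}$ from $\{c,d\}$, preventing a spanning tree on all four taxa. Thus every displayed tree still has the cut separating $\{a,b\}$ from $\{c,d\}$, so it displays $ab|cd$, and since a 4-taxon tree displays a unique quartet, $N$ displays only $ab|cd$. For the congruence statement, I would exhibit outer-labeled planar 4-blob networks $N'$ displaying $N$ by ``inflating'' $e$ into a 4-cycle whose hybrid edges are oriented so that the four taxon blocks sit in cyclic order $(a,b,c,d)$, respectively $(a,b,d,c)$; any substructure of $N$ on either side of $e$ can be embedded in the corresponding region of the planar drawing.

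For Part 2, assume $N$ has a 4-blob $B$. By \Cref{lem:abovehybrid} and \Cref{lem:lowest}, $B$ has a lowest hybrid articulation node $h$, which belongs to the block of one of the four taxa, say $a$. Let $e_1,e_2$ be the two parent edges of $h$ in $B$. Deleting $e_1$ or $e_2$ and completing the hybrid choices at the remaining hybrid nodes of $N$ yields displayed trees $T_1,T_2$. By tracing the placement of $a$ in each tree, which is forced to attach via the retained parent of $h$, I would argue that $a$ gets grouped with different members of $\{b,c,d\}$ in $T_1$ versus $T_2$, so they display distinct quartets; hence $N$ displays at least 2 quartets.

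For the second claim of Part 2, suppose $B$ is outer-labeled planar with circular order $(a,b,c,d)$. Letting $\alpha_a,\alpha_b,\alpha_c,\alpha_d$ denote the corresponding articulation nodes of $B$, I would show that $ac|bd$ cannot be displayed. If it were, a displayed subtree $T$ of $N$ with topology $ac|bd$ would have its $a$--$c$ path node-disjoint from its $b$--$d$ path, and the restrictions of these paths to $B$ would connect $\alpha_a$ to $\alpha_c$ and $\alpha_b$ to $\alpha_d$ respectively. In the outer-labeled planar embedding of $B$, the pairs $\{\alpha_a,\alpha_c\}$ and $\{\alpha_b,\alpha_d\}$ alternate around the frontier, so as in the proof of \Cref{thm:uniqcirc}, Lemma 4.1.2 of \cite{Diestel2017} forces any path in $B$ joining the first pair to share a vertex with any path joining the second, contradicting node-disjointness in $T$. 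Therefore the displayed quartets lie in $\{ab|cd,ad|bc\}$, which combined with the ``at least 2'' step yields exactly $\{ab|cd,ad|bc\}$. The main obstacle I anticipate is making the ``at least 2 quartets'' step fully rigorous: one must verify that switching parents at $h$ genuinely changes the 4-taxon topology, which requires arguing that the two parents of $h$ attach to sufficiently distinct parts of the blob relative to the other articulation nodes. The planarity/Jordan-curve step excluding $ac|bd$ is the other delicate point, but it closely mirrors the argument already developed for \Cref{thm:uniqcirc}.
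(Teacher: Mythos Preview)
Your Part~1 treatment of the displayed quartet matches the paper's, and your exclusion of $ac|bd$ in Part~2 via a Jordan-curve argument is exactly what \Cref{lem:circ4taxontree} does; the paper simply cites that lemma rather than reproving it inline. The main divergences are in (i) the congruence construction of Part~1 and (ii) the ``at least 2 quartets'' step of Part~2.

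For (i), ``inflating $e$ into a 4-cycle'' does not produce a 4-blob: the endpoints of $e$ would be the only articulation nodes of your new cycle, so you get a 2-blob, not a blob with all four taxon blocks on it. The paper's construction is different and simpler: it adds a single hybrid edge from the pendent edge at $d$ (respectively $c$) to the pendent edge at $a$, which immediately creates a 4-blob with the desired circular order, and $N$ is displayed in $N'$ by deleting that one added hybrid edge.

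For (ii), the paper bypasses the argument entirely by citing \cite[Lemma~1]{2024Ane-anomalies}. Your direct approach via a lowest hybrid articulation node $h$ is in the right spirit, but the gap you flag is real: switching parents of $h$ need not change the quartet topology, since both parents of $h$ may lie in the same ``side'' of $B$ relative to the other three articulation nodes, and the remaining hybrid choices may still have enough freedom to produce the same unrooted topology. The fix, which the paper carries out in the appendix for the non-binary version (\Cref{lem:4taxonset-displayedsplits-non-binary}), is to first iteratively delete parent edges of a lowest hybrid node while a 4-blob persists, until you reach a network in which removing \emph{any} parent edge destroys the 4-blob; only then can you argue that the resulting subnetworks fall under Case~1 for distinct quartets. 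Without this reduction step, the claim that $T_1$ and $T_2$ display different quartets is unjustified.
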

\begin{proof}
  In the first case, the first statement is trivial
  since any cut edge in $N$ must be in all displayed trees.
  To show that $N$ is congruent with $\sigma_c=(a,b,c,d)$,
  we build a network ${N}'$ displaying $N$,
  outer-labeled planar, with a 4-blob and order $\sigma_c$. 
  Since an edge incident to a leaf can be directed towards the leaf,
  we can build ${N}'$ from $N$ by adding a hybrid edge
  from the edge incident to {$d$}  to the edge incident to $a$.
  The other order is handled similarly.
  
  Now consider the second case, with a 4-blob. By \cite[Lemma~1]{2024Ane-anomalies}, $N$
  must display at least 2 distinct quartets.
  If $N$ is outer-labeled planar,
  by Corollary \ref{cor:uniquecirc} it has a unique circular order $(a,b,c,d)$.
  Then by \Cref{lem:circ4taxontree},
  the displayed quartets can only be $ab|cd$ and $ad|bc$.
\end{proof}

\Cref{lem:4taxonset-displayedsplits} is extended to non-binary networks
in the Appendix, \Cref{lem:4taxonset-displayedsplits-non-binary},
which includes a third case for
networks having a central node whose deletion disconnects all 4 taxa.
Such networks display the star tree only, and, like the 4-taxon star tree,
are congruent with all 3 circular orders.
These results imply the following remarkable connection.

\begin{cor}\label{cor:circofdisplayedtrees}
  Let $N$ be a 4-taxon outer-labeled planar network.
  $N$ is congruent with a circular order $\sigma$ if and only if
  all its displayed trees are congruent with $\sigma$.
\end{cor}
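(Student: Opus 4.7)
My plan is a direct case analysis on the structure of $N$, using the classification in \Cref{lem:4taxonset-displayedsplits} together with its non-binary extension \Cref{lem:4taxonset-displayedsplits-non-binary} from the Appendix. Up to relabeling, $N$ falls into exactly one of three types: (i) $N$ has a cut edge separating the four taxa into two pairs and displays a single resolved quartet; (ii) $N$ has a 4-blob, in which case \Cref{cor:uniquecirc} endows $N$ with a unique circular order $\sigma_N$ and $N$ displays two resolved quartets; or (iii) $N$ has a central node whose deletion disconnects all four taxa, and $N$ displays only the 4-taxon star tree. The strategy is, in each case, to compute both the set of orders congruent with $N$ and the set of orders congruent with every tree displayed in $N$, and verify they coincide.

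For the forward implication, I would suppose $N$ is congruent with $\sigma$, witnessed by an outer-labeled planar 4-taxon network $N'$ that has a 4-blob and circular order $\sigma$. Any tree $T$ displayed in $N$ is also displayed in $N'$. If $T$ is the star tree, it is trivially congruent with every circular order, hence with $\sigma$. If $T$ is resolved, then applying \Cref{lem:4taxonset-displayedsplits} (or its non-binary extension) to $N'$ forces $T$ to be one of the two resolved quartets consistent with the order $\sigma$; a short direct check using \Cref{lem:circ4taxontree} confirms that each of these two quartets is congruent with $\sigma$.

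For the reverse implication, I would analyze $N$ case by case. In case (i), $N$ displays only $ab|cd$ (say), which by \Cref{lem:circ4taxontree} is congruent exactly with the two orders $(a,b,c,d)$ and $(a,b,d,c)$; the explicit construction already used in the proof of \Cref{lem:4taxonset-displayedsplits} produces outer-labeled planar 4-blob witnesses showing $N$ is congruent with each of these two orders, and with no others. In case (ii), $N$ itself witnesses its congruence with $\sigma_N$; on the other hand, for $\sigma_N=(a,b,c,d)$ the two displayed quartets $ab|cd$ and $ad|bc$ have, by \Cref{lem:circ4taxontree}, the single common congruent order $\sigma_N$, so the two sets match. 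In case (iii), $N$ displays only the star tree, which is congruent with all three orders; by adding hybrid edges between pendant edges of the star one can build an outer-labeled planar 4-blob witness realizing any prescribed order while still displaying $N$, so $N$ is congruent with all three orders as well.

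The main obstacle I anticipate is bookkeeping rather than a genuine difficulty: the combinatorial verification in case (ii) that the two displayed resolved quartets share exactly $\sigma_N$ as their unique common congruent order requires comparing the two-element sets given by \Cref{lem:circ4taxontree} for each quartet and confirming their intersection is a singleton. A secondary care is to ensure that the non-binary extension \Cref{lem:4taxonset-displayedsplits-non-binary} is invoked whenever $N$ or $N'$ is not binary, so that the three-case dichotomy is truly exhaustive and both implications close cleanly.
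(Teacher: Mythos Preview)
Your proposal is correct and is essentially a careful unpacking of what the paper leaves implicit: the paper simply states that the corollary follows from \Cref{lem:4taxonset-displayedsplits}, its non-binary extension \Cref{lem:4taxonset-displayedsplits-non-binary}, and \Cref{lem:circ4taxontree}, and your case analysis is exactly how one would spell this out. The only point worth a remark is that in the forward direction you silently use transitivity of the ``displayed'' relation (if $T$ is displayed in $N$ and $N$ in $N'$, then $T$ is displayed in $N'$), which is immediate from the definition but could be stated explicitly.
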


\smallskip

We assume that we can access the following information about induced quartet networks.
Some possible sources of such information are discussed in the next section.

\begin{defn}\label{def:orderinfo}
For a binary outer-labeled planar blob $B$ in a phylogenetic network
and a full $B$-informative collection of 4 taxon sets,
the \emph{4-taxon circular order information} is the collection of 4-taxon orders
specified by the induced outer-labeled planar quartet
networks for the 4-taxon sets.
More specifically, this information consists of:
\begin{enumerate}
\item\label{item:cut} for a quartet network without a 4-blob,
and hence a cut edge inducing a split $xy|zw$ of the taxa, 
the circular orders are $(x,y,z,w)$ and $(x,y,w,z)$.
\item\label{item:4blob} 
for a quartet network with a 4-blob,
the circular order is the unique order induced from any outer-labeled planar embedding
of the quartet network.
\end{enumerate}
\end{defn}

\begin{theorem}\label{thm:circorder}
Let $B$ be a binary outer-labeled planar blob in a phylogenetic network.
Then 4-taxon circular order information on a full $B$-informative set
of 4-taxon sets determines
the unique circular order of its articulation nodes induced by all
outer-labeled planar embeddings of the blob.
\end{theorem}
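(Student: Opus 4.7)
The plan is to recover the unique circular order of the articulation nodes $V_0 = \{a_1, \dots, a_m\}$ of $B$ by first extracting pairwise adjacency relations from the 4-taxon information, and then observing that these adjacencies reconstruct the full cycle.

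First I would establish a consistency principle: for any $B$-informative 4-taxon set $\{x_{i_1}, x_{i_2}, x_{i_3}, x_{i_4}\}$ with $x_{i_k}$ in the block associated with articulation node $a_{i_k}$, the 4-taxon circular order information is consistent with the restriction of the unique circular order of $V_0$ to $\{a_{i_1}, a_{i_2}, a_{i_3}, a_{i_4}\}$. Concretely, in case 2 of \Cref{def:orderinfo} (a 4-blob induced quartet), the reported circular order of the four taxa matches the restriction; in case 1 (a cut edge), the displayed split separates the four articulation nodes into two arcs of the cyclic order, rather than into the ``diagonal'' pairs. This follows by extending an outer-labeled planar embedding of $B$ to one of the whole network $N$ (gluing embeddings of other blobs at articulation nodes), restricting to the induced quartet network which inherits outer-labeled planarity, and applying \Cref{cor:uniquecirc} together with \Cref{lem:4taxonset-displayedsplits}.

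Next I would recover pairwise adjacency on $V_0$. Define $a_i$ and $a_j$ to be \emph{adjacent} if they are consecutive in the circular order of $V_0$. The claim is: $a_i$ and $a_j$ are adjacent if and only if, for every pair $a_k, a_l \in V_0 \setminus \{a_i, a_j\}$, the 4-taxon information on the corresponding $B$-informative 4-taxon set places the taxa from $X_i$ and $X_j$ on the same side of the displayed split (case 1) or adjacent in the induced circular order (case 2). The forward direction is immediate from the consistency principle. For the converse, if $a_i, a_j$ are not adjacent in the cyclic order, then removing them partitions $V_0 \setminus \{a_i, a_j\}$ into two nonempty arcs; choosing $a_k$ from one arc and $a_l$ from the other produces a restricted 4-taxon circular order in which $a_i$ and $a_j$ are separated, so the 4-taxon information places them on different sides of the split.

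Finally, the adjacency relation forms a 2-regular graph on $V_0$ where each node has exactly two neighbors. Since the adjacencies all originate from a single cyclic order on $m$ elements, this graph is a single $m$-cycle, uniquely determining the circular order up to reversal and cyclic rotation. The main obstacle is the consistency principle: the induced quartet network may look very different from $B$ itself, since restriction to four taxa can eliminate many internal edges and even collapse a substantial portion of $B$ into a single cut edge. Verifying carefully that the cyclic order of the articulation nodes on the outer frontier is preserved under this restriction, and that when a cut edge appears it must correspond to a non-diagonal split of the restricted cyclic order, is the technical heart of the argument.
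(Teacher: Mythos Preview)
Your approach is correct and genuinely different from the paper's. The paper proves \Cref{thm:circorder} by induction on the number of \emph{lowest nodes} of $B$, using the structural machinery of funnels and augmenting lowest nodes (\Cref{lem:augment,lem:counthcn}): one deletes the funnel of an augmenting lowest node $v$, applies the inductive hypothesis to the resulting blob $B'$, and then carefully argues that specific 4-taxon sets (always involving two or three lowest nodes) induce quartet networks with 4-blobs, which allows the articulation nodes on $v$'s funnel to be inserted into the order for $B'$. The paper explicitly uses only item~\ref{item:4blob} of \Cref{def:orderinfo} and remarks that its argument is non-algorithmic. By contrast, your route bypasses the lowest-node machinery entirely: once the consistency principle is in hand, the pairwise adjacency test is purely combinatorial and yields an explicit reconstruction. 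Your argument uses both items of \Cref{def:orderinfo}, is more elementary, and gives an algorithm.

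Both approaches ultimately lean on the consistency principle (that the 4-taxon information matches the restriction of $B$'s circular order), and you are right to flag it as the technical heart. Two points to tighten in your sketch: first, extending an outer-labeled planar embedding of $B$ to all of $N$ presumes the other blobs of $N$ are outer-labeled planar, which is not assumed; you only need to attach the pendent parts of the quartet network (chains of 2-blobs outside $B$), and these can be drawn as simple arcs regardless of their internal structure since they do not affect the 4-blob's circular order. Second, for case~\ref{item:cut}, \Cref{lem:4taxonset-displayedsplits} alone gives only that the split network is congruent with the two non-diagonal orders; to conclude it is \emph{not} congruent with the diagonal one you need \Cref{lem:circ4taxontree} (or the planarity argument that a cut edge in a planar embedding places the two pairs in disjoint regions of the outer face). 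With these adjustments your consistency argument goes through, and the adjacency reconstruction is then straightforward.
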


In the proof below, we in fact use only information from 4-taxon sets that
induce a 4-blob (item \ref{item:4blob} in Definition \ref{def:orderinfo}).
While order information from quartet networks without 4-blobs is potentially
useful for improved data analyses,
it is redundant for establishing identifiability of the blob's circular order.

Also, while our argument establishes identifiability of the
circular order, it does not suggest an efficient algorithmic way of obtaining the order.
We only show that the exact 4-taxon order information is compatible with no 
circular order than  the one arising from an outer-labeled planar embedding of the full blob.

\begin{proof}[Proof of Theorem \ref{thm:circorder}]
{As noted in \cref{ssec:networks}, we may assume the network is rooted.} 
We proceed by induction on the number $n$ of lowest nodes of the blob.

In the base case of $n=1$,  there is only one lowest node $v$. If there are fewer than 4 articulation nodes, the result is trivial. Otherwise,
for any three articulation nodes $x_1,x_2,x_3$ other than $v$, the induced quartet network on taxa chosen from the taxon blocks for $v,x_1,x_2,x_3$ has a 4-blob.
This is because by Lemma \ref{lem:abovehybrid} all edges in the blob are ancestral to $v$,
so no edges are lost in passing to the induced network.
Thus by assumption we know the circular suborder of $\{v,x_1,x_2,x_3\}$ for every such set. From these it is straightforward to deduce the unique circular order on the full set of articulation nodes.

Now let $n\geq 2$ and assume the result for blobs with $n-1$ lowest nodes. Again the result is trivial unless there are at least 4 articulation nodes, so we consider only that case.
By Lemma \ref{lem:augment} there is at least one lowest node $v$ that augments the blob.
Let $B'$ be the subgraph of $B$ obtained by deleting the funnel of $v$,
so $B'$ is a blob on the induced network $N'$ on all
non-descendents of $v$'s funnel edges.
By Lemma \ref{lem:counthcn}.
$B'$  has exactly  $n-1$ lowest nodes, all inherited from $B$. 
Thus a unique circular order of the articulation nodes of $B'$ is determined by induction.
 
We now must show the articulation nodes of $B$ on the funnel of $v$ can be uniquely placed to extend the order for $B'$.
(Note however that we do not know which node is $v$, nor which articulation nodes lie on $B'$ or the funnel of $v$;
we simply have a full $B$-informative
collection of sets of 4 taxa which determines the unique circular
order on some subset of the articulation nodes of $B$ which includes all those on the unknown $B'$, and we want to show that we have enough information to extend
the circular order to the full set of articulation nodes.)

We consider two cases:

\smallskip

If $B'$ has only one lowest node, $u$, then taxa from the blocks for  $u,v$ and any two other articulation nodes $x,y$ of
$B$ induce a quartet network with a 4-blob. This is because 
every edge in $B$ must be ancestral to $u$ or $v$ by Lemma \ref{lem:abovehybrid}, so passing to an induced quartet
network on taxa in the blocks for these four articulation nodes retains all edges in the blob, and thus produces no cut nodes from the blob.
Since we have circular orders for each such choice of taxa, varying $x,y$ is
enough to determine a full circular order for the blob.
Note that in this case we did not need to use the already-determined
circular order on $B'$.

 \smallskip
 
If $B'$ has at least 2 lowest nodes, we consider two subcases
depending on the location of $v$ in the frontier of the embedded blob.
In the first, $v$ is between neighboring lowest nodes
$u,w$, with the LSA of the blob $L$ not between $u$ and $v$ or between $v$ and $w$.
In the second{,} $L$ is an articulation node,
there are no lowest nodes between $v$ and $L$ on
one side, and on the other the neighboring lowest nodes are $u$ and then $w$
(see \Cref{fig:updownfrontier}).

\begin{figure}
\centering
\includegraphics[scale=1.2]{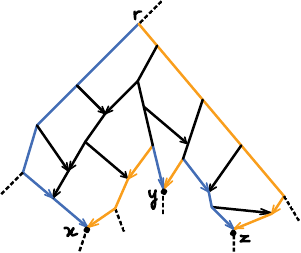}
\caption{
The frontier of an outer-labeled planar embedding of a blob is composed of an
alternating succession
of down- and up-paths (shown in alternating blue and orange),
with one pair for each lowest node.
With 3 lowest nodes in this blob $x$, $y$ and $z$ (dots),
there are 3 down- and 3 up-paths.
Along these paths are all four combinations of tree/hybrid articulation/non-articulation nodes.
Node $y$ illustrates subcase 1 in the proof of Theorem~\ref{thm:circorder},
as a lowest node that augments the blob and is
is separated from the blob's LSA by lowest nodes along the frontier.
Node $z$ illustrates subcase 2 as a augmenting lowest node between the LSA
of the blob and another lowest node along the frontier.
} \label{fig:updownfrontier}
\end{figure}

In the first  subcase, with $v$ between $u$ and $w$,
the lowest nodes  $u,v,w$ and any fourth articulation node $x$ on $B$  lead to an induced quartet network with a 4-blob. To see this, note
this quartet network includes the three maximal biconnected sets $B_u,B_v,B_w$ containing $u,v,w$ but composed only of edges ancestral to $u,v,w$. These also contain the funnels of $u,v,w$.
Moreover, since $B_u,B_v$ (respectively $B_v,B_w$) share the edge(s) parental to the node at which the up-paths in the frontier of $B$ from 
$u,v$  (respectively $v,w$) meet, $B_u\cup B_v\cup B_w$ is biconnected. Thus $u,v,w$ are articulation nodes on a single blob.
 
Varying the articulation node $x$ among the others on $B'$ (if any exist) shows
that the location of $v$ in the order from $B'$ can be determined.
Then varying $x$ among those on the funnel of $v$ shows we can determine  which are on the $u$- and $w$-sides of $v$.
If there is more than one articulation node, say $x_1,x_2$ on $B$ between $u$ and $v$, then $u,v,x_1,x_2$ induce a quartet network with a 4-blob, since these are all articulation nodes on the blob composed of edges ancestral to $u$ or $v$ (again using that in $B$ there is an edge ancestral to both $u$ and $v$). The order of $u,v,x_1,x_2$ for all such $x_1,x_2$ between $u$ and $v$ is sufficient to order all articulation nodes between $u$ and $v$. Similarly, the articulation nodes between $v$ and $w$ can be ordered from quartets, completing this case.

In the second subcase, where $v$ is between the blob LSA and neighboring
lowest node $u$, with $w$ the other lowest node neighboring $u$ we proceed similarly.
Considering $v,u,w$ and any fourth articulation node $x$ on $B$ leads
to an induced quartet network with a 4-blob, as above. Varying $x$ among
articulation nodes on $B'$ determines the location of $v$ with
respect to all those of $B'$ from quartet orders.
Then varying $x$ among those articulation nodes on the funnel of $v$ determines
whether they lie on the LSA side or the $u$ side of $v$.
Then considering induced blobs on $v,u,x_1,x_2$ where the $x_i$ are on a common
side of the funnel of $v$, is sufficient to obtain the full order.
\end{proof}


\section{Circular order from genomic data}
\label{sec:genomicdata}

The previous section's main result, Theorem \ref{thm:circorder},
states that the order of branching of subnetworks around a blob can be identified
from quartet information for an outer-labeled planar network.
To relate this to identifiability from biological data, we
discuss here several frameworks in which the 4-taxon circular order information of
Definition \ref{def:orderinfo} is itself identifiable,
from various data types and model assumptions.

\begin{defn}[gene tree models]
\label{def:gene-tree-models}
 We consider the following models for how gene trees arise from a given
 rooted phylogenetic network $N^+$.
 To generate gene trees with branch lengths in substitutions per site,
 each edge $e$ in $N^+$ is assigned
 a length $g(e)\geq 0$ in number of generations
 (with $g(e)>0$ for tree edges),
 an effective population size $\popsize(e)>0$
 and a mutation rate $\mu(e)>0$ per site per generation.
 \begin{enumerate}
 \item \emph{Displayed tree} (DT) model:
  gene trees are drawn from the set of
  trees displayed in $N^+$, with displayed tree $T$ having probability
  equal to the product of inheritance probabilities of all edges in $T$:
  $$\P{T} = \gamma(T) = \prod_{e\in T}\gamma(e).$$
  Each edge $e$ in $T$ maps to a full edge in $N^+$ and is assigned
  length $\ell(e)=\mu(e) g(e)$. 
 \item \emph{Network multispecies coalescent model with
independent inheritance} (NMSCind): gene trees evolve according to the coalescent
  model within each population. At a hybrid node with parental edges
  $e_1,\dots,e_m$, 
  each lineage is inherited from population $e_k$ ($1\leq k\leq m$) with
  probability $\gamma(e_k)$, 
  independently of the other lineages \cite{Luay2012}.
Each gene tree edge $e'$ formed in network edge $e$ is assigned length $\ell(e')=\mu(e)\,g(e')$. 
 \item \emph{Network multispecies coalescent model with
common inheritance} (NMSCcom): gene trees evolve according to the coalescent
  model within each population. At a hybrid node with parental edges
  $e_1,\dots,e_m$, 
  all lineages of a given gene are inherited from the same population $e_k$
  ($1\leq k\leq m$), chosen with probability $\gamma(e_k)$, and thus form within trees displayed on the network.
  \cite{2011GerardGibbsKubatko}.
The length of a gene tree edge is assigned as in  item 2. 
  \end{enumerate}
\end{defn}

Under the DT model, there is no incomplete lineage sorting: lineages
coalesce immediately at the time of speciation.
The traditional NMSC model is what we denote NMSCind.
A network coalescent model with correlated inheritance, depending on a parameter $\rho\in[0,1]$, was defined in
\cite{2023fogg_phylocoalsimulations}, with both NMSCind ($\rho=0$) and NMSCcom ($\rho=1$) as submodels.

These gene tree models are combined with molecular substitution models,
to allow for genetic sequence data. Rate variation across lineages can arise from
variable sustitution rates $\mu$, and also from variable generation times,
captured by a variable number of generations across different paths
between the same pair of nodes.
To model rate variation across genes, we further consider that each gene
has a relative rate $R>0$ drawn independently from some distribution, with mean 1.

More nuanced models could be considered, in which the population size $\popsize(e)$
and mutation rate $\mu(e)$ vary continuously over time instead of being
considered as constant within an edge (and variable across edges).
Such models would require heavier notations, but can be approximated in
our framework with the addition of nodes in the network to subdivide each
edge into multiple edge segments with different parameters.

Each of the following subsections considers a different type of data from these models, applying the earlier results of this work to show circular orders for blobs are identifiable.
Since our focus is on identifiability,
and not on inference from data, we consider only expectations associated to a data type.
However, for each one, the required information
can be consistently estimated from a sample of gene trees or
from multilocus molecular sequences,
assuming both gene sequence length and number of genes approach infinity.
Thus each is potentially useful in practical inference.

\subsection{Average distances from genes}\label{subsec:avdist}

The \emph{average genetic distance}, or expected path length between two taxa
on gene trees, is a metric that was studied in \cite{2023XuAne_identifiability},
and shown to be useful for identifying the tree of blobs of a network.
This distance, 
in units of substitutions per site, is formally defined
for a metric species network $N^+$ and one of the three gene tree
models $M$ above by
\begin{equation*}
  \mathfrak{D}_{N^+}^M(x,y) = \E{ R \, D_T(x,y) },
\end{equation*}
where the expectation is taken over random gene trees $T$ (unscaled by gene-specific rates) generated
on $N^+$ under model $M$, and over
gene-specific relative rates $R$.
Since $R$ is independent of $T$, with mean 1,
this simplifies to
\begin{equation}\label{eq:avdist}
  \mathfrak{D}_{N^+}^M(x,y) = \E{ D_T(x,y) }.
\end{equation}

\smallskip

For any set of 4 taxa $a,b,c,d$ and for a distance $D$ on these taxa,
the \emph{4-point distance sums-of-pairs} is the vector
\[\dsp = \dsp_{abcd} (D)= (\dsp_{ab|cd}, \dsp_{ac|bd}, \dsp_{ad|bc})\]
where 
\[\quad \dsp_{ab|cd} = \dsp_{ab|cd}(D) =D(a,b)+D(c,d) .\]

Recall that a distance $D$ is a \emph{tree distance} if there exists a tree $T$ with edge lengths $\ell(e)$,
such that for all taxon pairs $x,y$,
$$D(x,y)=D_T(x,y) = \sum_{ e\in\mathrm{path}_T\; x\leftrightarrow y} \ell(e).$$
Distances on $T$ are exactly those satisfying the 4-point condition \cite{Semple2005}:
for any quartet $xy|zw$ displayed in $T$,
$$\dsp_{xy|zw} < \dsp_{xz|yw} = \dsp_{xw|yz}.$$

\smallskip
We introduce a notion analogous to that of a (hard)-anomalous network as defined
in \cite{2024Ane-anomalies} in the context of quartet concordance factors.

\begin{defn}
A network $N^+$ is \emph{average-distance-anomalous} under a model
if  there are two quartet trees, say $ab|cd$ and $ac|bd$,
with the first displayed on $N^+$ and the second not,
yet for the average genetic distance
$$\dsp_{ab|cd} > \dsp_{ac|bd}.$$
\end{defn}

From the 4-point condition for trees one might naively suppose
that the SP corresponding to some displayed quartet tree would produce the
smallest SP on the quartet network. An
average-distance-anomalous network is one contradicting that supposition.

Under the DT and NMSCcom models, no network is average-distance-anomalous,
as follows from the next proposition.
However, for the NMSCind model a network may be anomalous.
Although we delay giving an example of such a network until the end of this
section, this necessitates the use of the following
assumption.

\begin{assump}\label{assump:noAnomAD} (NoAnomAD)
Under the model NMSCind of gene tree evolution, the metric network
is such that none of its induced 4-taxon networks are average-distance-anomalous.
\end{assump}

\begin{prop}\label{prop:4taxadist}
Under each of the models DT, NMSCcom, and NMSCind+NoAnomAD
on a rooted metric binary network $N^+$,
average genetic distances determine the order information of \Cref{def:orderinfo}
for each of its outer-labeled planar blobs $B$.
Specifically, each induced quartet network for  a $B$-informative set displays exactly 1 or 2  tree topologies, and, with taxa $a,b,c,d$:
 \begin{enumerate}
 \item\label{item:claim1AD} If only $ab|cd$ is displayed, then
$SP_{abcd}=(p,q,q)$ with $p<q$ and the quartet network on $a,b,c,d$
has a cut edge separating $a,b$ from $c,d$; and circular orders
$(a,b,c,d)$ and $(a,b,d,c)$.
 \item \label{item:claim2AD} 
If $ab|cd$ and $ac|bd$ are displayed, then $SP_{abcd}=(p,q,r)$, with
$p,q< r$ and the quartet network has a 4-blob with circular order $(a,b,d,c)$.
\end{enumerate}
\end{prop}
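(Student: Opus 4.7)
The plan is to first fix the structure of the induced quartet network for each $B$-informative 4-taxon set using \Cref{lem:4taxonset-displayedsplits}: because $N^+$ is binary, the induced quartet network either (i) contains a cut edge separating the taxa into two pairs (displaying one quartet), or (ii) contains a 4-blob, which inherits outer-labeled planarity from $B$ and hence displays exactly two quartets. Once the structure is fixed, the claimed circular orders follow from \Cref{lem:4taxonset-displayedsplits} in case (i) and from \Cref{lem:circ4taxontree} in case (ii) (the two displayed quartets $ab|cd$ and $ac|bd$ force circular order $(a,b,d,c)$). It remains to establish the $SP$ inequalities for each of the three models.

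\noindent\textbf{Claim~\ref{item:claim1AD} (cut edge separating $\{a,b\}$ from $\{c,d\}$).} Under DT every gene tree has topology $ab|cd$, so the 4-point condition on each displayed tree $T$ gives $SP^T_{ab|cd} < SP^T_{ac|bd} = SP^T_{ad|bc}$, and averaging against the weights $\gamma(T)$ preserves both the strict inequality and the equality. Under NMSCcom the same averaging works once one checks that the expected pairwise MSC distances on a species tree of topology $ab|cd$ satisfy the same relations: the $a$--$c$, $a$--$d$, $b$--$c$, $b$--$d$ pairs have their MRCA coalescent waiting time in the same (root) population, giving $SP^T_{ac|bd}=SP^T_{ad|bc}$, and the strictly earlier species-tree divergences of $\{a,b\}$ and $\{c,d\}$ yield $SP^T_{ab|cd}<SP^T_{ac|bd}$. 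Under NMSCind, I would use the cut edge $e=uv$ to additively decompose each cross-pair expected distance. With $N_1$ the side containing $\{a,b\}$ (meeting $e$ at $v$) and $N_2$ the side containing $\{c,d\}$ (meeting $e$ at $u$), the independence of inheritance choices and the fact that $a,b$-lineages cannot coalesce with $c,d$-lineages below $e$ yield a decomposition
\[ \mathfrak D(x,y) = \alpha_x + \ell(e) + \mathfrak D_{N_2}(u,y),\qquad x\in\{a,b\},\ y\in\{c,d\}, \]
in which $\alpha_x$ depends only on $x$ and $N_1$ and $\mathfrak D_{N_2}(u,y)$ only on $y$ and $N_2$. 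Summing gives $SP_{ac|bd}=SP_{ad|bc}$ directly, and the strict inequality $SP_{ab|cd}<SP_{ac|bd}$ is precisely \Cref{assump:noAnomAD} applied to the induced quartet network (since $ac|bd$ is not displayed).

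\noindent\textbf{Claim~\ref{item:claim2AD} (4-blob, $ab|cd$ and $ac|bd$ displayed).} Under DT and NMSCcom the plan is to write
\[ SP_{ad|bc}(\mathfrak D) - SP_{ab|cd}(\mathfrak D) = \sum_T \gamma(T)\bigl(SP^T_{ad|bc} - SP^T_{ab|cd}\bigr), \]
and observe that each summand is strictly positive when $T$ has topology $ab|cd$ (by the 4-point condition, or its NMSC analogue) and vanishes when $T$ has topology $ac|bd$; because $ab|cd$ is displayed at least one strictly positive term appears, forcing the difference to be strictly positive. Symmetrically $SP_{ad|bc}(\mathfrak D) > SP_{ac|bd}(\mathfrak D)$, giving the $(p,q,r)$ pattern with $p,q<r$. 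Under NMSCind+NoAnomAD both inequalities are immediate: $ad|bc$ is not displayed, while $ab|cd$ and $ac|bd$ are.

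\noindent\textbf{Main obstacle.} The key technical step is the NMSCind half of Claim~\ref{item:claim1AD}: there is no single displayed tree to exploit a symmetry on, so the equality $SP_{ac|bd}=SP_{ad|bc}$ must be obtained from the independence structure of NMSCind across the cut edge. The delicate point is to justify the additive decomposition of expected genetic distances carefully, in particular to confirm that $\alpha_x$ (the expected contribution of the $x$-lineage's random up-path in $N_1$) and $\mathfrak D_{N_2}(u,y)$ (the expected distance in the induced NMSCind process on $N_2$ from a virtual lineage at $u$ to $y$) are independent of each other and of the partner taxon on the opposite side of $e$. The NMSCcom half of Claim~\ref{item:claim1AD} also requires the mild but non-trivial stochastic 4-point fact for expected MSC distances on a species tree, which should however follow from the standard coupling/monotonicity of the coalescent.
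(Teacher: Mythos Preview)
Your overall strategy matches the paper's: fix the quartet structure via \Cref{lem:4taxonset-displayedsplits}, decompose over displayed trees for DT and NMSCcom, and invoke NoAnomAD for the strict inequalities under NMSCind. Claim~\ref{item:claim2AD} is handled identically in both.

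There is one genuine gap in your NMSCcom argument for Claim~\ref{item:claim1AD}. The assertion that ``the $a$--$c$, $a$--$d$, $b$--$c$, $b$--$d$ pairs have their MRCA coalescent waiting time in the same (root) population'' is false when the displayed species tree is the caterpillar $(((a,b),c),d)$: there the $a,c$ lineages may coalesce in the internal edge below the root while $a,d$ cannot, and the relevant populations may have different $\popsize$ and $\mu$. The paper instead conditions on the event $E$ that a coalescence occurs in the edge above the $\{a,b\}$ cherry (and above $\{c,d\}$ in the balanced case). Given $E$, the gene tree has topology $ab|cd$ and the gene-tree 4-point condition gives both $SP^T_{ac|bd}=SP^T_{ad|bc}$ and the strict inequality; given $\bar E$, the $a,b,c$ lineages become exchangeable above their common node, which forces all three conditional expected $SP$s to agree (the taxon-specific pendant contributions appear once in each sum and cancel). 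This exchangeability-after-conditioning argument is what you need in place of your ``same root population'' sketch; ``coupling/monotonicity'' does not deliver the equality.

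Conversely, your cut-edge decomposition for the NMSCind equality in Claim~\ref{item:claim1AD} is a genuine addition: the paper writes only that the claim ``results from the NoAnomAD assumption,'' which on its face supplies the strict inequality but not $SP_{ac|bd}=SP_{ad|bc}$. Your argument, using marginalization of the coalescent to two lineages and additivity across the cut edge, fills that gap. One minor point: your displayed formula assumes the root lies on the $\{c,d\}$ side of $e$; if the root lies on the $\{a,b\}$ side the decomposition reads $\mathfrak D(x,y)=\mathfrak D_{N_1}(x,v)+\ell(e)+\beta_y$ instead, but the conclusion follows symmetrically.
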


\begin{proof} Let $a,b,c,d$ be the taxa on $N^+$. 
By Lemma \ref{lem:4taxonset-displayedsplits}, we may assume either
$N^+$ has a 4-blob, with circular order $(a,b,c,d)$, or
has a cut edge separating $a,b$ from $c,d$, and hence has orders
$(a,b,c,d)$ and $(a,b,d,c)$.

  For the DT and NMSCcom model, the average distance
  on $N^+$ can be expressed as
  a weighted sum over displayed trees, denoted $S^+$ here (for ``species'' tree):
 \begin{equation}\label{eq:Dsum}
 \mathfrak{D}_{N^+}^M(x,y) = \sum_{S^+}\gamma(S^+)\; \mathfrak{D}_{S^+}^M(x,y).
 \end{equation}
  Under the DT model, $\mathfrak{D}_{S^+}^\mathrm{DT} = D_{S^+}$,
  and under the NMSCcom model, the distance  on each displayed tree
  is from the MSC on $S^+$ as a species tree:
  $\mathfrak{D}^{\mathrm{NMSCcom}}_{S^+}=\mathfrak{D}_{S^+}^\mathrm{MSC}$.
 
Recall that all tree edges of $N^+$ have positive length.
Moreover, when degree-2 nodes are suppressed in a displayed tree $S^+$,
each resulting edge arises from conjoining several edges,
the highest of which must be a tree edge,
yielding only positive length edges.
In analyzing the above sum, then, we assume all
edge lengths of $S^+$ are positive.

By \Cref{eq:Dsum}, it is enough to show claim \ref{item:claim1AD} in the case $N^+=S^+$ is a tree.
For the DT model, this is immediate from the 4-point condition, while for the NMSCind model it results from the NoAnomAD assumption.
To show it holds for the MSC observe that species tree $S^+$ is, without loss of generality, either
asymmetric, $(((a, b), c), d)$,
or symmetric, $((a, b), (c, d))$.
In both cases let $e$ be the edge above only $a, b$ and in the second case
$\tilde e$ the edge above $c, d$.
Under from the MSC on $S^+$, consider the event $E$ that a coalescence
occurs on $e$ or $\tilde e$ (if $S^+$ is symmetric).
Conditional on $E$, a random gene tree $T$ must have topology $ab|cd$, so that
$$\mathbb E(D_T (a, b) + D_T (c, d) | E) < \mathbb E(D_T (a, c) + D_T (b, d) | E) =\mathbb  E(D_T (a, d) + D_T (b, c) | E).$$ 
Conditional on the complement $\bar E$, the $a,b,c$ lineages reach a common node and become
exchangeable under the coalescent model, so
$$\mathbb E(D_T (a, b) + D_T (c, d) | \bar E) = \mathbb E(D_T (a, c) + D_T (b, d) | \bar E) = \mathbb E(D_T (a, d) + D_T (b, c) | \bar E) .$$
Claim \ref{item:claim1AD} then follows.

For \ref{item:claim2AD} , by Lemma~\ref{lem:4taxonset-displayedsplits}, if $N^+$ has a 4-blob
it displays exactly 2 quartets, $ab|cd$ and $ac|bd$.
Applying claim 1 to each displayed tree, then, we see that
$\dsp$ is a convex sum of triples of the form $(s_i,S_i,S_i)$ and $(S_i,s_i,S_i)$
where $s_i<S_i$ for each. Thus the last entry of $\dsp$,
corresponding to $ad|bc$, is strictly largest, establishing claim 2.
\end{proof}

\begin{prop}\label{prop:ToBdist} 
  Let $N^+$ be an outer-labeled planar metric rooted binary
  phylogenetic network.
  Under the DT, NMSCcom, and NMSCind+NoAnomAD models,
  the reduced unrooted tree of blobs of $N^+$
  is identifiable from average genetic distances.
\end{prop}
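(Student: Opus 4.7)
The plan is to reduce identifiability of the reduced unrooted tree of blobs of $N^+$ to identifiability of induced 4-taxon quartet information, which Proposition \ref{prop:4taxadist} already delivers from average genetic distances. The reduced unrooted tree of blobs $T$ is combinatorially determined by the function that assigns to each 4-taxon subset either a quartet split (when $T$ has an internal edge realizing that split) or ``unresolved'' (when the four leaves meet at a common high-degree internal node of $T$, coming from a non-trivial blob of $N^+$). So I need only show that this assignment can be extracted from the $\mathfrak{D}_{N^+}^M$-values, and that the resulting assignment is exactly the quartet information of $T$.

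For the first part, for each 4-taxon subset $Y=\{a,b,c,d\}$ I would compute
\[
\dsp_Y = \bigl(\dsp_{ab|cd},\,\dsp_{ac|bd},\,\dsp_{ad|bc}\bigr)
\]
from $\mathfrak{D}_{N^+}^M$ and examine its ordering. By Proposition \ref{prop:4taxadist}, either $\dsp_Y$ has a unique strictly smallest coordinate with the other two equal (case \ref{item:claim1AD}, cut-edge), or it has a unique strictly largest coordinate with the other two strictly smaller (case \ref{item:claim2AD}, 4-blob). These patterns are mutually exclusive: in case 1 the maximum of $\dsp_Y$ is attained twice, while in case 2 it is attained exactly once. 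Hence from $\dsp_Y$ alone I can decide which case $Y$ falls in, and in case 1 I read off the displayed quartet split.

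For the second part, the 4-taxon assignment extracted this way is precisely the quartet information of $T$. The induced 4-taxon network of $N^+$ on $Y$ has a cut edge realizing a split $ab|cd$ if and only if $T$ has an internal edge realizing that split, because cut edges of $N^+$ correspond to internal edges of the tree of blobs and the induced-network construction preserves exactly those cut edges that separate the elements of $Y$. Identifiability of $T$ then follows from the standard fact that an unrooted tree, allowing high-degree internal nodes, is determined by its full ``quartet-or-unresolved'' information, as already invoked for trees of blobs in \cite{2023XuAne_identifiability,ABMR2022}.

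The main obstacle is justifying the case distinction in the first step and excluding other patterns for $\dsp_Y$. For DT and NMSCcom this is immediate, since average distances are convex combinations over displayed trees and the tree 4-point condition forces one of the two admissible patterns on every induced 4-taxon subnetwork. Under NMSCind, anomalous 4-taxon networks can violate the required strict inequalities; the NoAnomAD hypothesis is precisely what is needed to exclude this and keep the case analysis unambiguous. Once NoAnomAD is in force, the rest of the argument proceeds as sketched above.
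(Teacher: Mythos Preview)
Your argument has a genuine gap in the ``second part.'' The biconditional you assert --- that the induced 4-taxon network $N_Y$ has a cut edge realizing $ab|cd$ if and only if the reduced unrooted tree of blobs $T$ has an internal edge realizing that split --- fails in one direction. The forward implication (edge in $T$ $\Rightarrow$ cut edge in $N_Y$) is fine, but the converse is false. Consider a $5$-sunlet with hybrid taxon $h$ and taxa $a,d,c,b$ in order around the cycle. The tree of blobs $T$ is a $5$-star, so $T$ has no internal edge. Yet inducing on $Y=\{a,b,c,d\}$ deletes $h$ together with the two hybrid edges (they lie on no up-down path among $a,b,c,d$), breaking the cycle and leaving the resolved quartet tree $ad|bc$. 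Thus $\dsp_Y$ has pattern $(p,q,q)$ with $p<q$, and your rule declares the quartet ``resolved as $ad|bc$'', whereas the correct quartet information of $T$ is ``unresolved'' on every $4$-subset. Your extracted assignment is therefore not the quartet information of $T$, and is not even consistent with any tree (no tree on $\{a,b,c,d,h\}$ has $ad|bc$ resolved while all quartets containing $h$ are stars).

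The paper avoids this by going through the \emph{distance split tree} $T'$ of \cite{2023XuAne_identifiability}: a split $A|B$ is included in $T'$ only when the $4$-point condition $\dsp_{a_1a_2|b_1b_2}<\dsp_{a_1b_1|a_2b_2}=\dsp_{a_1b_2|a_2b_1}$ holds for \emph{every} crossing quartet $\{a_1,a_2,b_1,b_2\}$. In the $5$-sunlet example the split $ad|bch$ is rejected because on $\{a,d,c,h\}$ the induced network has a $4$-blob and $\dsp$ has a unique largest entry, violating the required equality. Theorem~8 of \cite{2023XuAne_identifiability} gives $T'$ as a refinement of $T$ for general networks; the outer-labeled planar hypothesis together with Proposition~\ref{prop:4taxadist} is then used only to rule out \emph{strict} refinement, by producing a $4$-blob quartet on which the $4$-point equality must fail. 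Your per-quartet reading of $\dsp_Y$ is still the correct first step; what is missing is the aggregation across quartets that filters out ``spurious'' resolutions arising when taxon removal destroys a blob.
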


\begin{proof} 
  For the DT model,
  we follow arguments from \cite{2023XuAne_identifiability} who defined the
  distance split tree $T'$ constructed from $\mathfrak{D}_{N^+}^\mathrm{DT}$ and
  the associated sums of distance pairs
  $\dsp = \dsp(\mathfrak{D}_{N^+}^\mathrm{DT})$
  across all subsets of $4$ taxa.
  From their Theorem~8, $T'$ is a refinement of $T$,
  the reduced unrooted tree of blobs of $N^+$.
  Therefore, we simply need to prove that any edge $e$ in $T'$ is also in $T$. 
  To prove this, we consider the split $A|B$ induced by $e$ in $T'$.
  Suppose, by contradiction, that $e$ is not an edge in $T$.
  By Lemma 3.1.7 in \cite{Semple2005}, $e$ refines a node in $T$.
  In $N^+$, this node corresponds to an $m$-blob with $m\geq 4$,
  so we can select
  taxa $\{a_1,a_2,b_1,b_2\}$ with $a_1,a_2\in A$ and $b_1,b_2\in B$
  such that $N^+_{a_1,a_2,b_1,b_2}$ has a $4$-blob.
  By the definition of the distance split tree \cite{2023XuAne_identifiability},
  $A|B$ must satisfy the 4-point condition for
  $\mathfrak{D}_{N^+}^\mathrm{DT}$, so we must have that
  $\dsp_{a_1a_2b_1b_2} = (s,S,S)$ with $s\leq S$.
  But since $N^+_{a_1,a_2,b_1,b_2}$ is outer-labeled planar,
  $\dsp_{a_1a_2b_1b_2}$ has a single largest entry by
  Proposition~\ref{prop:4taxadist}:
  a contradiction. Hence, $T=T'$ can be identified from average
  distances.

  Now suppose $M$ is the NMSCcom
  or NMSCind+NoAnomAD model.
  Proposition \ref{prop:4taxadist} implies that
  $\Sigma(\mathfrak{D}_{N^+}^M)$,
  the set of quartet splits that satisfy the 4-point
  condition for metric $\mathfrak{D}_{N^+}^M$,
  is the same under $M$ as it is under the DT model.
  The distance split tree $T'$ depends on the input distance $D$ only
  through $\Sigma(D)$. Therefore under $M$, $T'$ will be the same and
  the rest of the argument for the DT model applies.
\end{proof}

This result extends Corollary~11 in \cite{2023XuAne_identifiability}
to outer-labeled planar networks of level higher than 1,
and to the NMSCcom and NMSCind+NoAnomAD models,
although with the restriction that the network is
binary.

Combining Propositions~\ref{prop:4taxadist}, \ref{prop:ToBdist} and
Theorem \ref{thm:circorder} we obtain the following.

\begin{cor}
  Let $N^+$ be an outer-labeled planar metric rooted binary phylogenetic
  network.
  Then the reduced unrooted tree of blobs and the circular order for each blob
  on $N^+$ are identifiable from its average pairwise distances,
  under the DT, NMSCcom and NMSCind+NoAnomAD models.
\end{cor}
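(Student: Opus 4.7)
The plan is to assemble the three ingredients already proven in the preceding two propositions and \Cref{thm:circorder}, and verify that the information they require is indeed available from the average pairwise distances $\mathfrak{D}_{N^+}^M$ under each of the three models.

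First, I would apply \Cref{prop:ToBdist} to identify the reduced unrooted tree of blobs $T$ of $N^+$ from $\mathfrak{D}_{N^+}^M$. The nodes of $T$ of degree at least $3$ correspond to non-trivial blobs of $N^+$, and the leaves of $T$ correspond to the taxa. From $T$ one reads off, for each blob $B$, its articulation nodes (the incident internal nodes of $T$) together with their associated taxon blocks (the taxon subsets in the connected components of $T$ obtained by deleting $B$). In particular, this lets me enumerate all $B$-informative $4$-taxon sets and, more importantly, any \emph{full} $B$-informative collection — for instance the collection of all $4$-taxon sets containing one representative taxon from each of four distinct blocks around $B$.

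Next, for each such $B$-informative $4$-taxon set $\{a,b,c,d\}$, the induced quartet network $N^+_{abcd}$ inherits outer-labeled planarity from $N^+$, so \Cref{prop:4taxadist} applies: from the three sums $\dsp_{ab|cd}, \dsp_{ac|bd}, \dsp_{ad|bc}$ computed from $\mathfrak{D}_{N^+}^M$ one can tell whether the induced quartet has a cut edge (one strictly smallest SP, the other two equal) or a $4$-blob (one strictly largest SP, the other two smaller), and in the latter case one can read off its unique circular order from which of the three sums is largest. This is exactly the $4$-taxon circular order information of \Cref{def:orderinfo}.

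Finally, having this information on a full $B$-informative collection for every non-trivial blob $B$, I apply \Cref{thm:circorder} blob by blob to recover the unique circular order of the articulation nodes of each outer-labeled planar blob $B$. Since $N^+$ is outer-labeled planar by hypothesis, each of its blobs is outer-labeled planar as a subgraph, so the theorem applies to every blob. Combining these orders with the already-identified tree of blobs yields both claimed identifiability statements. The only delicate point, rather than a genuine obstacle, is checking that passing to an induced quartet network preserves outer-labeled planarity and the relevant block structure, so that \Cref{prop:4taxadist} may legitimately be invoked in the form required by \Cref{thm:circorder}; this follows from the fact that induced subnetworks of outer-labeled planar networks remain outer-labeled planar and that \Cref{lem:4taxonset-displayedsplits} classifies exactly the two possible shapes of such induced quartets.
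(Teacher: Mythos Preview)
Your proposal is correct and follows essentially the same route as the paper: invoke \Cref{prop:ToBdist} to obtain the reduced tree of blobs, use \Cref{prop:4taxadist} on each $B$-informative $4$-taxon set to extract the order information of \Cref{def:orderinfo}, and then apply \Cref{thm:circorder} blob by blob. The paper's proof is somewhat terser, but your added remarks (on reading off taxon blocks from the tree of blobs, and on induced quartets inheriting outer-labeled planarity) are correct and harmless elaborations; one small imprecision is that degree-$3$ nodes of the reduced tree of blobs need not come from non-trivial blobs in a binary network, but since any $3$-element set has a unique circular order this does not affect the argument.
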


\begin{proof}
 The reduced unrooted tree of blobs of $N^+$ is identifiable
 from average distances by Proposition~\ref{prop:ToBdist},
 so we may focus on each blob individually by passing to the induced network
 on a subset of taxa.
 For every set of 4 taxa, we can determine the quartet order information of
 Definition \ref{def:orderinfo} from average distances
 by Proposition~\ref{prop:4taxadist}.
 Finally, applying Theorem \ref{thm:circorder} completes the proof.
\end{proof}

The claims of Proposition~\ref{prop:4taxadist} and the
results built on it do not hold for the  NMSCind model without the NoAnomAD assumption,
as the following proposition shows.

\begin{prop}
  Under the NMSCind model on an outer-labeled planar binary 
  phylogenetic network
  on 4 taxa, the two smallest entries of $\dsp$ need not indicate the circular order for a 4-blob.
\end{prop}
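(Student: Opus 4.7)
The plan is to prove this proposition by explicit construction of a counterexample: I would exhibit an outer-labeled planar binary rooted metric network $N^+$ on four taxa $\{a,b,c,d\}$ with a single 4-blob, together with a choice of edge-length, population-size and inheritance parameters, such that the NMSCind triple $\dsp = (\dsp_{ab|cd},\dsp_{ac|bd},\dsp_{ad|bc})$ has its two smallest entries in positions incompatible with the blob's actual circular order as dictated by \Cref{prop:4taxadist}.

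The first step is topological. In a 4-taxon 4-blob there are exactly four articulation nodes, one per taxon, so every lowest (hybrid) node of the blob has a unique descendant lineage, and NMSCind there coincides with NMSCcom. For the two models to genuinely differ on $N^+$, the blob must therefore contain at least one \emph{non-articulation} hybrid $h$ strictly ancestral within the blob to two or more of its articulation nodes, so that two uncoalesced gene lineages can arrive at $h$ and independently choose distinct parents. This forces $N^+$ to be of level at least two. I would fix such a topology explicitly: an outer-labeled planar 4-blob with circular order, say, $(a,b,d,c)$, containing an interior hybrid $h$ whose in-blob subtree reaches two of the articulation nodes (say those for $c$ and $d$), while the remaining articulations (for $a$ and $b$) attach elsewhere on the frontier.

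Next comes the NMSCind distance computation. Using the two-lineage coalescent within each edge, one writes $\mathfrak{D}_{N^+}^{\mathrm{NMSCind}}(x,y)$ as a $\gamma$-weighted sum, over all joint parental assignments at each encountered hybrid, of expected pairwise path lengths in the resulting coalescent history, with each contribution integrated against the edge mutation rates $\mu$ and durations $g$. Each of the three entries of $\dsp$ thus becomes an explicit function of the network parameters. The anomaly mechanism I would exploit is that at $h$ the two sibling lineages from $c$ and $d$ can, under independent inheritance, be routed to different parents of $h$, delaying their coalescence until much later in the network; this selectively inflates one of the ``correct-order'' entries (say $\dsp_{ab|cd}$) while leaving $\dsp_{ad|bc}$ comparatively unchanged.

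Finally I would tune parameters --- $\gamma(h)$ close to $1/2$, short edges and large effective population sizes within the inner cycle above $h$ to maximize incomplete lineage sorting, and generic choices elsewhere --- so that the inflated entry strictly exceeds $\dsp_{ad|bc}$. Then the two smallest entries of $\dsp$ correspond to a split pattern different from the one predicted by \Cref{prop:4taxadist}, so any procedure reading off the circular order from the two smallest entries would return the wrong order. The main obstacle is twofold: first, constructing a valid outer-labeled planar binary topology that hosts a non-articulation hybrid with multi-lineage descent (the level-1 case is genuinely insufficient); second, evaluating the NMSCind coalescent integrals and tuning parameters tightly enough to verify a strict inversion of the ordering, rather than a mere perturbation.
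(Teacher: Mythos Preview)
Your high-level strategy matches the paper's: exhibit an explicit outer-labeled planar 4-taxon network with a 4-blob on which the NMSCind average-distance triple has its largest entry at a displayed quartet, violating the pattern in \Cref{prop:4taxadist}. Your diagnosis that the blob must be level $\ge 2$, so that some hybrid is reached by two distinct uncoalesced lineages, is exactly the mechanism the paper exploits.

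Where you diverge is in the specific construction and in the analytic technique. You place the interior hybrid $h$ above the articulation nodes for $c$ and $d$ and aim to inflate $D(c,d)$ directly; the paper instead uses a level-2 blob with \emph{two} hybrids, one for $a$ and one for $b$, arranged so that when the $b$ lineage chooses one particular parent edge, both the $a$ and $b$ lineages arrive together at the remaining hybrid and then split independently under NMSCind. The paper's analysis proceeds by conditioning on the inheritance choice at the first hybrid (events $E_1,E_{21},E_{22}$), reducing to displayed subnetworks; on each, exchangeability of lineage subsets (e.g.\ $\{a,b,c\}$ above a given node) collapses most conditional expectations of $\dsp$ to equal entries, leaving a single asymmetric case that produces the strict inequality. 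A small-edge-length limit ($m\to 0$) is then used to kill the contribution of the complementary event. None of these devices---conditioning on inheritance to pass to displayed subnetworks, systematic use of exchangeability, and the $m\to 0$ limiting argument---appear in your plan, and they are what converts the intuition into a proof.

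The genuine gap is that your proposal stops at a plan: no topology is actually drawn, no distances are computed, and the parameter tuning is left as a hope. Your own proposed mechanism (inflate $D(c,d)$ by splitting $c,d$ at $h$) is not obviously sufficient, because the rerouting of the $c$ and $d$ lineages through different parents of $h$ also perturbs $D(a,c),D(a,d),D(b,c),D(b,d)$ and hence both $\dsp_{ac|bd}$ and $\dsp_{ad|bc}$; whether the net effect yields the required strict inversion depends entirely on the topology and parameters you have not yet specified. To complete a proof along your lines you would need to fix an explicit network and either compute the expectations directly or find symmetry/exchangeability arguments analogous to the paper's to control all three entries of $\dsp$ simultaneously.
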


\begin{proof}
Consider the network $N$ in \Cref{fig:ind-inherit-counterexp} (left), 
considered as rooted along edge $e_6$, although
our choice of parameters will make $\dsp = \dsp(\mathfrak{D}_{N}^\mathrm{NMSCind})$
independent of the root location.
Assume for simplicity population sizes $\popsize(e)=1$ and mutation rates $\mu(e)=1$ for all edges $e$,
so edge lengths $g(e)=\ell(e)$ represent both coalescent units and substitutions per site.
Also assume $g(e_a) = g(e_b) + g(e_1)$.
We will show that for  sufficiently small $m$,
$$\dsp_{ab|cd} > \dsp_{ac|bd} > \dsp_{ad|bc},$$ 
which is incompatible with the
circular order $(a,b,c,d)$ in the sense that it violates
the NoAnomAD Assumption~\ref{assump:noAnomAD}.

\begin{figure}
\centering
\includegraphics[scale=1.1]{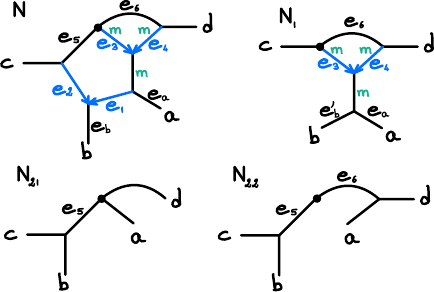}
\caption{Example network $N$ (left) on which the claims of Proposition~\ref{prop:4taxadist}
do not hold under the NMSCind model,
when the edge length $m$ is sufficiently small.
$N_1$, $N_{21}$ and $N_{22}$ are displayed in $N$.
The root (not shown) is on edge $e_6$.
}
\label{fig:ind-inherit-counterexp}
\end{figure}

  For a random gene tree $T$ from the NMSCind on $N$,
  let $E_1$ denote the event that the $b$ lineage chooses $e_1$;
  $E_{21}$ the event that $b$ chooses $e_2$ and $a$ chooses $e_3$;
  and $E_{22}$ the event that $b$ chooses $e_2$ and $a$ chooses $e_4$.
  We first show that for sufficiently small $m$, we have
  \begin{equation}
    \label{eq:E_1}
    \E{\dsp_{ab|cd}(D_T) \mid E_1} > \E{\dsp_{ac|bd}(D_T) \mid E_1} = \E{\dsp_{ad|bc}(D_T) \mid E_1}.
  \end{equation}

  Note that the distribution of $\dsp(D_T)$
  from $N$ conditional on $E_1$ is the same as
  the distribution of $\dsp(D_T)$ from the network $N_1$ displayed in $N$,
  in Figure \ref{fig:ind-inherit-counterexp}.
  The equality in \eqref{eq:E_1} follows from
  $N_1$ being symmetric with respect to $a, b$ and $g(e_a) = g(e'_b)$.

  To show the first inequality in~\eqref{eq:E_1},
  we consider the event $A$ that no lineages coalesce in the three edges
  of length $m$, and consider the
  expectation of $\dsp(D_T)$ from $N_1$ conditional on $A$.
  When $a, b$ are inherited from the same hybrid edge,
  $\{a, b, c\}$ or $\{a, b, d\}$ become exchangeable.
  By symmetry, the conditional expectation of
  $\dsp(D_T)$ then has equal entries.
  When $a$ and $b$ are inherited from different hybrid edges,
  there are several cases. When no coalescence
  happens on edge $e_6$, then $\{a, b, c,d\}$ are exchangeable
  above the root and by symmetry, the conditional expectation of
  $\dsp(D_T)$ has equal entries.
  When there is a coalescence along $e_6$, then
  $\dsp_{ab|cd}(D_T) = \dsp_{ac|bd}(D_T)$
  if $a$ is inherited from $e_4$ and $b$ from $e_3$,
  whereas
  $\dsp_{ab|cd}(D_T) > \dsp_{ac|bd}(D_T)$
  if $a$ is inherited from $e_3$ and $b$ from $e_4$.
  %
  Since all cases above have nonzero probability, we have
  \[
  \E{\dsp_{ab|cd}(D_T) \mid E_1 \cap A} > \E{\dsp_{ac|bd}(D_T) \mid E_1 \cap A}.
  \]
  On the event $E_1 \cap \bar{A}$, $T$ has topology $ab|cd$.
  In this case $|\dsp_{ab|cd}(D_T) - \dsp_{ac|bd}(D_T)|$ is twice the length
  of the internal edge of $T$, and
  $\E{|\dsp_{ab|cd}(D_T) - \dsp_{ac|bd}(D_T)| \mid E_1 \cap \bar{A}} < 4m + 2t_0$,
  where $t_0=1$ is the expected time for two lineages to coalesce.
  Hence as $m \to 0$,
  \[
    \P{\bar{A} \mid E_1}
    \E{\abs{\dsp_{ab|cd}(D_T) - \dsp_{ac|bd}(D_T)} \mid E_1 \cap \bar{A}} \to 0.
  \]
  Combining the cases for $A$ and $\bar{A}$, we have
  $\E{\dsp_{ab|cd}(D_T) \mid E_1} > \E{\dsp_{ac|bd}(D_T) \mid E_1}$ when $m$
  is sufficiently small, establishing \eqref{eq:E_1}.

 Next we show that for $j = 1, 2$, we have
  \begin{equation}
    \label{eq:E_2}
    \E{\dsp_{ab|cd}(D_T) \mid E_{2j}} = \E{\dsp_{ac|bd}(D_T) \mid E_{2j}} >
    \E{\dsp_{ad|bc}(D_T) \mid E_{2j}}.
  \end{equation}

  Conditional on $E_{21}$, the network becomes $N_{21}$ of Figure \ref{fig:ind-inherit-counterexp}, and $T$ has topology $ad|bc$ if $b$ and $c$ coalesce on
  $e_5$, in which case
  $\dsp_{ab|cd}(D_T) = \dsp_{ac|bd}(D_T) > \dsp_{ad|bc}(D_T)$.
  Otherwise, $\{a,b,c\}$ are exchangeable above $e_5$ and the
  conditional expectation of $\dsp$ has equal entries.
  This establishes \eqref{eq:E_2} for $j = 1$.
  For $j=2$, conditional on $E_{22}$, the same argument applies using network $N_{22}$ of Figure \ref{fig:ind-inherit-counterexp} with
  gene tree topology $ad|bc$ if some coalescence occurs along $e_5$ or $e_6$,
  or with exchangeability of $\{a,b,c,d\}$ above the root otherwise.

  Combining \eqref{eq:E_1} and \eqref{eq:E_2}, we have
  $\dsp_{ab|cd} > \dsp_{ac|bd} > \dsp_{ad|bc}$ as desired.
\end{proof}

\subsection{Quartet concordance factors}\label{subsec:qCF}

Quartet \emph{concordance factors (CFs)}
underlie a number of identifiability results and inference methods for level-1 networks \cite{Solis-Lemus2016, Banos2019, ABR2019, 2024ABGarrotelopesR}, and 
for the tree of blobs of general networks \cite{ABMR2022}.

Under any gene tree model, quartet CFs are the probabilities, for each 
set of 4 taxa, of the possible gene quartet trees relating them. 
Since for the three models $M$ we consider (assuming a binary network in the case of DT) only
resolved quartet trees have positive probability, 3-entry vectors suffice for these:
$$\cf_{abcd}=(\cf_{ab|cd},\cf_{ac|bd},\cf_{ad|bc})=(\mathbb P_M({ab|cd}),\mathbb P_M({ac|bd}),\mathbb P_M({ad|bc})).$$
These CFs can be obtained by marginalizing the distribution of rooted metric gene trees over edge lengths, root locations, and all but 4 taxa. The model of
sequence evolution on the gene trees is thus irrelevant as long as gene tree topologies are identifiable. 
Although CFs
can be consistently estimated from a sample of gene trees, in practice such a sample is never available,
so they are estimated from inferred gene trees
or directly from multilocus sequence data. 

\smallskip

Following \cite{2024Ane-anomalies}, we say a 4-taxon network $N^+$ is
\emph{(hard-)anomalous} when there are two quartet trees, say $ab|cd$ and $ac|bd$,
with the first displayed on $N^+$ and the second not, yet
$$\cf_{ab|cd}<\cf_{ac|bd}.$$
More informally, a network is anomalous if the displayed tree(s) are not the most probable gene trees.

Under the DT model on a 4-taxon network, only the trees displayed on the network have positive probability, so no anomaly can occur. Under the NMSCcom model,
all binary gene trees have positive probability but,
by Proposition 12 of \cite{2024Ane-anomalies}, there are again no anomalies.

Anomalies can occur under the NMSCind model, although only one embedded
network substructure (a $3_2$-cycle) is currently known to lead to them.
However, even with such a substructure, an anomaly does not occur for all
parameter values on the network.
See Proposition 1 of \cite{ABMR2022} for an extreme example of an anomalous
outer-labeled planar network displaying a single tree topology,
and \cite{2024Ane-anomalies} for deeper investigations and simulations on the
frequency of anomalies.

To identify circular orders of an outer-labeled planar network
under NMSCind from CFs, we use the following:

\begin{assump}\label{assump:noAnomQ} (NoAnomQ)
Under the model NMSCind of gene tree evolution, the metric network
is such that none of its induced 4-taxon networks are anomalous.
\end{assump}

We first show that under our three models, assuming no anomalous 4-networks,
CFs determine the quartet order information of \Cref{def:orderinfo}.

\begin{prop}\label{thm:CFdisptree}
Under each of the models DT, NMSCcom, and NMSCind+NoAnomQ
on a rooted metric binary network,
quartet CFs determine the order information of \Cref{def:orderinfo} for each of
its outer-labeled planar blobs $B$.
Specifically, each induced quartet network for  a $B$-informative set displays exactly 1 or 2  tree topologies, and, with taxa $a,b,c,d$:
\begin{enumerate}
\item\label{it:1disp} If only $ab|cd$ is displayed, then
$\cf_{abcd}=(p,q,q)$ with $p>q$ and the quartet network on $a,b,c,d$
has a cut edge separating $a,b$ from $c,d$; and circular orders
$(a,b,c,d)$ and $(a,b,d,c)$.
\item \label{it:2disp}
If $ab|cd$ and $ac|bd$ are displayed, then $\cf_{abcd}=(p,q,r)$, with
$p,q>r$ and the quartet network has a 4-blob with circular order $(a,b,d,c)$. 
\end{enumerate}
\end{prop}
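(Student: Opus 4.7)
The plan is to mirror the structure of the proof of Proposition~\ref{prop:4taxadist}, combining Lemma~\ref{lem:4taxonset-displayedsplits} with model-specific arguments for each of the three gene-tree models. First, applying Lemma~\ref{lem:4taxonset-displayedsplits} to the induced binary 4-taxon network yields the structural dichotomy: either there is a cut edge inducing the split $ab|cd$, with $ab|cd$ the unique displayed quartet (case~1), or there is a 4-blob with unique circular order and exactly the two adjacent-pairing quartets displayed (case~2). The remaining task is to verify that the CF vector under each model reflects the claimed pattern, from which the circular order information is read off.

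For case~1, under DT only $ab|cd$ is displayed so $\cf_{abcd}=(1,0,0)$. Under NMSCcom, the CF vector decomposes as $\sum_{S^+}\gamma(S^+)\,\cf_{S^+}$ over displayed trees $S^+$, each of topology $ab|cd$, and on any such 4-taxon species tree the classical MSC gives $\cf_{S^+}=(p_i,q_i,q_i)$ with $p_i>q_i$; the convex combination is then of the desired form $(p,q,q)$ with $p>q$. Under NMSCind, the NoAnomQ assumption immediately gives $\cf_{ab|cd}>\cf_{ac|bd}$ and $\cf_{ab|cd}>\cf_{ad|bc}$, so it remains to establish the equality $\cf_{ac|bd}=\cf_{ad|bc}$. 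I would condition on whether the $a,b$ lineages coalesce below the parent node $v_1$ of the cut edge: on that event the gene tree topology is forced to be $ab|cd$, contributing nothing to either $\cf_{ac|bd}$ or $\cf_{ad|bc}$; on its complement, two $\{a,b\}$ lineages arrive at $v_1$ simultaneously (by time-consistency of the species network in generations), and by exchangeability of unlabeled lineages in the coalescent, the joint law of all subsequent coalescence and inheritance events is invariant under swapping those two lineages' labels, which interchanges the outcomes $ac|bd$ and $ad|bc$.

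For case~2, Lemma~\ref{lem:4taxonset-displayedsplits} identifies the displayed quartets as those pairing adjacent taxa in the circular order; if the displayed quartets are $ab|cd$ and $ac|bd$, the adjacency pattern forces the circular order $(a,b,d,c)$, with $ad|bc$ the unique non-displayed quartet. Under DT, $\cf_{ad|bc}=0$ while both other entries are positive. Under NMSCcom, the convex combination over displayed trees consists of summands of the form $(A_i,B_i,B_i)$ (from $ab|cd$-trees) and $(B_j',A_j',B_j')$ (from $ac|bd$-trees) with $A_i>B_i$ and $A_j'>B_j'$; since both displayed topologies carry positive mixture weight, the $ad|bc$ entry of the mixture is strictly smaller than each of the other two entries. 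Under NMSCind+NoAnomQ, the assumption directly precludes $\cf_{ad|bc}$ from exceeding either displayed-quartet CF. The main obstacle is the symmetry argument in case~1 under NMSCind, because the 4-taxon network need not admit an actual isomorphism swapping $a$ and $b$; the resolution is to invoke exchangeability at the level of unlabeled coalescent lineages above the cut edge rather than at the level of the network topology itself. All remaining steps are routine given Lemma~\ref{lem:4taxonset-displayedsplits}, the NoAnomQ assumption, and the standard MSC CF formulas on 4-taxon trees.
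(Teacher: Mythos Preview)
Your proposal is essentially correct and tracks the paper's proof closely: Lemma~\ref{lem:4taxonset-displayedsplits} for the structural dichotomy, then model-by-model CF analysis, with DT and NMSCcom handled by decomposing over displayed trees. The one substantive departure is in case~1 under NMSCind. There the paper simply cites Theorem~1 of \cite{ABMR2022} to obtain the form $(p,q,q)$, whereas you supply a direct exchangeability argument for $\cf_{ac|bd}=\cf_{ad|bc}$.

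Your instinct to argue via exchangeability of unlabeled coalescent lineages rather than via a network automorphism is exactly right, and is in fact what underlies the cited result. But your execution has two small flaws. First, you invoke ``time-consistency of the species network in generations,'' yet no such assumption is in force here (ultrametricity enters only in \Cref{ssec:logdet}); fortunately it is also unnecessary, since once both the $a$- and $b$-lineages have entered the cut edge they are exchangeable under the coalescent regardless of any global clock. Second, your conditioning on coalescence ``below the parent node $v_1$ of the cut edge'' tacitly places the root on the $\{c,d\}$ side; when the root lies on the $\{a,b\}$ side the $a,b$ lineages need not pass through the cut edge at all, and you must instead run the symmetric argument exchanging the $c$ and $d$ lineages. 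Both points are easily repaired, and the paper's citation to \cite{ABMR2022} sidesteps them entirely.
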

\begin{proof}
From Lemma \ref{lem:4taxonset-displayedsplits}, an induced quartet network displays
either 1 or 2 tree topologies, depending on whether it lacks or has a 4-blob.
If it lacks a 4-blob,
it has a cut edge separating the taxa into
groups of two, which must be $a,b$ and $c,d$  if $ab|cd$ is displayed.

If a quartet network displays only the tree topology $ab|cd$, then under the
DT model, $\cf_{abcd}=(1,0,0)$,  since  for each displayed tree the CF is $(1,0,0)$.
Likewise,
under NMSCcom, $\cf_{abcd}=(p,q,q)$ with $p>q$,  since 
for each displayed tree the CF has this form \cite{Allman2011}. 
For NMSCind, Theorem 1 of \cite{ABMR2022} implies $\cf_{abcd}=(p,q,q)$, and  $p>q$ follows from the assumption NoAnomQ.

If a quartet network displays 2 tree topologies, $ab|cd$ and $ac|bd$,
then $\cf_{abcd}=(p,q,r)$ with $p,q>r$ is shown for the DT and
NMSCcom models by  similarly considering the displayed trees individually.
For the NMSCind model, this follows from the NoAnomQ assumption.
\end{proof}

\begin{cor} Let $N^+$ be a rooted metric
binary phylogenetic network.
Then under the models DT, NMSCcom, and NMSCind+NoAnomQ models
the reduced unrooted tree of blobs and the circular order for each
outer-labeled planar blob on
$N^+$ are identifiable from quartet CFs. 
\end{cor}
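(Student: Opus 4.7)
The plan is to closely mirror the structure of the analogous corollary for average distances that was proved just above, decomposing the task into two independent identifiability steps: first recover the reduced unrooted tree of blobs, then recover the circular order inside each outer-labeled planar blob separately. The tree of blobs step should appeal to prior work (specifically \cite{ABMR2022}, which establishes identifiability of the unrooted tree of blobs from quartet CFs under the NMSC), noting that under DT the CFs are determined by the induced distribution on displayed trees and thus also fit into that framework, and that the NMSCcom case reduces to a convex combination of tree-CFs indexed by displayed trees. The NoAnomQ assumption is exactly what is required so that no spurious quartet split (one not corresponding to a cut edge of the tree of blobs) appears as the uniquely dominant CF entry.

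After the tree of blobs is identified, I would then restrict attention to a single outer-labeled planar blob $B$. For every $B$-informative 4-taxon set, pass to the induced 4-taxon network; this is legitimate because the marginal CF on 4 taxa on $N^+$ equals the CF on the induced 4-taxon network under each of the three models (the DT and NMSCcom cases are immediate from the weighted sum over displayed trees, and the NMSCind case follows from the independence of lineage inheritance which commutes with marginalization to a subset of taxa). Then Proposition \ref{thm:CFdisptree} identifies which quartet tree(s) are displayed on the induced 4-taxon network, and hence assigns the quartet circular order information of Definition \ref{def:orderinfo}: either two possible orders when there is a cut edge giving a single dominant CF entry with the other two equal, or the unique order determined by the smallest CF entry when there is a 4-blob. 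Finally, Theorem \ref{thm:circorder} assembles these 4-taxon orders into the unique circular order of $B$.

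The main obstacle, I anticipate, is the first step: verifying that the reduced unrooted tree of blobs is identifiable from CFs under all three models with only the NoAnomQ assumption, since the cited result in \cite{ABMR2022} is stated for NMSCind and will need to be transported to the DT and NMSCcom settings. For DT this should follow because CFs on displayed trees trivially place probability $1$ on a single quartet for each 4-taxon set, so the 4-point-like combinatorial condition used to build the tree of blobs is automatic. For NMSCcom, since CFs are a convex combination of the (displayed-tree) MSC CFs, the inequality $\cf_{ab|cd}>\cf_{ac|bd}=\cf_{ad|bc}$ for 4-taxon sets without a 4-blob follows from the classical MSC identity applied to each displayed tree, and the 4-blob case yields two strictly dominant entries by averaging; this is precisely what Proposition \ref{thm:CFdisptree} provides, so the same distance-split-tree style argument used in Proposition \ref{prop:ToBdist} carries over with sums of CF pairs replaced by the appropriate CF inequality test.

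With these three model-specific checks in hand, the corollary follows immediately by combining the tree-of-blobs identifiability, Proposition \ref{thm:CFdisptree}, and Theorem \ref{thm:circorder}, exactly as in the proof of the average-distance corollary.
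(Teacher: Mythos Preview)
Your proposal is correct and follows essentially the same approach as the paper: first identify the reduced unrooted tree of blobs via \cite{ABMR2022}, extending those arguments to DT and NMSCcom using Proposition~\ref{thm:CFdisptree}, then apply Proposition~\ref{thm:CFdisptree} and Theorem~\ref{thm:circorder} blob by blob. The only refinement the paper makes explicit that you gloss over is that the tree-of-blobs result in \cite{ABMR2022} for NMSCind is stated for \emph{generic} parameters, and the NoAnomQ assumption is what allows this genericity condition to be dropped (since Proposition~\ref{thm:CFdisptree} then determines directly whether each induced 4-taxon network has a 4-blob); you should state this rather than citing \cite{ABMR2022} as if it applied unconditionally.
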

\begin{proof} Under NMSCind, the reduced unrooted  tree of blobs of
$N^+$ is identifiable from CFs by Proposition 4 of \cite{ABMR2022},
for generic parameters.
One can check that with the NoAnomQ assumption, the generic condition can be dropped,
since it is used only in the determination of 4-taxon sets
whose subnetwork has a 4-blob, which can instead be determined by Theorem \ref{thm:CFdisptree}.

Similarly, for DT and NMSCcom, Theorem \ref{thm:CFdisptree} allows the arguments
of \cite{ABMR2022} to apply, to prove the identifiability of the tree of blobs from CFs.

Theorems \ref{thm:CFdisptree} and \ref{thm:circorder} then show each outer-labeled planar blob's
circular order is identifiable.
\end{proof}

\subsection{LogDet distances from genomic sequences}\label{ssec:logdet}

In \cite{Allman2019,ABR2022-logdetNet}, logDet distances computed from genomic sequences were shown
to identify species trees and level-1 network topologies
under an independent coalescent model, provided certain technical assumptions hold.
This information differs from the average distances considered
in~\Cref{subsec:avdist}, since the logDet distance is computed from
concatenated multigene sequences, as opposed to sequences for each gene individually.
Here we study using LogDet distances to identify circular orders of blobs in outer labeled planar networks.

\smallskip

Recall
that a rooted metric network is said to be \emph{ultrametric} if all paths
from the root to a leaf have the same length.
An ultrametric network is therefore \emph{time-consistent}, in the
sense that for every node $u$, all the paths from the root
to $u$ have the same length.

In this subsection we consider only 
rooted networks that are ultrametric when edge lengths are measured in generations.
We also assume a mutation rate that is constant across all edges, so ultrametricity in substitution units holds as well.
(For simplicity, we do not investigate generalization to time-dependent mutation rates,
or to varying substitution processes across genes,
as in \cite{Allman2019,ABR2022-logdetNet}.)

The requirement that a network  be ultrametric is tied to the use of rooted triples,
rather than quartets, for identifiability.
In particular, triples of logDet distances from each displayed rooted triple network play a 
similar role to the CF triples for displayed quartets.
We thus require results and definitions paralleling those
of earlier sections, with rooted triple networks replacing quartet networks.
Since many of the necessary arguments are straightforward adaptations of those
for quartets, we omit many details and only highlight key differences.

\smallskip
 
We strengthen the definition of an outer-labeled planar blob in a rooted network by considering the
\emph{extended articulation nodes} $\widetilde V_0$ as the usual
articulation nodes together with the blob's LSA.
A blob is \emph{extended outer-labeled planar} if it is outer-labeled planar
with respect to $\widetilde V_0$.  In this subsection we use the term
\emph{$m$-blob} to refer to a blob with $m$ extended articulation nodes.

If a blob 
does not contain the network root then the blob's LSA must be an articulation node,
so `extended outer-labeled planar' is synonymous with `outer-labeled planar'.
These terms have different meanings only for blobs containing the network root,
since the extended definition requires a planar embedding with the root in
the frontier, while the standard definition does not.

Applying \Cref{thm:uniqcirc} to $\widetilde V_0$ immediately yields the following.

\begin{cor}\label{cor:uniqcirctilde}
  There is a unique circular order of the extended articulation nodes of an extended outer-labeled
  planar blob of a rooted phylogenetic network,
  induced from every extended outer-labeled planar embedding of the blob.
\end{cor}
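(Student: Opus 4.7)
The plan is to observe that this is an immediate application of Theorem~\ref{thm:uniqcirc} with the subset of vertices taken to be $\widetilde V_0$ rather than $V_0$. Concretely, I would first check that Theorem~\ref{thm:uniqcirc}'s hypotheses apply to the blob $B$ viewed as an undirected graph together with the vertex subset $\widetilde V_0$: by definition of a blob in this paper, $B$ is a maximal biconnected subgraph, so in particular $B$ is biconnected; and by the definition of \emph{extended outer-labeled planar}, $B$ admits at least one planar embedding in which every element of $\widetilde V_0$ lies on the frontier of the unbounded face.

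Given those two facts, Theorem~\ref{thm:uniqcirc} applies verbatim with $V_0$ replaced by $\widetilde V_0$, and it yields that any two extended outer-labeled planar embeddings $H$ and $H'$ of $B$ have frontier cycles that induce the same circular order on $\widetilde V_0$. That is the statement of the corollary.

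The only step that requires any genuine thought is verifying that nothing in the proof of Theorem~\ref{thm:uniqcirc} relied on $V_0$ being specifically the articulation nodes or the leaf set; inspecting the proof, one sees it only uses that $V_0$ is a designated vertex subset lying in the frontier and that $G$ is biconnected (so the frontier is a cycle by Proposition 4.2.6 of \cite{Diestel2017}). Since the LSA of $B$ is a vertex of $B$ (by the lemma showing a blob's LSA lies in the blob), adding it to the usual articulation-node set keeps $\widetilde V_0$ a subset of vertices of $B$, so the hypothesis format of Theorem~\ref{thm:uniqcirc} is satisfied.

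I expect no real obstacle: this corollary is invoked by the authors precisely because the preceding theorem was proved for an arbitrary vertex subset $V_0$. The proof can therefore be written in essentially one sentence, namely that Theorem~\ref{thm:uniqcirc} applied to the biconnected graph $B$ with vertex subset $\widetilde V_0$ yields the claim.
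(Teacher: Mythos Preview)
Your proposal is correct and matches the paper's approach exactly: the paper simply states ``Applying \Cref{thm:uniqcirc} to $\widetilde V_0$ immediately yields the following'' before the corollary, with no further argument. Your observation that Theorem~\ref{thm:uniqcirc} was deliberately formulated for an arbitrary vertex subset $V_0$ is precisely the point.
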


\begin{defn}
For a blob $B$ in a rooted  phylogenetic network,
a 3-taxon set is \emph{$B$-informative} 
if its elements are in
distinct taxon blocks associated to $B$,
excluding the block associated to the blob's LSA.
A collection of 3-taxon sets is a \emph{full $B$-informative} collection
if, for each choice of 3 blocks of $B$ not associated to its LSA,
one of the 3-taxon sets contains a taxon from each block.
\end{defn}

If $B$ is an extended outer-labeled planar blob, then the
3-taxon network induced by a $B$-informative set of 3 taxa
need not have the same LSA as $B$.
Nonetheless, since 
there can only be a chain of 2-blobs between
the LSA of the 3 taxa and the LSA of $B$,
whichever LSA we refer to has no impact on the notion of circular order
information in the following.

\begin{defn}\label{def:tripordinfo}
For an extended outer-labeled planar blob $B$ with LSA $L=L_{B}$
in a binary rooted phylogenetic network
and a full $B$-informative collection of 3 taxon sets,
the \emph{3-taxon circular order information} is the collection of circular orders of $x,y,z,L$
specified by the induced
rooted triple networks for the 3-taxon sets $\{x,y,z\}$.
More specifically, this information consists of
all circular orders of $x,y,z,L$ that are compatible with all
unrooted displayed trees in the rooted 3-taxon  network:
\begin{enumerate}
\item\label{item:tcut} for a rooted triple network without a 4-blob,
and hence a cut edge inducing a split $xy|zL$ of the taxa and LSA, 
the circular orders are $(x,y,z,L)$ and $(x,y,L,z)$.
\item\label{item:t4blob} 
for a rooted triple network with a 4-blob,
the circular order is the unique order induced from any extended outer-labeled planar embedding
of the rooted triple network.
\end{enumerate}
\end{defn}

The following is an analog  for rooted triples of \Cref{thm:circorder}. Unfortunately, it cannot be immediately deduced from that quartet result, since
even extending the rooted network with an additional 
outgroup taxon, there are fewer 
collections of 3 ingroup taxa than 
there are collections of 4 taxa.

\begin{theorem}\label{thm:rtcircorder}
Let $B$ be an extended outer-labeled planar blob in a
rooted binary phylogenetic network.
Then 3-taxon circular order information on a full $B$-informative set
of 3-taxon sets determines
the unique circular order of its extended articulation nodes induced by all
outer-labeled planar embeddings of the blob.
\end{theorem}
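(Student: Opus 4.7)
The plan is to mimic the inductive proof of Theorem \ref{thm:circorder}, with the blob's LSA $L = L_B$ serving as a fixed fourth ``anchor'' alongside each $B$-informative triple of taxa. I would proceed by induction on $n$, the number of lowest nodes of $B$, showing that the circular order of $B$'s extended articulation nodes is determined. Throughout, I would use the remark preceding \Cref{def:tripordinfo} that the LSA of an induced rooted triple subnetwork may be identified with $L$ for purposes of circular-order information, since the two are separated only by a chain of 2-blobs.

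For the base case $n=1$, there is a single lowest node $v$, which must be a hybrid articulation node by \Cref{lem:lowest}, and the case is trivial unless there are at least four extended articulation nodes. For every pair $x_1,x_2$ of articulation nodes distinct from $v$ and from $L$, choose representative taxa and consider the induced rooted triple subnetwork. Since every edge of $B$ is ancestral to $v$ by \Cref{lem:abovehybrid}, no blob edge is lost in the restriction, so this subnetwork retains an extended 4-blob on $v, x_1, x_2, L$. The 3-taxon circular order information therefore records the circular suborder of $(v,x_1,x_2,L)$. Varying $x_1,x_2$ over all pairs yields every 4-element suborder containing $v$ and $L$, which determines the full circular order on the extended articulation nodes (fixing $v$ and $L$ reduces it to a linear order of the remaining extended articulations that is read off triple-by-triple).

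For the inductive step $n\ge 2$, apply \Cref{lem:augment} to select an augmenting lowest node $v$, and let $B'$ be the blob obtained by deleting $v$'s funnel from $B$. By \Cref{lem:counthcn}, $B'$ has $n-1$ lowest nodes, all inherited from $B$, and $L$ is still the LSA of $B'$ because funnels consist only of nodes ancestral to a single lowest node while $L$ lies above all of them. The inductive hypothesis, applied to $B'$ in the induced network on the non-descendants of $v$'s funnel edges, determines the circular order of the extended articulation nodes of $B'$. It remains to insert $v$ and any articulation nodes lying on $v$'s funnel into this known order. I would do this by mirroring the subcase analysis in Theorem~\ref{thm:circorder}: distinguish whether $B'$ has one or two lowest nodes, and in the latter case, whether $v$ sits between two neighboring lowest nodes along the frontier or between $L$ and an adjacent lowest node (as in \Cref{fig:updownfrontier}). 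In each subcase, selecting suitable triples $\{v,x_1,x_2\}$ produces induced rooted triple networks whose extended 4-blobs have articulation nodes $v,x_1,x_2,L$, so that the 3-taxon circular order information fixes $v$'s position and sorts the articulation nodes on its funnel.

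The main obstacle is replicating the subcase verifications from the quartet proof with one less degree of freedom: there we picked four free articulation nodes, whereas here one of the four is forced to be $L$. Concretely, I must check in each subcase that the maximal biconnected subgraph of $B$ supported on edges ancestral to the three chosen articulation nodes remains joined to the ancestry of $L$ into a single extended 4-blob after suppressing degree-2 nodes in the induced rooted triple network. This uses \Cref{lem:trekcycle} together with the fact that the up-paths in the frontier from the three chosen lowest (or articulation) nodes merge on their way to $L$, so the biconnected pieces share edges and unite into one blob. Once this structural fact is established, the placement arguments of Theorem~\ref{thm:circorder} carry over essentially verbatim with $L$ substituted for the fourth free taxon.
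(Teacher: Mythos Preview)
Your proposal is correct and follows essentially the same inductive scheme as the paper's (sketched) proof, with $L$ serving as the fixed fourth anchor throughout. The only organizational difference is that, to order the articulation nodes on $v$'s funnel relative to one another, the paper restricts to the blob containing $v$ in the induced network on just the taxa descended from the funnel (which has $v$ as its sole lowest node) and reapplies the base case there, rather than re-running the subcase analysis of \Cref{thm:circorder}; this amounts to the same triples $\{v,x_1,x_2\}$ and the same 4-blob verification you propose, just packaged recursively.
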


\begin{proof} Our argument parallels that for \Cref{thm:circorder}, so we only provide a sketch.

Proceeding by induction on the number of lowest nodes of $B$,
if $v$ is the only lowest node, then for every choice of articulation nodes $x,y$ we have a unique circular order for
$v,x,y,L$ which must be congruent with the unique circular order of the full set
$\widetilde V_0$ of the extended articulation nodes.
It is straightforward to see the orders for all choices of $x,y$ determine that for $\widetilde V_0$.

Assuming the result for any blob with $n-1$ lowest nodes,
suppose that $B$ has $n$, and pick an augmenting lowest node $v$.
With $B'$ obtained from $B$ by deleting $v$ and its funnel,
the circular order for the extended articulation nodes of $B'$,
which has $n-1$ lowest nodes, is determined by induction.
Note that $L$ must be in $B'$ since $v$ is augmenting, hence $L$ is included in this order.  

If $B'$ has a single lowest node $u$, then since all edges of $B$
are ancestral to $u$ or $v$, the rooted network induced by $u,v,x$ has a 4-blob
for any other articulation node $x$ of $B$.
Letting $x$ vary over articulation nodes of $B'$ determines a unique
placement of $v$ within the circular order for $B'$.
Then, letting $x$ vary over articulation nodes of $B$ on the funnel of $v$
determines the position of each one
relative to $v$ and to articulation nodes of $B'$ (including $L$).
What remains to be determined is the position of these articulation nodes on the funnel
relative to one another.  
But  by restricting to the blob containing $v$ in the induced
network on those taxa which are descended from the funnel of $v$,
which has only one lowest node, a unique circular order for
all articulation nodes on the funnel of $v$ can also be determined.

If $B'$ has more than one lowest node, using rooted triple order information
we first determine where $v$ should be placed between other lowest nodes and
the blob LSA $L$, and then argue as in the last case to obtain the full circular order.
\end{proof}

\medskip

Recall the definition of the logDet distance between two (finite) aligned sequences of $k$ bases from taxa $a,b$ \cite{Steel94}. Let
$F_{ab}$ be the $k\times k$ matrix of relative site-pattern frequencies, whose $ij$ entry gives the proportion of sites 
in the sequences exhibiting base $i$ for $a$ and base $j$ for $b$. Let $f_a$ denote the vector of row sums of $F_{ab}$, 
and $f_b$ the vector of column sums, so that these marginalizations give the proportions of various bases in the sequences of $a$ and $b$. With $g_a$ and $g_b$ the products of the entries of $f_a,f_b$, respectively, the logDet distance is
$$d_{ab} =-\frac 1k \left ( \ln |\det(F_{ab})|-\frac 12 \ln (g_ag_b)\right ).$$
Instead letting $F_{ab}$ be the matrix of expected genomic pattern frequencies
from a model $M$ on a rooted network $N^+$, this formula yields a logDet distance determined by the model's distribution of genomic sequence data. It is from these distances that we seek to identify circular orders for blobs.
With $d^M_{xy}$ the logDet distance computed from expected genomic pattern
frequencies for taxa $x,y$ under model $M$, let
$$D_{abc}=D^M_{abc}=(d^M_{ab},d^M_{ac},d^M_{bc})$$
be the triple of pairwise distances between three taxa $a,b,c$.

\smallskip

The following is an extension of results for a level-1 ultrametric network
in \cite{ABR2022-logdetNet}, which can be proved similarly
to Theorem 1 of \cite{ABMR2022}.

\begin{prop}\label{prop:LD4}
Consider a 3-taxon ultrametric rooted binary network with LSA $L$.
For the DT, NMSCcom, and  NMSCind models with constant mutation rate
and generic numerical parameters,
if the network has a 4-blob then the three entries of $D_{abc}$ are distinct,
while if there is an internal cut edge separating $a,b$ from $c,L$
then $d_{ac}=d_{bc}$.
\end{prop}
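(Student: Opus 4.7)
The plan is to adapt the strategy of Theorem~1 of \cite{ABMR2022} (the corresponding quartet-CF dichotomy) to logDet distances on rooted triples, drawing on the ultrametric techniques developed for level-1 networks in \cite{ABR2022-logdetNet}. The argument splits along the two cases of the statement.

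For the cut edge case, let $u$ be the endpoint of the internal cut edge on the $\{a,b\}$ side. By the Markov property applied at $u$, the expected pattern frequency matrix decomposes as $F_{ac} = T_{ua}^\top \operatorname{diag}(\pi_u) T_{uc}$, where $T_{ux}$ is the marginal transition matrix from the state at $u$ to the character at $x$ under model $M$ (averaging over the coalescent process between $u$ and $x$), and $\pi_u$ is the marginal base distribution at $u$. An identical decomposition holds for $F_{bc}$, with $T_{ua}$ replaced by $T_{ub}$. Under ultrametricity with constant mutation rate, the substitution length from $u$ to $a$ and from $u$ to $b$ is the same along every gene tree, so $T_{ua}$ and $T_{ub}$ are marginal transition matrices of processes with equal total substitution length; under a GTR-like model this forces $|\det T_{ua}| = |\det T_{ub}|$, and hence $d_{ac}^M = d_{bc}^M$. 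For the 4-blob case, each entry of $D_{abc}^M$ is an analytic function of the continuous numerical parameters (edge lengths, inheritance parameters, population sizes subject to the ultrametric constraint). The locus where two entries coincide is therefore a proper real-analytic subvariety of parameter space provided a single parameter choice yielding pairwise distinct entries can be exhibited; such a witness is constructed by specializing to a simple 4-blob hybridization cycle with generic rational edge lengths compatible with ultrametricity and computing the three logDet distances directly.

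The principal obstacle is rigorously justifying the equality $|\det T_{ua}| = |\det T_{ub}|$ under NMSCind, where the marginal joint distribution at $(a,b,c)$ is not a mixture over displayed trees. Under DT the factorization is immediate, since $T_{ua}$ is a mixture of displayed-tree transition matrices of equal total substitution length, each with determinant $\exp(\operatorname{tr}(Q)\cdot t)$ for the same $t$. Under NMSCcom the same argument applies after conditioning on which displayed tree is inherited. For NMSCind one must exploit the independence of coalescent events above and below $u$, together with the ultrametric equalization of branch lengths in substitutions, adapting the template of the level-1 proof in \cite{ABR2022-logdetNet}; once this factorization is in place, both cases of the proposition follow as above.
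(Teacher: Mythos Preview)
Your overall architecture (symmetry for the cut-edge case, analyticity plus a witness for the 4-blob case) matches the paper's, but there are gaps in both halves that the paper's proof closes differently.

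\textbf{Cut-edge case.} The factorization $F_{ac}=T_{ua}^{\top}\operatorname{diag}(\pi_u)T_{uc}$ is not well-posed under either coalescent model: $u$ is a node of the species network, not of any gene tree, so there is no ``state at $u$'' to condition on and no marginal $\pi_u$. The $a$- and $c$-lineages coalesce strictly above $u$, so the Markov split you invoke does not exist as written. The paper bypasses this entirely with a one-line exchangeability argument: since the internal cut edge separates $\{a,b\}$ from $\{c,L\}$ and the network is ultrametric with constant mutation rate, for every metric gene tree there is an equiprobable one obtained by swapping the $a$ and $b$ labels. Hence the expected pattern-frequency arrays satisfy $F_{ac}=F_{bc}$ exactly (not merely in determinant), and this holds uniformly for DT, NMSCcom, and NMSCind. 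Your worry that NMSCind is the hard case is therefore misplaced; the symmetry argument needs no model-by-model analysis.

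\textbf{4-blob case.} Your analyticity reduction is correct, but the witness step is not. The network topology is fixed; its parameter space consists of the numerical parameters on \emph{that} topology. ``Specializing to a simple 4-blob hybridization cycle'' is a change of topology, not a point in the given parameter space, so it does not furnish the required witness. The paper supplies the missing idea: a topological argument (adapted from Theorem~1 of \cite{ABMR2022}) shows that any 3-taxon rooted network with a 4-blob \emph{displays} a level-1 network with a 4-blob. One then chooses hybridization parameters on the original network so that lineages are constrained to that displayed level-1 subnetwork; this \emph{is} a specialization within (the closure of) the original parameter space. On the displayed level-1 network, Theorem~1 of \cite{ABR2022-logdetNet} (for NMSCind) or a direct computation via Lemma~4 of \cite{Allman2019} (for DT and NMSCcom) gives distinct $D_{abc}$ entries generically. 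Analyticity then propagates distinctness to generic parameters on the full network. Without the displayed-network reduction, you have no mechanism to produce a witness for an arbitrary 4-blob topology.
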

\begin{proof}
If the network on $a,b,c$ has a 4-blob, a modification of the topological
argument in the proof of Theorem 1 of \cite{ABMR2022} shows it displays a
level-1 network with a 4-blob. Choosing hybridization parameters in such a way
that lineages are constrained to stay in this displayed network,
Theorem 1 of \cite{ABR2022-logdetNet} shows that for NMSCind the three entries of
$D_{abc}$ are distinct for generic parameters on the displayed network.
Since these entries are analytic functions of the parameters for the full network,
for generic parameters on the full network they must also be distinct.
For the DT and NMSCcom model, a similar argument applies, though showing the
analog of Theorem 1 of \cite{ABR2022-logdetNet} holds for them requires a
detailed application of the algebraic Lemma 4 of \cite{Allman2019}
to pattern frequency matrices.

If there is an internal cut edge separating $a,b$ from $c$ on the network,
then under any of these models for each possible gene tree there is an
equiprobable gene tree obtained by interchanging the $a,b$ labels,
by ultrametricity.
This implies the expected pattern frequency arrays for $a,c$ and for $b,c$ are the same,
so $d_{ac}=d_{bc}.$
\end{proof}

It is also straightforward to follow the arguments of \cite{ABMR2022},
replacing its use of the combinatorial quartet distance capturing 
topological information on quartets separating pairs of taxa
by the rooted triple distance of \cite{Rhodes2020},
to obtain the following.

\begin{prop}\label{prop:ToBrt}
  The reduced rooted tree of blobs of a rooted binary network can be determined from
  the reduced rooted trees of blobs for each of its induced rooted 3-taxon networks.
\end{prop}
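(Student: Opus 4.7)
The plan is to adapt the combinatorial argument of ABMR2022 for the unrooted tree of blobs, swapping quartets for rooted triples and the combinatorial quartet distance for the rooted triple distance of Rhodes 2020. Conceptually, this is the classical fact that a rooted multifurcating tree is determined by its induced 3-taxon restrictions, applied here to the reduced rooted tree of blobs.

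First I would classify induced 3-taxon trees of blobs. For any three taxa $a,b,c$, the reduced rooted tree of blobs of the induced 3-taxon subnetwork takes one of exactly four forms: a resolved rooted triple $ab|c$, $ac|b$, $bc|a$, or an unresolved rooted star with all three taxa attached to a single polytomy. Using the contraction-based characterization of the tree of blobs from \cite{2023XuAne_identifiability,ABMR2022}, this induced tree of blobs equals the restriction of the global reduced tree of blobs $T^+$ to $\{a,b,c\}$: retain only nodes on directed paths from the root to $a,b,c$, contract blobs to single nodes, and suppress degree-2 non-root nodes.

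Then I would reconstruct $T^+$ from this classification. Adapting the rooted triple distance approach, two taxa $a,b$ have their lowest common ancestor $v_{ab}$ strictly below the root of $T^+$ if and only if some $c$ yields an induced triple $ab|c$; and $v_{ab}$ is a non-trivial blob node (rather than a tree node) of $T^+$ if and only if some $c$ descending from $v_{ab}$ (but not clustering with either $a$ or $b$) yields a star induced triple. Iterating these two criteria from the bottom up gives a sibling partition below each internal node of $T^+$ and distinguishes blob nodes from tree nodes, reconstructing $T^+$ uniquely.

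The main obstacle is ruling out spurious star triples: one must show that a star induced tree of blobs on $\{a,b,c\}$ corresponds to three paths meeting at a genuine non-trivial blob in $N^+$, and not at a suppressed chain of degree-2 tree nodes or at an artifact of the induced LSA differing from the global LSA. This follows because the reduced tree of blobs has no suppressible degree-2 internal nodes away from the root, so the induced 3-taxon restriction cannot present as a star unless the corresponding restriction of $T^+$ does as well; any ambiguity in the LSA of the induced 3-taxon network disappears upon reduction since it can differ from the LSA of the relevant block in $T^+$ only through a chain of 2-blobs that contracts away. Combined with Lemma~\ref{lem:trekcycle} and the tree-of-blobs machinery already developed, this closes the argument.
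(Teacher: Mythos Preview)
Your high-level plan---adapt the quartet argument of \cite{ABMR2022} by replacing the quartet distance with the rooted triple distance of \cite{Rhodes2020}---is exactly what the paper does, and the paper says no more than that. However, the way you flesh it out contains a genuine error.

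The false step is your assertion that ``this induced tree of blobs equals the restriction of the global reduced tree of blobs $T^+$ to $\{a,b,c\}$.'' This is not true in general: passing to an induced 3-taxon network can destroy a blob and produce a \emph{resolved} triple where the restriction of $T^+$ is a star. Concretely, take the rooted binary network with root $v_1$, cycle edges $v_1\to v_2\to v_3$ and $v_1\to v_5\to v_4\to v_3$ (so $v_3$ is hybrid), and pendent taxa $a,b,c,d$ at $v_2,v_3,v_4,v_5$. The reduced rooted tree of blobs is the 4-leaf star. But the induced network on $\{a,c,d\}$ deletes $v_3$ and both hybrid edges (they are ancestral only to $b$), leaving the resolved tree $((c,d),a)$. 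Thus the induced triple $cd\,|\,a$ arises even though $T^+|_{\{a,c,d\}}$ is unresolved. This immediately breaks your reconstruction criterion ``$v_{ab}$ is strictly below the root iff some $c$ yields an induced triple $ab\,|\,c$'': here $\mathrm{LCA}_{T^+}(c,d)$ is the root, yet $a$ yields the induced triple $cd\,|\,a$.

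Relatedly, your ``main obstacle'' paragraph worries about the wrong direction. You argue that an induced star forces the restriction of $T^+$ to be a star; that direction is indeed easy (a global cut edge persists under induction). The subtle direction is the converse: ruling out \emph{spurious resolved} triples, i.e.\ showing that an induced $ab\,|\,c$ which does not come from a global cut edge cannot fool the reconstruction. The argument in \cite{ABMR2022} handles the quartet analogue not by claiming equality of induced and restricted trees of blobs, but by showing that whenever a candidate cluster $A$ refines a blob node of $T^+$, one can \emph{choose} witnesses $a_1,a_2\in A$ and $b\notin A$ so that the induced 3-taxon network retains a (rooted) 4-blob and hence gives a star. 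In the example above, the spurious cluster $\{c,d\}$ is killed by the witness $b$, since the induced network on $\{b,c,d\}$ keeps the cycle. Your write-up needs this existential witness step (and the accompanying use of the rooted triple distance / split-tree construction over \emph{all} triples) in place of the incorrect equality claim.
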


The  notion of an unrooted 4-taxon anomalous network for quartet CFs has a parallel for logDet distances.
\begin{defn}
An ultrametric rooted triple network $N^+$ is \emph{logDet-anomalous} for a model $M$
if there are two rooted triple trees, say $ab|c$ and $ac|b$,
with the first displayed on $N^+$ and the second not, yet
$$d_{ab}^M>d_{ac}^M.$$
\end{defn}

This notion captures the naive idea that a taxon pair appearing as a cherry
on a displayed tree should appear to have more closely related sequences
than if they are not a cherry on any displayed tree.
This idea is used by \cite{2021JiaoYang} to define a
``species-definition anomaly'' zone, in which $a$ and $b$ are two individuals
from the same species and $c$ is an individual from a different species.

From \cite{Allman2019}, an ultrametric tree $((a,b),c)$ is never
logDet-anomalous for the MSC model since $d_{ab}<d_{ac}=d_{bc}$.
However, as shown in \cite{ABR2022-logdetNet}, there are level-1 rooted triple networks
with tree of blobs $((a,b),c)$ yet $d_{ab}> d_{ac}=d_{bc}$,
although known examples require extreme parameters.
However, the question of what network structures can lead to logDet anomalies
and how common they might be has not been investigated in depth.

To identify circular orders of extended outer-labeled planar blobs in a  binary network
from logDet distances under the NMSCind model, we will use the following:

\begin{assump}\label{assump:noAnomLD} (NoAnomLD)
Under the NMSCind model, the metric network
is such that none of its induced 3-taxon rooted networks are LogDet-anomalous.
\end{assump}

\begin{prop}\label{thm:LDdisptree} 
Under each of the models DT, NMSCcom, and NMSCind+NoAnomLD
with constant mutation rate
on a binary rooted ultrametric network with generic numerical parameters,
logDet distances determine the order information of Definition \ref{def:tripordinfo}
for each of its extended outer-labeled planar blobs $B$.
Specifically, each induced rooted triple network for a $B$-informative
set displays exactly 1 or 2 tree topologies. For taxa $a,b,c$ and blob LSA $L$:
\begin{enumerate}
\item\label{it:1disprt} If only $ab|c$ is displayed, then
$D_{abc}=(p,q,q)$ with $p<q$ and the rooted triple network on $a,b,c$
has an internal cut edge separating $a,b$ from $c$; and circular orders
$(a,b,c,L)$ and $(b,a,c,L)$.
\item \label{it:2disprt}
If $ab|c$ and $ac|b$ are displayed, then $D_{abc}=(p,q,r)$, with
$p,q<r$ and the quartet network has a 4-blob with circular order $(a,b,L,c)$.
\end{enumerate}
\end{prop}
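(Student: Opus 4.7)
The plan is to parallel the proofs of Propositions \ref{prop:4taxadist} and \ref{thm:CFdisptree}, adapted from 4-taxon unrooted quartets to 3-taxon rooted triples with the LSA $L$ treated as an extended articulation node. I would begin with the structural dichotomy for a 3-taxon rooted binary subnetwork with LSA $L$: either (i) it has an internal cut edge separating two taxa from the third, in which case exactly one resolved triple topology is displayed, or (ii) it has a 4-blob involving $L$ in the extended sense, in which case, by \Cref{cor:uniqcirctilde}, there is a unique circular order of $\{a,b,c,L\}$ and, via a 3-taxon rooted analog of \Cref{lem:circ4taxontree}, exactly two resolved triple topologies are displayed, corresponding to the two pairs of taxa that are adjacent in the frontier and separated from $L$.

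Next I would invoke \Cref{prop:LD4} to determine the equality/distinctness pattern of entries of $D_{abc}$: in case (i), $d_{ac}=d_{bc}$; in case (ii), all three entries are distinct for generic parameters. The remaining step is to identify the direction of the inequalities. Under NMSCind+NoAnomLD this is immediate by applying NoAnomLD to each pair whose corresponding triple topology is undisplayed: in case (i) both $bc|a$ and $ac|b$ are undisplayed, yielding $d_{ab}<d_{ac}$ and $d_{ab}<d_{bc}$; in case (ii) only $bc|a$ is undisplayed, so $d_{bc}$ is the unique maximum, and the circular order $(a,b,L,c)$ is read off by noting that $\{a,b\}$ and $\{a,c\}$ are the frontier-adjacent pairs separated from $L$. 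Under DT and NMSCcom, one exploits ultrametricity and the exchangeability of $a,b$ in case (i): every gene tree has ultrametric topology $((a,b),c)$, so the expected pattern frequency matrices satisfy $F_{ac}=F_{bc}$, giving $d_{ac}=d_{bc}$, while the strict inequality $d_{ab}<d_{ac}$ follows from the fact that $a,b$ share their most recent common ancestor later than with $c$ on every such tree.

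The main obstacle is case (ii) under DT and NMSCcom, where one must show $d_{bc}$ is the largest entry even though the pattern frequency matrix $F_{bc}$ is a nonlinear mixture over displayed trees and logDet does not commute with such mixtures. This cannot be reduced to the pure tree case by a simple averaging argument. The likely route is to mirror the algebraic method of \cite{Allman2019,ABR2022-logdetNet}, applying \cite[Lemma 4]{Allman2019} to the expected pattern frequency matrices, combined with an analyticity/genericity argument: the difference $d_{bc}-\max(d_{ab},d_{ac})$ is an analytic function of the network parameters that is strictly positive in the limit as inheritance probabilities approach $0$ or $1$ (where a single displayed tree survives and ultrametricity gives the ordering directly), and therefore is strictly positive for generic parameters throughout the parameter space.
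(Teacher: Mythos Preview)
Your overall shape is right, and the NMSCind+NoAnomLD part matches the paper. But there are two genuine gaps in the DT/NMSCcom analysis.

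\textbf{Case (i).} The claim that $d_{ab}<d_{ac}$ ``follows from the fact that $a,b$ share their most recent common ancestor later than with $c$ on every such tree'' is not a proof. The expected pattern frequency matrix $F_{ab}$ is a $\gamma$-weighted mixture over displayed trees (and, under NMSCcom, over coalescent histories on each), and $\log|\det(\cdot)|$ does not commute with this averaging. The paper handles this by invoking the algebraic Lemma~4 of \cite{Allman2019} applied to the pattern frequency arrays on each displayed tree, in the spirit of Theorem~8 of that work; no shortcut based on MRCA depths suffices.

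\textbf{Case (ii).} Your analyticity argument is incorrect on two counts. At the boundary where a hybridization parameter tends to $0$ or $1$, the network degenerates to displaying a single rooted triple, say $ab|c$; then by case (i) one has $d_{bc}=d_{ac}>d_{ab}$, so $d_{bc}-\max(d_{ab},d_{ac})=0$, not strictly positive. And even if the function were positive at some point, analyticity only yields that it is \emph{nonzero} on a generic set, not that it keeps a fixed sign; it could cross zero. The paper's route is again to apply Lemma~4 of \cite{Allman2019} to each displayed tree separately, which gives $p,q<r$ directly for DT and NMSCcom without any boundary-to-interior reasoning. The genericity hypothesis in the statement is used only through \Cref{prop:LD4}, to ensure the three entries are distinct when there is a 4-blob.

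A minor simplification: for the structural dichotomy the paper avoids developing a separate rooted-triple analog of \Cref{lem:circ4taxontree} by attaching a new outgroup taxon at the root and applying \Cref{lem:4taxonset-displayedsplits} directly.
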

\begin{proof} Attaching a new outgroup taxon at the root
and applying Lemma \ref{lem:4taxonset-displayedsplits} shows
an induced rooted triple network displays
either 1 or 2 tree topologies, depending on whether it lacks or has a 4-blob.
Moreover, the network must have an internal cut edge $e$ separating $a,b$ from $c$
if only $ab|c$ is displayed.

If a rooted triple network displays only the tree topology $ab|c$,
then for all three models \Cref{prop:LD4} shows $D_{abc}=(p,q,q)$.
For the DT and NMSCcom model one can show $p<q$ using
the algebraic Lemma 4 of \cite{Allman2019} with an analysis of pattern frequency arrays
on each displayed tree, similar to Theorem  8 of that work.
For NMSCind, that $p<q$ follows from the assumption NoAnomLD.

If a rooted triple network displays 2 tree topologies, $ab|c$ and $ac|b$,
then $D_{abc}=(p,q,r)$ by  \Cref{prop:LD4}. That $p,q<r$ is shown for the DT and
NMSCcom models by considering the displayed trees individually and using Lemma 4 of \cite{Allman2019}.
For the NMSCind model, that $p,q<r$ follows from the NoAnomLD assumption.
\end{proof}

\begin{cor}
  Let $N^+$ be a rooted metric binary ultrametric phylogenetic network,
  with generic parameters.
  Then under the models DT, NMSCcom, and NMSCind+NoAnomLD models,
  the reduced rooted tree of blobs and the circular order for each
  extended outer-labeled planar blob on $N^+$ are identifiable
  from logDet distances.
\end{cor}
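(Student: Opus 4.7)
The plan is to reduce the corollary to a direct application of the three preceding results in this subsection together with \Cref{thm:rtcircorder}. First I would apply \Cref{thm:LDdisptree} to every 3-taxon subset of $X$. Under each of the three models (with the NoAnomLD assumption in the NMSCind case), this proposition says that the triple $D_{abc}$ of logDet distances distinguishes the two structural cases for an induced rooted triple network: either $D_{abc}$ has two equal entries, indicating an internal cut edge, or its three entries are distinct, indicating a 4-blob with a determined displayed quartet topology. In particular, logDet distances recover, for each 3-taxon subset, the rooted triples displayed by the induced rooted triple network.

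Next I would identify the reduced rooted tree of blobs of $N^+$. For each induced 3-taxon network, knowing which rooted triples are displayed is equivalent to knowing its own reduced rooted tree of blobs (a cherry versus a star), and \Cref{prop:ToBrt} assembles these to recover the reduced rooted tree of blobs of $N^+$. Having identified this tree, for each non-trivial blob $B$ I would then pass to the induced rooted network on any set of taxa containing representatives from the taxon blocks attached to $B$; by construction this induced network has a single non-trivial blob whose extended articulation nodes correspond bijectively to those of $B$.

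Finally, I would apply \Cref{thm:rtcircorder} to each extended outer-labeled planar blob $B$. For a full $B$-informative collection of 3-taxon sets, \Cref{thm:LDdisptree} again converts the logDet triples $D_{abc}$ into the 3-taxon circular order information of \Cref{def:tripordinfo}: either an internal cut edge of the induced rooted triple network gives the two circular orders of case~\ref{item:tcut}, or a 4-blob gives the unique circular order of case~\ref{item:t4blob}. \Cref{thm:rtcircorder} then yields the unique circular order on the extended articulation nodes of $B$, completing the proof.

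The only substantive step is verifying that the generic-parameter and NoAnomLD hypotheses propagate correctly from the full network $N^+$ down to each induced 3-taxon subnetwork used by \Cref{thm:LDdisptree,prop:LD4}. Genericity is preserved because the relevant entries of $D_{abc}$ are analytic functions of the numerical parameters of $N^+$, so inequalities that hold on a Zariski-dense subset of the full parameter space specialize to inequalities on induced subnetworks; and the NoAnomLD assumption is stated precisely as a condition on all induced 3-taxon subnetworks, so it transfers without modification. With these points in place, the corollary follows by concatenating the steps above.
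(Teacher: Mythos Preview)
Your approach is essentially the same as the paper's: use the logDet dichotomy on each 3-taxon set to recover the induced rooted triple trees of blobs, assemble these via \Cref{prop:ToBrt} into the full reduced rooted tree of blobs, and then invoke \Cref{thm:LDdisptree} together with \Cref{thm:rtcircorder} for the circular order of each extended outer-labeled planar blob.

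One citation issue worth fixing: in your first step you appeal to \Cref{thm:LDdisptree} for \emph{every} 3-taxon subset of $X$, but that proposition is only stated for $B$-informative 3-taxon sets of an extended outer-labeled planar blob $B$. An arbitrary 3-taxon set need not be $B$-informative for any such $B$ (e.g., if the relevant blob in $N^+$ is not outer-labeled planar, or if two of the taxa lie in the same block). The result that actually gives the dichotomy (two equal entries $\Leftrightarrow$ internal cut edge; three distinct entries $\Leftrightarrow$ 4-blob) for an arbitrary induced rooted triple network is \Cref{prop:LD4}, and this is what the paper cites for recovering the tree of blobs. Your description of the content is in fact that of \Cref{prop:LD4}; only the label is off. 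With that correction your argument matches the paper's.
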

\begin{proof}
The tree of blobs for each triple of taxa is identifiable by \Cref{prop:LD4}, and hence for the full network by \Cref{prop:ToBrt}.  \Cref{thm:LDdisptree} and \Cref{thm:rtcircorder}
give the identifiability of circular orders for extended outer-labeled planar blobs.
\end{proof}


\section{Identifiability limits beyond the circular order}

{We next exhibit} outer-labeled planar networks with distinct topologies
that have the same pairwise average genetic distances, quartet CFs, or logDet distances.
We similarly show that distinguishing between outer-labeled planar
networks and non outer-labeled planar networks is not always possible via these data types.

Some such non-identifiability results have been found previously.
Even level-1 4-taxon network topologies are not always distinguishable from one-another
\cite{Solis-Lemus2016,Banos2019,2023XuAne_identifiability}.
In any network, a 2-blob may be replaced with a single
tree edge without affecting quartet CFs nor average distances
\cite{2024Ane-anomalies,2023XuAne_identifiability}.
Also 3-blobs may be shrunk into a single 3-taxon subtree
without affecting average distances \cite{2023XuAne_identifiability}, and in some level-1 cases
be shrunk or have their hybrid node moved without
affecting CFs \cite{2024ABGarrotelopesR}. 

Our new examples of indistinguishable networks have larger blobs,
including ones of high level.
While some of these are outer-labeled planar
and thus have the same identifiable circular order,
others are not.
The fact that they are nonetheless associated with a unique circular order
suggests that it may be possible to meaningfully generalize the
notion of circular order beyond the class of outer-labeled planar networks.

{Most importantly, these examples underscore the importance
of determining how computational feasibility and
statistically valid inference from biological data
can be balanced when the network structure
is not assumed to be simple. What classes of networks can be identified from what data types remains an open problem.}

\subsection{Limits of identifiability from average distances}

We present here a class of 5-taxon networks that
are indistinguishable from a network with a single reticulation,
based on average distances under the DT model and NMSCcom models.
This includes the networks in \Cref{fig:CFnetworks}, but also others
such as in \Cref{fig:5-sunlet-labeled}.


\begin{prop}\label{prop:5-blob-unident-AD}
  Let $M$ be the DT or NMSCcom model.
  Let $N$ be a metric binary network on $\{a,b,c,d,h\}$ such that:
  \begin{enumerate}
  \setlength{\itemsep}{0pt}
  \setlength{\parskip}{0pt}
  \item $N$ contains a 5-blob.
  \item The subnetwork $N_{\{a,b,c,d\}}$ contains a cut edge that
    induces the split $ad|bc$.
  \item The degree-2 nodes in $N_{\{a,b,c,d\}}$ at which paths to $h$ last
    leave $N$ all lie on an up-down path from $a$ to $b$,
	  and all are incident to two cut edges in $N_{\{a,b,c,d\}}$.
\end{enumerate}
  If the edges incident to $a$ and $b$ in $N$ are sufficiently long, then the
  average distances on $N$ also fit a 5-sunlet $N_1$ as shown in
  \Cref{fig:5-sunlet-labeled} left:
  $\mathfrak{D}_{N}^M = \mathfrak{D}_{N_1}^M$.
\end{prop}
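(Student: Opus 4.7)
The plan rests on the decomposition \eqref{eq:Dsum}, namely
$\mathfrak{D}_N^M(x,y) = \sum_{S^+} \gamma(S^+)\,\mathfrak{D}_{S^+}^M(x,y)$,
valid for both $M = \mathrm{DT}$ and $M = \mathrm{NMSCcom}$. My goal is to characterize the displayed trees of $N$ on $\{a,b,c,d,h\}$ together with their $\gamma$-weights, and then exhibit metric parameters on a 5-sunlet $N_1$ whose two displayed trees produce the same weighted sum of tree distances for every pair of taxa.

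First I would use hypotheses 2 and 3 to sharply constrain the displayed trees of $N$. The cut edge in $N_{\{a,b,c,d\}}$ inducing $ad|bc$ is not part of any cycle, so it survives in every displayed tree restricted to those four taxa; thus every displayed tree of $N$ realizes the quartet $ad|bc$. By hypothesis 3, each path from $h$ into $N_{\{a,b,c,d\}}$ terminates at a degree-2 node that lies on an up-down path from $a$ to $b$ and is sandwiched between two cut edges of $N_{\{a,b,c,d\}}$. Consequently, in every displayed tree $T$ of $N$ the subtree carrying $h$ attaches somewhere along the $a$-to-$b$ spine of $T|_{\{a,b,c,d\}}$, so $T$ is a binary caterpillar realizing $ad|bc$ with $h$ inserted at some interior point of that spine.

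Next I would take $N_1$ to be the 5-sunlet in \Cref{fig:5-sunlet-labeled} with cyclic order $(d,a,h,b,c)$ and a single hybrid node at $h$, so that both of $N_1$'s displayed trees likewise realize $ad|bc$ with $h$ attached at one of two positions on the $a$-$b$ path, carrying weights $\gamma$ and $1-\gamma$. For each of the $\binom{5}{2}=10$ taxon pairs I would equate $\mathfrak{D}_{N_1}^M(x,y)$ with $\mathfrak{D}_N^M(x,y)$ and solve for the free parameters of $N_1$ (pendant edge lengths, five cycle edge lengths, hybridization parameter $\gamma$, and for NMSCcom, population sizes). The six equations for pairs in $\{a,b,c,d\}$ are soluble because the restriction of $\mathfrak{D}_N^M$ to $\{a,b,c,d\}$ already fits the tree $ad|bc$ by the 4-point condition and \Cref{prop:4taxadist}. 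The remaining four equations, involving $h$, determine $\gamma$ together with the cycle edges incident to $h$ and the pendant length on $h$.

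The main obstacle lies in the $h$-involving equations. Across the many displayed trees of $N$, the taxon $h$ may attach at multiple positions with varying $\gamma$-weights, whereas $N_1$ offers only two attachment positions. Showing that this multi-point weighted mixture can be collapsed into the two-point mixture of $N_1$ while keeping every inferred edge length non-negative is the delicate step; the hypothesis that the pendant edges of $a$ and $b$ are sufficiently long is precisely what furnishes the slack needed to absorb any such adjustments. For NMSCcom there is the additional task of matching the coalescent contributions, handled by choosing population sizes on $N_1$ compatibly with those on $N$, using the fact that both sides decompose linearly over displayed trees.
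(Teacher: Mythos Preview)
Your overall strategy---pin down the unrooted topologies of the displayed trees of $N$, then match the resulting mixture of tree distances to those of a 5-sunlet---is the same as the paper's. However, two pieces of your plan are underdeveloped compared to what the paper actually does.

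First, you never use hypothesis~1. You correctly argue from hypotheses~2 and~3 that every displayed tree has $h$ attached somewhere on the $a$--$b$ spine of the $ad|bc$ quartet tree, giving at most three unrooted topologies. But for the sunlet parameters to be valid (in particular $0<\gamma<1$ and the cycle edges positive), you need that $h$ attaches on \emph{both} sides of the central edge with positive weight---i.e.\ that both the ``$h$ near $a$'' and ``$h$ near $b$'' topologies are actually displayed. This is where the 5-blob hypothesis enters: the paper proves a separate lemma (\Cref{lem:cutedge-fromdisplayedtrees}) that a binary 5-taxon network with a 5-blob cannot have all its displayed trees share a common nontrivial split, which rules out the degenerate cases.

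Second, your ``solve the 10 equations'' step and the accompanying ``collapse the multi-point mixture'' discussion is exactly the hard part, and you have not indicated how to carry it out. The paper sidesteps direct equation-solving by first proving a characterization lemma (\Cref{prop:5-sunlet-params}): a metric $D$ on five taxa arises as $\mathfrak{D}_{N_1}^M$ for some 5-sunlet $N_1$ if and only if $D$ satisfies four explicit conditions (strict triangle inequality, a strict 4-point equality on $\{a,b,c,d\}$, circular-order compatibility on each 4-subset, and positivity of two derived quantities $\hat\mu_a,\hat\mu_b$). Then \Cref{lem:5-sunlet-displayed-tree-criterion} verifies these conditions for $\mathfrak{D}_N^M$ by averaging the corresponding (non-strict) inequalities over displayed trees, with strictness supplied by the two required topologies above. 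Only the last condition requires the ``sufficiently long pendant edges'' hypothesis, since $\hat\mu_a,\hat\mu_b$ increase one-for-one with those lengths. This is much cleaner than attempting to solve for eleven parameters directly.

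Finally, your NMSCcom remark about ``choosing population sizes on $N_1$ compatibly'' is vague. The paper handles it by taking a constant $s=\mu\eta$ on $N_1$, so that $\mathfrak{D}^{\mathrm{NMSCcom}}=\mathfrak{D}^{\mathrm{DT}}+2s$; the characterization lemma then goes through with $\hat\mu_j$ replaced by $\hat\mu_j-s$.
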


\begin{figure}[h]
  \centering
  \includegraphics[scale=1]{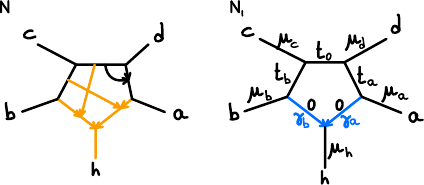}
  \caption{ 
  $N$ (left) is an example network satisfying the conditions of
  \Cref{prop:5-blob-unident-AD}. It is
  indistinguishable from the level-1 network $N_1$ (right) using average distances
  under the DT or NMSCcom model
  provided the pendent edges to $a$ and $b$ are sufficiently long.
  Orange edges of $N$ are 
  absent from $N_{abcd}$, which is shown in black.
  $N_1$ is a \emph{$5$-sunlet}, that is, has 5 taxa and a single 5-cycle.
  Edge lengths, in black, are in substitutions per site
  ($\mu(e)g(e)$ in Def.~\ref{def:gene-tree-models}).
  Hybrid edges and their $\gamma$s are shown in blue.}
  \label{fig:5-sunlet-labeled}
\end{figure}

\noindent

Our proof proceeds by establishing two claims: First,
the trees displayed on $N$
have 2 or 3 distinct unrooted topologies, shown in \Cref{fig:disp-trees-top}.
Second, if a network $N$ displays these topologies, then its average distances fit a 5-sunlet.
For this second step, we establish several lemmas.

\begin{figure}[h]
  \centering
  \includegraphics[scale=1]{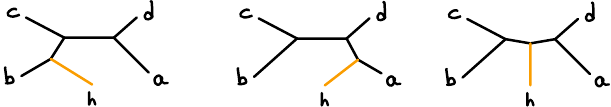}
  \caption{Topologies of trees displayed on $N$ under the conditions in
  \Cref{prop:5-blob-unident-AD}.
  The left and middle topologies must be displayed.
  The right topology may or may not be displayed.}
  \label{fig:disp-trees-top}
\end{figure}

\begin{lemma}\label{prop:5-sunlet-params}
  Let $N_1$ be the 5-sunlet with parameters as shown
  in \Cref{fig:5-sunlet-labeled} (right),
  $\mathfrak{D} = \mathfrak{D}_{N_1}^M$ and $\dsp = \dsp(\mathfrak{D})$,
  where $M$ is the DT model or the NMSCcom model with an expected
  number of substitutions per coalescent unit
  $s = \mu(e)\popsize(e)$ constant across edges $e$ in $N_1$. Then
  \begin{align}\label{eq:5-sunlet-params-1}
    2t_0 &= \dsp_{ac|bd} - \dsp_{ad|bc}
    & 2t_a &= (\dsp_{ab|hd} - \dsp_{ah|bd})/\gamma_b \\ 
    \gamma_b &= (\dsp_{bd|ch} - \dsp_{bc|dh})/(2 t_0) \nonumber 
    & 2t_b &= (\dsp_{ba|hc} - \dsp_{bh|ac})/\gamma_a 
  \end{align}
  and for $j=a,b,c,d,h$, $\mu_j = \hat{\mu}_j$ under the DT model or
  $\mu_j =\hat{\mu}_j - s$ under the NMSCcom model,
  where
  \begin{align}\label{eq:5-sunlet-params-2}
    2\hat{\mu}_c &= \mathfrak{D}(c,b) + \mathfrak{D}(c,d) - \mathfrak{D}(b,d)
    & 2\hat{\mu}_h &= \mathfrak{D}(a,h) + \mathfrak{D}(b,h) - \mathfrak{D}(a,b) \\
    2\hat{\mu}_d &= \mathfrak{D}(d,a) + \mathfrak{D}(d,c) - \mathfrak{D}(a,c) \nonumber
    & 2\hat{\mu}_j &= \mathfrak{D}(j,c) + \mathfrak{D}(j,d) - \mathfrak{D}(c,d) - 2t_j
    \mbox{ \textnormal{for} }j\in\{a,b\}\,.
  \end{align}

  Furthermore, given a metric  $D$ on $\{a,b,c,d,h\}$, there exists a
  5-sunlet $N_1$ as in \Cref{fig:5-sunlet-labeled} (right),
  such that $D=\mathfrak{D}_{N_1}^M$,
  if and only if the following holds, denoting $\dsp=\dsp(D)$:
  \begin{enumerate}
  \setlength{\itemsep}{0pt}
  \item[\textnormal{(a)}] $D$ satisfies the triangle inequality strictly.
  \item[\textnormal{(b)}]
  $D$ satisfies the strict 4-point condition 
  $\dsp_{ad|bc} < \dsp_{ac|bd} = \dsp_{ab|cd}$.
  \item[\textnormal{(c)}]
  On each 4-taxon subset, $D$ is compatible with the circular order
  $(a,d,c,b,h)$:
  \[\begin{split}
  \dsp_{bd|ch} > \max\{\dsp_{bc|dh}, \dsp_{bh|cd}\}, &\quad
  \dsp_{ac|dh} > \max\{\dsp_{ad|ch}, \dsp_{ah|cd}\}  \\
  \dsp_{ab|dh} > \max\{\dsp_{ad|bh}, \dsp_{ah|bd}\}, &\quad
  \dsp_{ab|ch} > \max\{\dsp_{ac|bh}, \dsp_{ah|bc}\}
  \end{split}\]
  \item[\textnormal{(d)}]
  $\hat{\mu}_a > 0$ and $\hat{\mu}_b > 0$ where
  $\hat{\mu}_a$ and $\hat{\mu}_b$ are calculated from $D$ using
  \eqref{eq:5-sunlet-params-1} and \eqref{eq:5-sunlet-params-2}.
  \end{enumerate}
\end{lemma}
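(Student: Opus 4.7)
The plan is to compute $\mathfrak{D}_{N_1}^M$ explicitly on the 5-sunlet, invert the resulting system of ten pairwise-distance equations to recover \eqref{eq:5-sunlet-params-1}--\eqref{eq:5-sunlet-params-2}, and then read off the characterization by checking positivity of each parameter.

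First I would reduce the NMSCcom case to the DT case. Since $N_1$ displays exactly the two trees $T_a$ and $T_b$ obtained by deleting one of the two hybrid edges at $h$, we have $\mathfrak{D}_{N_1}^{\mathrm{DT}}(x,y)=\gamma_a D_{T_a}(x,y)+\gamma_b D_{T_b}(x,y)$. With $s=\mu(e)\popsize(e)$ constant across edges, the MSC distance on each displayed tree is uniformly $D_T(x,y)+2s$: two lineages meeting at a species MRCA coalesce with expected time $\popsize$ generations above it, contributing $\mu\,\popsize=s$ per side. Hence $\mathfrak{D}_{N_1}^{\mathrm{NMSCcom}}=\mathfrak{D}_{N_1}^{\mathrm{DT}}+2s$ for every pair. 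This uniform shift is exactly what the $\mu_j=\hat\mu_j-s$ correction absorbs into the pendent edge lengths, so it suffices to treat the DT model.

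Next, I would write $\mathfrak{D}_{N_1}^{\mathrm{DT}}(x,y)$ as an explicit linear expression in $\mu_a,\mu_b,\mu_c,\mu_d,\mu_h,t_0,t_a,t_b,\gamma_b$ (using $\gamma_a=1-\gamma_b$) for each of the ten taxon pairs, and verify \eqref{eq:5-sunlet-params-1} by telescoping. Each sum-of-pairs difference $\dsp_{xy|zw}-\dsp_{xz|yw}$ cancels all pendent contributions by design, and the remaining internal-edge contributions collapse because only those displayed trees containing the relevant internal quartet contribute to the difference; the factors of $\gamma_a,\gamma_b$ appearing in the $t_a,t_b$ formulas reflect that only one of the two displayed trees places $h$ on the side producing a nonzero contribution, while $t_0$ appears in both. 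Equations \eqref{eq:5-sunlet-params-2} then follow from the standard tree identity $2\mu_j=D(j,x)+D(j,y)-D(x,y)$ applied pair of displayed trees and averaged, with the $-2t_j$ correction for $j\in\{a,b\}$ accounting for the internal segment $t_j$ sitting between $j$ and the node at which the $j$-to-$c$ and $j$-to-$d$ paths coincide.

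For the characterization, necessity of (a)--(d) is direct: (a) holds because all edge lengths are strictly positive; (b) follows from \Cref{prop:4taxadist} applied to the $\{a,b,c,d\}$ subnetwork of $N_1$, which has a cut edge inducing split $ad|bc$; (c) follows from four applications of \Cref{prop:4taxadist} to the four 4-taxon sub-sunlets of $N_1$, each outer-labeled planar with the restricted circular order induced from $(a,d,c,b,h)$; and (d) is by definition the positivity of the derived $\mu_a,\mu_b$. For sufficiency, given (a)--(d), define the ten parameters by \eqref{eq:5-sunlet-params-1}--\eqref{eq:5-sunlet-params-2} and verify positivity: $t_0>0$ from (b); $\gamma_b\in(0,1)$ from the $\dsp_{bd|ch}$ inequality in (c) together with $t_0>0$; $t_a,t_b>0$ from the $\dsp_{ab|hd}$ and $\dsp_{ba|hc}$ inequalities in (c); $\mu_c,\mu_d,\mu_h>0$ from the strict triangle inequalities in (a); and $\mu_a,\mu_b>0$ from (d). Substituting the parameters back into the explicit distance formula recovers $D$ on all ten pairs; the apparent over-determination (ten equations, nine parameters) is absorbed by the 4-point equality $\dsp_{ac|bd}=\dsp_{ab|cd}$ built into (b). The main obstacle is the bookkeeping in the explicit distance expansion: once the four SP-difference telescopings are verified, the rest reduces to routine positivity checks already packaged into (a)--(d) and the uniform-shift reduction to the DT model.
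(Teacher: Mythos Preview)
Your proposal is correct and follows essentially the same route as the paper: direct computation of the ten pairwise distances on the 5-sunlet under DT, reduction of NMSCcom to DT via the uniform $+2s$ shift, necessity of (a)--(d) via \Cref{prop:4taxadist}, and sufficiency by checking positivity of each fitted parameter from the corresponding condition. The one point the paper makes explicit that you leave implicit is the converse under NMSCcom: having shown $\mathfrak{D}^{\mathrm{NMSCcom}}=\mathfrak{D}^{\mathrm{DT}}+2s$, one must actually \emph{choose} an $s$ (the paper takes $s=\min_j\hat\mu_j/2$) and assign $\mu(e),\popsize(e),g(e)$ so that the resulting pendent lengths $\hat\mu_j-s$ remain positive, but this is a routine completion of your argument.
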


\noindent
Note that in each equation of \eqref{eq:5-sunlet-params-1} and
\eqref{eq:5-sunlet-params-2}, the right-hand-side involves
only distances and terms given by preceding equations.
Therefore they allow for
all parameter values to be found from pairwise distances.

\begin{proof}
  By expressing the distances
  $\mathfrak{D}=\mathfrak{D}_{N_1}^\mathrm{DT}$ in terms of parameters
  $t_i, \mu_j, \gamma_b$ where $i = 0, a, b$ and $j = a,b,c,d,h$,
  \eqref{eq:5-sunlet-params-1} and
  \eqref{eq:5-sunlet-params-2} with $\hat{\mu}_j=\mu_j$
  can be verified by calculation.
  For $\gamma_b$, an intermediate but equivalent expression is
  $2\gamma_b - 1 = (\mathfrak{D}(c,h) - \mathfrak{D}(d,h) - \mu_c + \mu_d)/t_0$.

  Under the NMSCcom model,
  $\mathfrak{D}= \sum_T\gamma(T)\; \mathfrak{D}_T^\mathrm{MSC}$ is a
  weighted sum over displayed trees.
  On a tree $T$, assuming a constant mutation rate per coalescent unit
  $s = \mu\popsize$, we get
  $\mathfrak{D}_{T}^\mathrm{MSC}(x,y) = \mathfrak{D}_{T}^\mathrm{DT}(x,y) + 2s$,
  where the extra $2s$ is the average number of substitutions
  after 2 lineages reach a common population (going back in time)
  as it takes an average of 1 coalescent unit for two lineages to coalesce.
  Then $\mathfrak{D}_{N_1}^\mathrm{NMSCcom} = \mathfrak{D}_{N_1}^\mathrm{DT} + 2s$
  and $\dsp^\mathrm{NMSCcom} = \dsp^\mathrm{DT} + 4s$. 
  Therefore \eqref{eq:5-sunlet-params-1} for DT implies
  \eqref{eq:5-sunlet-params-1} for NMSCcom. Also,
  \eqref{eq:5-sunlet-params-2} with $\hat{\mu}_j = \mu_j$ for the DT model implies
  \eqref{eq:5-sunlet-params-2} for the NMSCcom model with $\hat{\mu}_j = \mu_j + s$:
  pendent edge lengths are overestimated by the average number of
  substitutions between speciation and coalescent times.

  For the second part,
  $\mathfrak{D}$ satisfies (a) 
  because pendent edges, being tree edges, are assumed of positive length.
  $\mathfrak{D}$ satisfies (b-c) by
  \Cref{prop:4taxadist} and (d) by the first part.
  Conversely, let $D$ be a metric satisfying (a-d). Then we apply
  \eqref{eq:5-sunlet-params-1} to obtain ``fitted'' parameters
  $\hat{t}_i$ ($i = 0, a, b$) and $\hat{\gamma}=\hat{\gamma}_b$,
  and \eqref{eq:5-sunlet-params-2} to obtain
  $\hat{\mu}_j$ ($j = a,b,c,d,h$).
  We now show that these parameters
  are valid, that is:  $\hat{t}_i > 0$, $0<\hat{\gamma}<1$ and
  $\hat{\mu}_j > 0$.

  We have $\hat{\mu}_a, \hat{\mu}_b > 0$ by (d).
  By (a),
  $\hat{\mu}_h, \hat{\mu}_c, \hat{\mu}_d > 0$.
  From \eqref{eq:5-sunlet-params-1} we get $\hat{t}_0>0$ by (b)
  and $\hat{\gamma} > 0$ by (c).
  Similarly, $\hat{\gamma}_a = 1 - \hat{\gamma} > 0$ and consequently $0<\hat{\gamma}<1$.
  Finally, $t_a,t_b > 0$ by (c).
  $N_1$ can then be assigned these fitted parameters,
  and by the first part, $\mathfrak{D}_{N_1}^\mathrm{DT}=D$.
  Under the NMSCcom model, let
  $s=\min_{j=a,b,c,d,h}\{\hat{\mu}_j/2\}>0$. To edge $e$ in $N_1$
  we assign $\mu(e)=1$ mutation per generation,
  population size $\popsize(e)=1/s$, and length $g(e)$
  generations such that the expected number of substitutions
  $\mu(e)g(e)$ equals $t_i$ ($i=0,a,b$) for internal edges or
  $\hat{\mu}_j-s>0$ ($j=a,b,c,d,h$) for pendent edges.
  Then by the first part, $\mathfrak{D}_{N_1}^\mathrm{NMSCcom}=D$.
\end{proof}

\begin{lemma}
  \label{lem:5-sunlet-displayed-tree-criterion}
  Let $M$ be the DT or NMSCcom model.
  Let $N$ be a binary network on $\{a,b,c,d,h\}$.
  Assume that every tree displayed in $N$, after suppressing degree-2 nodes,
  has one of the topologies shown in \Cref{fig:disp-trees-top},
  with the left and middle topologies both displayed.
  If the pendent edges to $a, b$ in $N$ are sufficiently long,
  then $\mathfrak{D}_{N}^M = \mathfrak{D}_{N_1}^M$
  for a 5-sunlet $N_1$.
\end{lemma}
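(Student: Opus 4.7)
The plan is to apply the characterization in \Cref{prop:5-sunlet-params} by verifying that $\mathfrak{D}:=\mathfrak{D}_N^M$ satisfies its four conditions (a)--(d). Under either model, Equation~\eqref{eq:Dsum} writes $\mathfrak{D} = \sum_{S^+} \gamma(S^+)\,\mathfrak{D}_{S^+}^M$ as a convex combination over the displayed trees $S^+$, all of which (by hypothesis) have one of the three topologies of \Cref{fig:disp-trees-top}. A preliminary inspection of those topologies should confirm two facts: each displayed topology restricts to the quartet $ad|bc$ on $\{a,b,c,d\}$, and on every 4-taxon subset the displayed quartet is compatible with the circular order $\sigma=(a,d,c,b,h)$.

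First I would verify conditions (a)--(c) by combining inequalities that hold on each displayed tree individually. The strict triangle inequality (a) holds on any tree with positive pendent edge lengths and is preserved by convex combinations. For (b), the 4-point condition applied to each $S^+$ yields $\dsp_{ad|bc}(\mathfrak{D}_{S^+}^M) < \dsp_{ac|bd}(\mathfrak{D}_{S^+}^M) = \dsp_{ab|cd}(\mathfrak{D}_{S^+}^M)$, and these inequalities persist under averaging. For (c), each displayed tree's quartet on a given 4-subset is non-crossing in $\sigma$, so its 4-point inequality places the $\sigma$-crossing sum-of-pairs in the maximum position; the hypothesis that both the left and middle topologies of \Cref{fig:disp-trees-top} are displayed supplies the strict inequality after averaging on each 4-subset.

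The main obstacle is condition (d), which requires that the fitted pendent parameters $\hat\mu_a,\hat\mu_b$ computed from $\mathfrak{D}$ via \eqref{eq:5-sunlet-params-1} and \eqref{eq:5-sunlet-params-2} be positive. The idea is to exploit that the pendent edge to $a$, of length $\lambda_a$, lies on every path from $a$ to another taxon in every displayed tree. Writing $\mathfrak{D}(a,x) = \lambda_a + \mathfrak{D}^\circ(a,x)$ for $x\neq a$, where $\mathfrak{D}^\circ$ denotes the average distance obtained when $\lambda_a$ is set to zero, and $\mathfrak{D}(y,z) = \mathfrak{D}^\circ(y,z)$ for $y,z\neq a$, direct substitution into~\eqref{eq:5-sunlet-params-1} shows that $2\hat t_0$, $\hat\gamma_b$, and $2\hat t_a$ are each four-term combinations in which the two $\lambda_a$ contributions cancel, while $2\hat\mu_a = 2\lambda_a + (\text{terms independent of }\lambda_a)$ in~\eqref{eq:5-sunlet-params-2}. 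Hence $\hat\mu_a>0$ once $\lambda_a$ is sufficiently large, and the symmetric argument for $\lambda_b$ gives $\hat\mu_b>0$. With (a)--(d) all in hand, \Cref{prop:5-sunlet-params} then produces the desired 5-sunlet $N_1$.
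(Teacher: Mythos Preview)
Your proposal is correct and follows essentially the same route as the paper: verify conditions (a)--(d) of \Cref{prop:5-sunlet-params} for $\mathfrak{D}_N^M$ by writing it as a convex combination over displayed trees, use \Cref{prop:4taxadist} tree-by-tree for (a)--(c) with the left and middle topologies supplying strictness in (c), and handle (d) by observing that lengthening the pendent edge to $a$ (resp.\ $b$) shifts $\hat\mu_a$ (resp.\ $\hat\mu_b$) additively while leaving the other fitted parameters unchanged. Your explicit cancellation argument for (d) is a slightly more detailed version of the paper's one-line observation, but the content is the same.
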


\begin{proof}
  We will show that $D=\mathfrak D_{N}^M$
  satisfies (a-d) of \Cref{prop:5-sunlet-params}.
  $D$ is a convex combination of $\mathfrak D_{T}^M$
  for trees $T$ displayed in $N$.
  For each displayed tree $T$, $\mathfrak D_{T}^M$ satisfies (a-b)
  so $D$ does too.
  Each expression in (c) is equivalent to two inequalities,
  such as $\dsp_{bd|ch} > \dsp_{bc|dh}$ and $\dsp_{bd|ch} > \dsp_{bh|cd}$.
  Non-strict versions of both inequalities hold for all displayed trees
  by \Cref{prop:4taxadist}, with the strict inequality holding for either the
  left or the middle tree topology. Averaging over displayed trees
  shows the inequalities hold strictly for $D$.
  
  For the final condition (d), we note that when the length of the
  pendent edge to $a$ (resp. $b$) is increased by some amount,
  $\hat{\mu}_a$ (resp. $\hat{\mu}_b$) increases by the same amount,
  and all other fitted parameters are unchanged.
  Hence, for sufficiently long pendent edges to $a$ and $b$, (d) is satisfied.
  The conclusion then follows from \Cref{prop:5-sunlet-params}.
\end{proof}

To prove that the conditions of \Cref{prop:5-blob-unident-AD} imply the
assumptions of \Cref{lem:5-sunlet-displayed-tree-criterion},
a few more definitions will be useful.

\begin{defn}
  Let $N$ be a network on $X$, and $x \in X$. We write
  $N_{\setminus x}$ for the induced subnetwork on ${X \setminus \{x\}}$.
  An \emph{attachment node} of $x$ in $N$
  is a node $u$ in $N_{\setminus x}$ that is, in $N$, incident to an
  edge $e \notin E(N_{\setminus x})$.  The edge $e$ is called an
  \emph{attachment edge}.
\end{defn}

\begin{lemma}
  \label{lem:cutedge-fromdisplayedtrees}
  Let $N$ be a binary network on 5 taxa. If all its unrooted
  displayed trees share the same non-trivial split,
  then $N$ has a cut edge corresponding to that split.
\end{lemma}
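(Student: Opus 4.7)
The plan is to reduce the 5-taxon claim to the 4-taxon case via \Cref{lem:4taxonset-displayedsplits} (together with its non-binary extension from the Appendix), and then close the argument with an incompatibility-of-splits observation.

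Without loss of generality, write the shared non-trivial split as $\sigma = \{a,b\}|\{c,d,e\}$. For each pair $\{x,y\} \subset \{c,d,e\}$, I first argue that every unrooted displayed tree of $N$, when restricted to $S = \{a,b,x,y\}$, yields an unrooted displayed tree of $N_S$ still containing the split $\{a,b\}|\{x,y\}$; hence $N_S$ displays only the quartet $ab|xy$. Applying \Cref{lem:4taxonset-displayedsplits} (invoking its non-binary counterpart if $N_S$ fails to be binary) then produces a cut edge of $N_S$ separating $\{a,b\}$ from $\{x,y\}$.

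Next I would proceed by contradiction: suppose $N$ has no cut edge with split $\sigma$. A non-trivial cut edge of $N_S$ with split $\{a,b\}|\{x,y\}$ must lift to a cut edge of $N$ whose split restricts to $\{a,b\}|\{x,y\}$ on $S$; inspecting how the fifth taxon $z \in \{c,d,e\}\setminus\{x,y\}$ distributes between the two sides shows the only possibilities are $\{a,b\}|\{c,d,e\}$ (excluded by assumption) or $\{a,b,z\}|\{x,y\}$. Letting $\{x,y\}$ range over its three options therefore furnishes cut edges of $N$ inducing the three splits $\{a,b,e\}|\{c,d\}$, $\{a,b,d\}|\{c,e\}$, and $\{a,b,c\}|\{d,e\}$. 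A direct check confirms these three splits are pairwise incompatible (for any two of them, none of the four pairwise intersections of their blocks is empty), yet all cut-edge splits of $N$ must be pairwise compatible since they correspond to edges of its tree of blobs, which is a tree; this is the required contradiction.

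The main hurdle I anticipate is justifying the lifting claim, since a cut edge of $N_S$ could \emph{a priori} arise from a non-trivial blob of $N$ (particularly a $2$-blob) that collapses under restriction rather than from a genuine cut edge of $N$. To address this, I plan to observe that any $2$-blob of $N$ whose taxon-block partition restricts appropriately still contributes a genuine cut edge of $N$, namely the external pendant edge of the articulation node carrying the block containing $\{a,b\}$; this cut edge of $N$ has split of one of the two forms listed above, so the split-incompatibility argument proceeds as stated.
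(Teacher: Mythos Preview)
Your reduction to \Cref{lem:4taxonset-displayedsplits} on each $4$-taxon subset $S=\{a,b,x,y\}$ is sound, and the incompatibility argument at the end is correct. The gap is in the lifting step, and your proposed fix does not close it.

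You anticipate only the case that a cut edge of $N_S$ lies inside a $2$-blob of $N$. But the serious obstruction is when it lies inside a $4$- or $5$-blob of $N$. Concretely, suppose $N$ has a $5$-blob. Then the only cut edges of $N$ are the five pendent edges, whose splits are all of type $\{t\}\mid\{\text{rest}\}$; none of these restricts to $\{a,b\}\mid\{x,y\}$ on $S$. So no cut edge of $N$ has split restricting to $\{a,b\}\mid\{x,y\}$, and your lift simply does not exist. (Networks of the kind in \Cref{fig:CFnetworks} show that a $5$-blob of $N$ can collapse completely when one taxon is removed, so you cannot avoid this by arguing that $N_S$ would necessarily retain a $4$-blob.) A similar failure occurs when $N$ has a $4$-blob whose unique internal cut edge induces a split other than $\{a,b\}\mid\{c,d,e\}$: for a suitable choice of $\{x,y\}$ the $ab\mid xy$ cut edge of $N_S$ again sits inside the $4$-blob and has no counterpart among the cut edges of $N$. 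In both situations your contradiction via three pairwise-incompatible cut-edge splits never gets off the ground.

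The paper's proof circumvents the lifting issue entirely by arguing directly on the blob structure of $N$: it shows that if $N$ has a $4$- or $5$-blob, then one can choose a specific taxon to delete so that the resulting $4$-taxon network \emph{still has a $4$-blob}, whence by \Cref{lem:4taxonset-displayedsplits} it displays two quartets, contradicting the shared-split hypothesis. In the $5$-blob case this choice uses a lowest hybrid node of the blob; in the $4$-blob case it uses the compatibility of the existing internal cut edge with the shared split. Your approach could be repaired by inserting exactly this kind of blob-structure analysis, but at that point it essentially becomes the paper's argument.
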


\begin{proof}  If $N$ has no $4$- or $5$-blob, the result is trivial, since
  all trees displayed in $N$ have the same unrooted topology as $N$'s tree
  of blobs, whose edges arise from $N$'s cut edges.

  Suppose then that $N$ has a $4$-blob.
  Let $S=a_1a_2|a_3a_4a_5$ be a split in all trees displayed on $N$.
  Since $N$ has a $4$-blob, it has an internal cut edge. If this
  cut edge corresponds to $S$, we are done. Otherwise, this cut edge
is compatible with $S$ because it is present in the displayed trees.
  Without loss of generality, assume that this cut edge corresponds
  to $a_1a_2a_3|a_4a_5$, in which case all trees displayed in $N$ have the
  same unrooted topology (after removing degree-2 nodes) and $N'=N_{\setminus a_5}$ has a 4-blob.
  But by \Cref{lem:4taxonset-displayedsplits} $N'$
  cannot have a 4-blob since all its unrooted displayed trees
  share the same split.

  Now suppose $N$ has a $5$-blob,
  and let $v$ be a lowest hybrid node in the blob.
  Since $N$ is on $5$ taxa, $v$ has exactly one descendent taxon,
  say $a_1$. Assume the split on all displayed trees is
  $S=a_1a_2|a_3a_4a_5$ or $S=a_1a_2a_3|a_4a_5$. In the former case
  we prune $x=a_5$ from $N$ and in the latter case we prune $x=a_3$, to
  consider $N' = N_{\setminus x}$.
  Then $N'$ has a $4$-blob yet all its displayed trees have split
  $a_1a_2|a_3a_4$, which contradicts \Cref{lem:4taxonset-displayedsplits}.
\end{proof}

\begin{proof}[Proof of \Cref{prop:5-blob-unident-AD}]
  Let $N$ be a network satisfying the conditions of \Cref{prop:5-blob-unident-AD}.
  By \Cref{lem:5-sunlet-displayed-tree-criterion}, we just need to show that
  its unrooted displayed trees, after suppressing degree-2 nodes, have one of the
  topologies in \Cref{fig:disp-trees-top}, and that the first two topologies
  are displayed in $N$.
  Let $T$ be a tree displayed in $N$. 
  Then $T_{\setminus h}$ is a tree displayed in $N_{\setminus h}$.
  It must contain cut edges from $N_{\setminus h}$,
  so by assumption 2
  it contains an edge of positive length corresponding to the split $ab|cd$.

  Let $v$ be the attachment node of $h$ in $T$.
  We claim that $v$ is an attachment node of $h$ in $N$ as well.
  Let $h=v_0,v_1,\cdots,v_k=v$ be the path from $h$ to $v$ in $T$,
  and let $v_j$ be the first node along this path to be in $N_{\setminus h}$.
  Then $(v_j,v_{j-1})$ (or $(v_{j-1},v_j)$) is not in  $N_{\setminus h}$
  and $v_j$ is an attachment node of $h$ in $N$. By assumption 3 and since
  $N$ is binary, $v_j$ is incident to 2 cut edges in $N_{\setminus h}$
  and to the attachment edge $(v_j,v_{j-1})$. All 3 edges are then in $T$,
  because $T_{\setminus h}$ must contain all cut edges in $N_{\setminus h}$.
  Therefore $v_j$ has degree 3 in $T$, $j=k$, and $v$ is an attachment node
  of $h$ in $N$.
  Since $v$ is a cut node in $N_{\setminus h}$, and by assumption 3,
  $v$ must be on every up-down path between $a$ and $b$ in $N_{\setminus h}$.
  In particular, $v$ is also on the path between $a$ and $b$ in $T_{\setminus h}$.
  This ensures that $T$ has one of the topologies in \Cref{fig:disp-trees-top}
  (after suppressing degree-2 nodes).

  It remains to show that $N$ displays the left and middle topologies of
  \Cref{fig:disp-trees-top}.
  By assumption 3, only these three trees could be displayed on $N$.
  By assumption 1, $N$ has a $5$-blob, so
  \Cref{lem:cutedge-fromdisplayedtrees}, implies
  the trees displayed on $N$ cannot all share a
  non-trivial split. This rules out only one tree being displayed,
  only the left and right trees (which share $hbc|da$) being displayed, and only
  the middle and right trees (which share $bc|dah$) being displayed.
  Thus either the left and middle trees, or all three trees, are displayed on $N$.
  Applying \Cref{lem:5-sunlet-displayed-tree-criterion} completes the proof.
\end{proof}

\subsection{Limits of identifiability from quartet concordance factors}
 
In this section, we present a family of 5-taxon networks with 5-blobs,
some outer-labeled planar and some not, which are indistinguishable from
a network with a single reticulation
using quartet CFs under the DT, NMSCcom, and NMSCind models.
This family, illustrated in \Cref{fig:CFnetworks}, is a subset of the class considered in
\Cref{prop:5-blob-unident-AD}.

\begin{figure}[h]
  \begin{center}
    \includegraphics[scale=1]{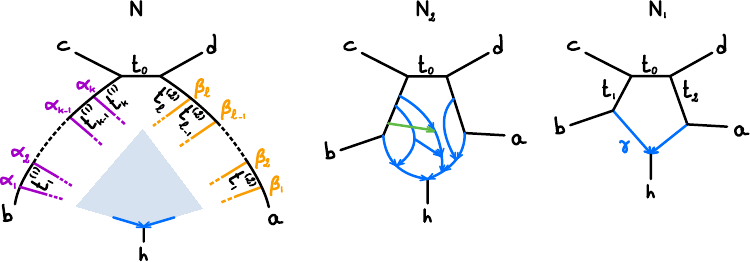}
    \caption{
    Example networks that are not distinguishable from quartet CFs.
    $N$ (left) shows the general structure of a binary network on
    $\{a,b,c,d,h\}$ with $N_{\{a,b,c,d\}}$ the tree $ab|cd$
    with extra degree-2 nodes on the pendent edges to $a,b$.
    At these nodes all paths to $h$ connect.
    $N_1$ (right, outer-labeled planar level-1) and $N_2$ (middle)
    have this structure.
    $N_2$ is not outer-labeled planar,
    but would be without the green edge.
    These networks display the same two tree topologies.
    }\label{fig:CFnetworks}
  \end{center}
\end{figure}

\begin{prop}\label{prop:5-blob-unident-CF}
  Let $M$ be the DT, NMSCcom, or NMSCind model.
  Let $N$ be a metric binary network on $\{a,b,c,d,h\}$ such that:
  \begin{enumerate}
    \setlength{\itemsep}{0pt}
    \setlength{\parskip}{0pt}
    \item The subnetwork $T=N_{\{a,b,c,d\}}$ is a tree with a cut edge that
      induces the split $ad|bc$.
    \item The degree-2 nodes in $T$ at which paths to $h$ last
      leave $N$ all lie on the pendent edges to $a$ or to $b$ in the reduced tree.
  \end{enumerate}
  As labelled in \Cref{fig:CFnetworks} (left),
  let $\alpha_i$ and $\beta_j$ be the probabilities that a lineage from $h$
  traces its ancestry back to specific degree-2 nodes in $T$, and
  let $t_i^{(1)}$ and $t_j^{(2)}$ be the lengths of edges between these nodes.
  Let $N_1$ be the 5-sunlet with topology and parameters as
  depicted in \Cref{fig:CFnetworks} (right).
  If $t_1$, $t_2$, and $\gamma$ are such that
  $\gamma = \sum_{i=1}^k\alpha_i$,
\begin{equation}\label{eq:CFnetworks}
\gamma  e^{-t_1} =   \sum_{i=1}^{k} \alpha_i e^{- \sum_{j=i}^{k} t^{(1)}_j},
\; \mbox{ and } \quad
(1-\gamma)  e^{-t_2} =   \sum_{i=1}^{\ell} \beta_i e^{- \sum_{j=i}^{\ell} t^{(2)}_j}
\end{equation}
  then $N$ and $N_1$ have identical quartet CFs under $M$.
\end{prop}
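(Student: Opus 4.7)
The plan is to express each of the five quartet CFs on $N$ as a convex combination over the possible attachment positions of $h$'s lineage to the backbone tree $T = N_{\{a,b,c,d\}}$, and match this term-by-term against the corresponding two-term mixture on $N_1$.

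A first observation is that the ``$h$-subnetwork'' of $N$ (the subgraph of edges and nodes not in $T$) has $h$ as its only descendant leaf. Tracing backward in time, only $h$'s single lineage ever enters it. Two consequences follow: the NMSCcom and NMSCind models coincide on $N$, because at any hybrid node in the $h$-subnetwork there is never more than one lineage to make an inheritance choice (the same argument applies to $N_1$); and under each of the three models, conditioning on where $h$'s lineage attaches to $T$ reduces the gene tree distribution to that of $M$ on the displayed tree $T^{(i)}_a$, obtained by grafting $h$ at the $i$-th degree-2 node on $a$'s pendent edge, or on $T^{(j)}_b$ for $b$'s pendent edge. The marginal probabilities of these attachments are $\alpha_i$ and $\beta_j$ by definition. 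On $N_1$ the analogous reduction yields a two-term mixture with weights $\gamma$ and $1-\gamma$ on two attachment trees $T^1$ and $T^2$.

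Consequently, for any quartet $Q$, the CF on $N$ equals $\sum_i \alpha_i \mathrm{CF}_Q(T^{(i)}_a) + \sum_j \beta_j \mathrm{CF}_Q(T^{(j)}_b)$, with a two-term analog on $N_1$. The quartet $\{a,b,c,d\}$ is trivial, since both networks restrict to $T$ on these taxa. For each of the four quartets containing $h$, I would verify the structural claim that the induced topology of $T^{(i)}_a|_Q$ depends only on which side $h$ attaches (not on the position $i$), and its internal branch length, when the resolved quartet has one, takes the form $C_Q + \sum_{k \ge i} t_k^{(1)}$ for a constant $C_Q$ depending only on $Q$ and the fixed edges of $T$. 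In cases where pruning the missing taxon from $Q$ suppresses the attachment node, this length degenerates to a position-independent $C_Q$. Under DT the CF on each $T^{(i)}_a$ is supported at a single topology determined by the side of attachment, so the mixture depends only on $\sum_i \alpha_i = \gamma$ and matches $N_1$ via the first equation. Under NMSC, the MSC CF of any quartet topology on a resolved 4-taxon tree with internal branch of length $L$ is affine in $e^{-L}$; applying this with $L = C_Q + \sum_{k \ge i} t_k^{(1)}$ and summing over $i$ produces terms in $\gamma$ and $e^{-C_Q} \sum_i \alpha_i e^{-\sum_{k \ge i} t_k^{(1)}}$, equal to $\gamma$ and $e^{-C_Q} \gamma e^{-t_1}$ on $N_1$ by the first two equations. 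An analogous computation for the $b$-side invokes the first and third equations, and combining both sides gives the quartet CF equality.

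The main obstacle, essentially a routine case-check, is verifying the structural formula for the internal branch length of $T^{(i)}_a|_Q$ and $T^{(j)}_b|_Q$ for each of the four quartets involving $h$. This requires tracing how pruning the missing taxon of $Q$ either retains the attachment node---giving an internal branch containing the additive sum $\sum_{k \ge i} t_k^{(1)}$ or $\sum_{k \ge j} t_k^{(2)}$---or suppresses it, allowing the internal branch to absorb into $C_Q$. Because the $t_k$ segments always enter additively as the portion of the pendent edge above the attachment point, the claimed form holds throughout, and the three equations in the proposition then suffice to match each quartet CF between $N$ and $N_1$.
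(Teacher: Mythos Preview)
Your proposal is correct and follows essentially the same route as the paper. Both arguments condition on the attachment point of $h$'s lineage to the backbone tree $T$, note that only one lineage traverses any hybrid edge so NMSCind and NMSCcom coincide, and then match quartet CFs via the stated equations. The paper is terser: it simply writes down the five quartet CF vectors in closed form (as affine combinations of $\mathbf{1}/3$ and standard basis vectors with coefficients in $\gamma$ and $y_i=e^{-t_i}$) and asserts they hold for both $N$ and $N_1$; DT is then obtained by letting edge lengths go to infinity. Your version unpacks \emph{why} those closed forms agree, via the observation that the MSC quartet CF on a resolved tree is affine in $e^{-L}$ and that the internal branch length $L$ is either a constant $C_Q$ or $C_Q+\sum_{k\ge i}t_k^{(1)}$ depending on whether pruning suppresses the attachment node. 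That structural account is exactly the computation one must do to verify the paper's asserted formulas, so the two proofs are the same argument at different levels of detail.
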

\begin{proof}
Since $1-\gamma = \sum_{i=1}^\ell\beta_i$,
solving for $t_1$ and $t_2$ necessarily gives $t_1>0$ and $t_2>0$.
Because of the structure of $N$, with at most one lineage passing through any
hybrid node, both the NMSCcom and NSCMind models yield the same formulas.
Let $\boldsymbol{\delta}_j$ denote the standard
basis vectors for $\mathbb{R}^3$, so, for example, $\boldsymbol{\delta}_1=(1,0,0)$. With
$\mathbf{1}= (1,1,1)$ and $y_i=e^{-t_i}$,
we obtain equal quartet CFs for $N$ and $N_1$ given by:
\begin{eqnarray*}
	\cf_{bcda} &=& y_0 \mathbf{1}/3 + (1-y_0) \boldsymbol{\delta}_1 \\
	\cf_{hcda} &=& (\gamma y_0 + (1-\gamma) y_2) \mathbf{1}/3 +
	\gamma (1-y_0) \boldsymbol{\delta}_1 + (1-\gamma) (1-y_2) \boldsymbol{\delta}_3 \\
	\cf_{hbda} &=& (\gamma y_0y_1 + (1-\gamma) y_2) \mathbf{1}/3 +
	\gamma (1-y_0y_1) \boldsymbol{\delta}_1 + (1-\gamma) (1-y_2) \boldsymbol{\delta}_3 \\
	\cf_{hbca} &=& (\gamma y_1 + (1-\gamma) y_0y_2) \mathbf{1}/3 +
	\gamma (1-y_1) \boldsymbol{\delta}_1 + (1-\gamma) (1-y_0y_2) \boldsymbol{\delta}_3 \\
	\cf_{hbcd} &=& (\gamma y_1 + (1-\gamma) y_0) \mathbf{1}/3 +
	\gamma (1-y_1) \boldsymbol{\delta}_1 + (1-\gamma) (1-y_0) \boldsymbol{\delta}_3 \;.
\end{eqnarray*}
Finally, the formulas for CFs under the DT model on the two networks are obtained
by letting all edge lengths go to infinity 
\end{proof}

Note that this construction can be extended to networks obtained by
attaching additional pendent
subnetworks along the edge of length $t_0$ in both $N$ and $N'$.
One can solve for edge lengths $t_1$ and $t_2$ in \eqref{eq:CFnetworks}
simultaneously for all 4-taxon sets.
This yields examples with blobs of arbitrary size.

\subsection{Limits of identifiability from LogDet distances}

In this section, we present a family of 4-taxon rooted ultrametric networks with
rooted 5-blobs, which may or may not be outer-labeled planar,
of arbitrary level, which are indistinguishable
from 4-taxon rooted networks with a single 5-cycle using logDet distances under the
DT, NMSCcom, and NMSCind models with constant mutation rate.

Consider a binary rooted ultrametric network $N^+$ on $\{a,b,c,h\}$, such that the
subnetwork $N^+_{\{a,b,c\}}$ is the rooted tree $a|bc$ with one binary node
introduced on each of the pendent edges to $a,b$. At these binary nodes all
paths to $h$ leave $N^+_{\{a,b,c\}}$, though the subgraph of edges ancestral only
to $h$ is otherwise unrestricted.
\Cref{fig::logdetnetworks}  illustrates
the general structure ($N^+$, left) and  gives examples ($N^+_1$ and $N^+_2$).
Let $\alpha$ and $\beta=1-\alpha$ be the probabilities that a lineage from
$h$ traces its ancestry back to the nodes indicated in the figure,
determined by the hybrid parameters for edges ancestral only to $h$.  

\begin{figure}[h]
  \begin{center}
    \includegraphics[scale=1]{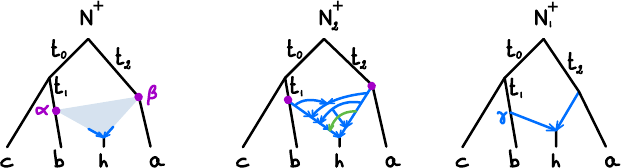}
    \caption{
    Example rooted networks that are not distinguishable from pairwise logDet distances. 
    $N^+$ (left) shows the general structure of an ultrametric binary network on
    $\{a,b,c,h\}$ such that $N^+_{\{a,b,c\}}$ is
    the rooted tree $ab|c$, incident to the funnel of $h$ only once on its
    pendent edges to $a$ and to $b$.
    $N^+_1$ (right, outer-labeled planar utrametric level-1) and
    $N^+_2$ (middle) have this structure.
    $N^+_2$ is not outer-labeled planar, but would be if
   the green edge were removed.
    These networks have the same two rooted displayed tree topologies.
    }\label{fig::logdetnetworks}
  \end{center}
\end{figure}

For such a network $N^+$,
consider the rooted ultrametric level-1 network $N^+_1$ in \Cref{fig::logdetnetworks} (right)
where $\gamma = \alpha$.
Since no coalescent event can occur in the funnel of $h$ on $N^+$ or $N^+_1$,
these networks yield the same metric gene tree distribution and therefore same
pairwise logDet distances under the DT, NMSCcom, and NMSCind models with constant mutation rate.
Analogously to the example given for the limits of identifiability under CFs,
these examples generalize to networks with larger blobs.

\medskip

In fact under the standard  models we consider, the networks of this section are
nonidentifiable from \emph{any} sequence data if only one individual is sampled per taxon.
Indeed, from such sequences, we can at best identify metric gene trees,
but these do not distinguish the networks \cite{2015PardiScornavacca}.
While data from multiple samples from the hybrid taxon $h$ may give additional
information, it remains unclear what structure within the blob may
then be identified
(but see \cite{2017ZhuDegnan} for an example of two non-outer-labeled
planar rooted networks with identical displayed trees,
distinguishable using 2 samples from the hybrid taxon).

\section*{Acknowledgements}
We thank Elizabeth Allman for insightful discussions, advice and feedback.
This work was supported in part by the National Science Foundation through grants
DMS 2023239 to C.A., DMS 2331660 to H.B., and DMS 2051760 to J.R., and by a H. I. Romnes faculty fellowship
to C.A. provided by the University of Wisconsin-Madison Office of the
Vice Chancellor for Research and Graduate Education with funding from the
Wisconsin Alumni Research Foundation.

\appendix
\section{Extension to non-binary networks}

\renewcommand{\thesection}{\Alph{section}} 

We now prove an extension of \Cref{lem:4taxonset-displayedsplits}.
For binary networks, there are only 2 cases in
\Cref{lem:4taxonset-displayedsplits}:
a network has either a cut edge separating $ab$ from $cd$ up to relabeling
of leaves (Case 1) or has a 4-blob (Case 2).
When relaxing the binary assumption,
Case 1 extends to networks with a blob that admits $ab$ and/or $cd$ as a taxon block.
Case 2 remains unchanged, but the network may also display the unresolved star tree.
All other networks (Case 3) display the star tree only.

\begin{lemma}\label{lem:4taxonset-displayedsplits-non-binary}
  Let $N$ be a network on $\{a,b,c,d\}$.
  Then one of the following holds:
\begin{enumerate}
\item
  $N$ has a 2- or 3-blob with at least one 2-taxon block, say $ab$,
  and displays the quartet $ab|cd$ and no other resolved quartet.
  If $N$ is also outer-labeled planar, it is congruent with
  circular orders $(a,b,c,d)$ and $(a,b,d,c)$.
\item
  $N$ has a 4-blob, and 
  displays at least 2 resolved quartets.
  If the 4-blob of $N$ is also outer-labeled planar, with
  unique circular order $(a,b,c,d)$, then $N$
  displays exactly 2 resolved quartets,
  $ab|cd$ and  $ad|bc$.
\item
 
  $N$ has a \emph{central node} whose deletion leaves the taxa
  in 4 distinct connected components, and
  displays only the unresolved star quartet.
\end{enumerate}
\end{lemma}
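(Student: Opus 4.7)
The plan is to reduce the case analysis to the structure of the reduced tree of blobs $T'$ of $N$. Since $N$ has $4$ taxa, $T'$ has $4$ leaves, so its topology is either a 4-leaf resolved tree (one internal edge) or a 4-leaf star; these are mutually exclusive and exhaustive.

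In the resolved case, the internal edge of $T'$ induces a split $\{a,b\}|\{c,d\}$ of the taxa. The corresponding structure in $N$ is a path of cut edges and non-trivial 2-blobs, each of which has taxon blocks exactly $\{a,b\}$ and $\{c,d\}$; moreover, each degree-3 endpoint of the internal $T'$-edge comes from either a tree node of $N$ or a 3-blob, and such a 3-blob always has one 2-taxon block among its three blocks (the side pointing ``across'' the internal edge). Thus $N$ always contains a 2- or 3-blob with at least one 2-taxon block, establishing the structural part of Case~1. The quartet claim is immediate: every displayed tree of $N$ must retain all cut edges, and in particular a cut edge separating $\{a,b\}$ from $\{c,d\}$, so the only displayable resolved quartet is $ab|cd$. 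The outer-labeled planar claim is handled exactly as in \Cref{lem:4taxonset-displayedsplits}, by adjoining a hybrid edge between pendent edges to build a 4-blob witness with the desired circular order $(a,b,c,d)$, and similarly for $(a,b,d,c)$.

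In the star case, $T'$ has a single degree-4 central node, which in $N$ arises either (ii.a) as a 4-blob or (ii.b) as a non-binary tree node with four incident arms (possibly containing further 2-blobs, which are suppressed in $T'$). In sub-case (ii.a), Lemma~1 of \cite{2024Ane-anomalies} implies that $N$ displays at least two distinct resolved quartets, giving Case~2; if the 4-blob is outer-labeled planar, \Cref{cor:uniquecirc} and \Cref{lem:circ4taxontree} restrict those displayed quartets to $ab|cd$ and $ad|bc$ exactly as in the binary proof. In sub-case (ii.b), the 4-way branching at the central tree node cannot be resolved by any hybrid-edge selection, so every displayed tree is the unresolved star $a|b|c|d$; the central tree node itself separates the taxa into the four required connected components, yielding Case~3.

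The main obstacle is the combinatorial bookkeeping of the resolved sub-case: one must carefully enumerate how tree nodes, non-trivial 2-blobs, and 3-blobs of $N$ can realize the internal edge and its degree-3 endpoints in $T'$, and verify in each configuration the existence of a 2- or 3-blob with a 2-taxon block, tracking taxon blocks under the deletion of blob edges. Once this is settled, Cases~2 and~3 follow directly from the cited 4-blob result and the rigidity of non-binary tree nodes under hybrid-edge deletion.
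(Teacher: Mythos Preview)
Your overall decomposition via the reduced tree of blobs parallels the paper's use of the block-cut tree (note that the tree of blobs is defined in the paper only for binary networks, so strictly one needs the block-cut tree here, as the paper does). Your handling of the resolved case (Case~1) and the star-with-central-cut-node case (your (ii.b), Case~3) is essentially correct, and the ``combinatorial bookkeeping'' you flag as the main obstacle in Case~1 is in fact routine.

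The genuine gap is your sub-case (ii.a). You dispose of the claim that a 4-blob forces at least two resolved displayed quartets by citing Lemma~1 of \cite{2024Ane-anomalies}, but that result is for binary networks---which is precisely why the paper invokes it in the proof of \Cref{lem:4taxonset-displayedsplits} but does \emph{not} invoke it here. In the non-binary setting a lowest hybrid node $v$ of the 4-blob may have $k>2$ parent edges, and one must show that removing them one at a time yields subnetworks falling under Case~1 for at least two \emph{distinct} resolved quartets. The paper's proof of Case~2 is the real substance of this appendix lemma: it first reduces to a minimal 4-blob (so that deleting any single parent edge of $v$ destroys the 4-blob), then analyzes the induced 3-taxon subnetwork $\widetilde N=N_{\{a,b,c\}}$ and its pendent blob chain to $c$ to rule out the possibility that all of $N_1,\ldots,N_k$ fall under Case~1 for the same quartet or under Case~3. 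You have inverted where the difficulty lies: Case~2 is the hard step, not Case~1.
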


Simple examples show that $N$ may or may not also display
the unresolved star quartet if it has no 2-blob with a 2-taxon group,
regardless of whether it has a 3- or 4-blob.

In the proof below, we call a network's blob \emph{pendent} to taxon $x$ if
it is a 2-blob with taxon blocks $x$ versus all the other taxa.
The \emph{pendent blob chain} to taxon $x$ is the subnetwork consisting
of all nodes and edges in blobs pendent to $x$. 

\begin{proof}
 We first prove that the blob conditions of Cases 1, 2 and 3 are exhaustive.
   On 4 taxa any 3-blob must have a taxon block of size 2,
 so if the conditions of Case~1 and 2 are not met,
  then $N$ does not have any 3-blob, 4-blob, or 2-blob with a taxon block of size 2.
  Then we claim that $N$ has a central node $z$. 
 
  To see this, consider the block-cut tree $G$ of $N$, with node set comprised
  of $N$'s blobs and cut nodes, and edges joining
  blobs and cut nodes adjacent in $N$.
  $G$ is a tree \cite[Lemma 3.1.4]{Diestel2017}.
  Pendent cut edges in $N$ form the leaves of $G$.
  An $m$-blob in $N$ corresponds to a degree-$m$ vertex in $G$.
  In the absence of degree-3 blobs in $N$,
  a degree-3 vertex in $G$ corresponds to a cut node in $N$
  incident to three 2-blobs, one of which must have a taxon block of size 2.
  Therefore  
  $G$ has no degree-3 vertex, and has a degree-4 vertex
  corresponding to a central cut node $z$ in $N$.
 Then $N$ consists of 4 pendent blob chains adjacent to $z$,
  one to each taxon; and all displayed trees are stars with $z$ as
  a central node. 
 If $N$ is outer-labeled planar,
  that is, if its pendent chains are planar,
  then $N$ is easily seen to be congruent with all 3 circular orders. 

  \smallskip
  In Case 1, assume that $N$ has a blob $B$ with $ab$ as one
  taxon block.
  Within a tree $T$ displayed in $N$, the edge(s) retained from
  $B$ must then map to edge(s) in $T$, one of which must be a cut edge separating
  $ab$ from the other 2 taxa, as claimed.
  To show that such an outer-labeled planar $N$ is congruent with
  the orders $(a,b,c,d)$ and $(a,b,d,c)$,
  we build outer-labeled planar networks displaying $N$,
  each with a 4-blob and these orders.
  Since a pendent  edge  can be directed towards the leaf,
  we  build these from $N$ by adding a hybrid edge
  from either of the edges incident to $c$ or $d$ to the edge incident to $a$,
  after rotating part of the planar network about the blob's articulation node
  for $ab$ if necessary.

  \smallskip
  For Case 2, consider $N$ with a 4-blob $B$.
  Let $v$ be a lowest hybrid node in $B$.
  We may assume that removing any parent edge of $v$
  (and then edges no longer on up-down paths between taxa) gives a subnetwork
  \emph{without} a 4-blob. Otherwise, we remove such edges iteratively
  until we cannot do so while retaining a 4-blob, and observe that any quartet displayed
  in the modified network is also displayed in $N$.
  Let $d$ be the taxon below $v$;
  $e_1=(v_1,v),\ldots,e_k=(v_k,v)$ the parent edges of $v$; and 
  $N_j$ the network obtained from $N$
  by removing $e_j$ and nodes and edges no longer an up-down paths between taxa, 
  so as to prune unlabeled leaves. We will prove that at least two of
  $N_1,\ldots,N_k$ fall under Case 1 for distinct quartets.
  For the sake of contradiction,
  assume that for all $j$, $N_j$ falls under Case 1 for quartet $ab|cd$
  (and so has a blob with taxon group $ab$ or $cd$)
  or under Case 3 (and has a central node whose deletion separates all taxa).

  Consider the subnetwork $\widetilde{N}$ induced by $a,b,c$.
  Then $\widetilde{N}$ is a subgraph of $N_j$ for all $j$.
  If $\widetilde{N}$ has a central node whose deletion separates
  all 3 taxa $a,b,c$, call this node $z$. Otherwise,
  $\widetilde{N}$ must have a 3-blob
(corresponding to the degree-3 node in $\widetilde{N}$'s block-cut tree),
  and let $z$ be its articulation node leading to $c$.
  In both cases, $z$ belongs to the pendent blob chain to $c$ in
  $\widetilde{N}$, and separates this blob chain
  from the subnetwork of $\widetilde{N}$ (and of $N$)
  induced by $a,b$.

  Since a blob $\widetilde{B}$ of $\widetilde{N}$ is
  biconnected, any blob $B$ of $N_j$ either contains
  $\widetilde{B}$, or shares no edge with $\widetilde{B}$.
  If $N_j$ has a blob $B$ with taxon block $ab$, then
  $B$ may only contain blobs from $\widetilde{N}$ that are
  pendent to $c$, therefore $N_j$ is formed from $\widetilde{N}$
  by attaching a subgraph to the pendent blob chain to $c$.
  If $N_j$ has no blob with taxon block $ab$
  but a blob $B$ with taxon block $cd$,
  then $B$ is on $\widetilde{N}$, 
  and $N_j$ is formed from $\widetilde{N}$ by attaching a
  pendent blob chain to $d$ at $z$.

  Thus in either case, $N_j$ results from attaching a subgraph
  to $\widetilde{N}$
  only along the pendent blob chain to $c$. Therefore $z$ disconnects
  the subnetwork induced by $a,b$ from the subnetwork induced by $c,d$
  in the full network $N$,
  and $N$ cannot have a 4-blob --- a contradiction.

  The first claim of Case 2 follows easily, having established Case 1
  and that at least two of $N_1,\ldots,N_k$ fall under Case 1
  for distinct quartets.
  If $N$ is outer-labeled planar,
  by Corollary \ref{cor:uniquecirc} it has a unique circular order, say $(a,b,c,d)$.
   Then by \Cref{lem:circ4taxontree},
  the displayed quartets can only be $ab|cd$ and $ad|bc$.
\end{proof}

\bibliographystyle{alpha}
\bibliography{Hybridizationv2}

\end{document}